\documentclass[11pt]{article}
\usepackage[margin=1in]{geometry}
\usepackage{amsmath,amssymb,amsthm}
\usepackage[numbers,sort&compress]{natbib}
\usepackage{graphicx}
\usepackage{booktabs}
\usepackage{enumitem}
\usepackage{microtype}
\usepackage{xcolor}
\usepackage[colorlinks=true,linkcolor=blue!50!black,citecolor=blue!50!black,urlcolor=blue!50!black]{hyperref}

\theoremstyle{plain}
\newtheorem{theorem}{Theorem}[section]
\newtheorem{proposition}[theorem]{Proposition}
\newtheorem{lemma}[theorem]{Lemma}
\newtheorem{corollary}[theorem]{Corollary}
\newtheorem{fact}[theorem]{Fact}
\theoremstyle{definition}
\newtheorem{definition}[theorem]{Definition}

\theoremstyle{remark}

\newcommand{\E}{\mathbb{E}}
\newcommand{\Var}{\operatorname{Var}}
\newcommand{\Cov}{\operatorname{Cov}}
\newcommand{\Prob}{\mathbb{P}}
\newcommand{\R}{\mathbb{R}}
\newcommand{\ind}[1]{\mathbf{1}\{#1\}}
\newcommand{\risk}{R}
\newcommand{\brisk}{r}

\newcommand{\Corr}{\mathrm{Corr}}
\newcommand{\TV}{\mathrm{TV}}
\newcommand{\MSE}{\mathrm{MSE}}
\newcommand{\cP}{\mathcal{P}}
\newcommand{\cT}{\mathcal{T}}
\newcommand{\cD}{\mathcal{D}}

\newcommand{\Lab}{[q]^n}
\newcommand{\pbar}{\bar{p}}
\newcommand{\dt}{\,\mathrm{d}}
\newcommand{\vq}{v_q}
\newcommand{\triv}{v}
\newcommand{\Lam}{\Lambda_n}
\newcommand{\Ein}{E_{\mathrm{in}}}
\newcommand{\Eout}{E_{\mathrm{out}}}
\newcommand{\nE}{|E|}
\newcommand{\Diag}{\operatorname{Diag}}
\newcommand{\tr}{\operatorname{tr}}
\newcommand{\overlap}{\operatorname{ov}}
\newcommand{\sgn}{\operatorname{sgn}}
\providecommand{\Theta}{\varTheta}
\renewcommand{\Theta}{\varTheta}

\title{\textbf{Slow Beats Fast at the Kesten--Stigum Threshold:\\ Minimax, Fisher-Information and Belief-Propagation Characterizations of the Information--Computation Gap in Sparse Stochastic Block Models}}
\author{Soroor Ghandali}
\date{October 2026}

\begin{document}
\maketitle

\begin{abstract}
We study community recovery in the sparse symmetric stochastic block model with $q$ communities, average degree $d$ and signal strength $\lambda$ through classical statistical decision theory and Fisher information, and obtain three characterizations of the Kesten--Stigum threshold $d\lambda^2=1$ and of the information--computation gap below it. First, we show that on each community-size profile the minimax risk of any class of decision rules closed under averaging and vertex relabeling equals its Bayes risk under the uniform prior, and that one random relabeling of the input gives every algorithm constant risk on each profile. The posterior mean is the unique Bayes rule, is admissible, and is the unique minimax rule on the balanced profile. The Bayes risk of the class of degree-$D$ polynomial rules is exactly $\vq(1-\Corr_D^2)$, with $\vq$ the trivial risk and $\Corr_D$ the low-degree correlation. Combined with the low-degree bound of Chin, Mossel, Sohn and Wein and the information-theoretic results of Banks, Moore, Neeman and Netrapalli and of Mossel, Sly and Sohn, this gives the gap as a worst-case statement: for $q\ge5$ there is a window below the threshold in which no low-degree rule has worst-case risk below $\vq-o(1)$, while an exponential-time rule has risk at most $\vq-\varepsilon^4/q^2$ on a set of labelings of probability $1-o(1)$. Second, the Fisher information about $\lambda$ carried by short cycle counts is a series with terms of order $k(d\lambda^2)^k$, convergent exactly when $d\lambda^2<1$; a Cram\'er--Rao bound for a restricted class of statistics shows that below the threshold the relative error of every unbiased cycle-based estimator of $\lambda^k$ stays above an explicit constant however long the cycles counted, that every cycle-count test has success probability bounded by an explicit constant below one, and that the UMVUE of $\lambda^k$ from a single cycle length attains the bound for that length; Karlin's conditions hold for the Poisson family, while on the restricted range of the model the estimator is dominated by its clipped version. Third, the derivative of belief propagation at its uninformative fixed point multiplies a random perturbation by $|\lambda|\sqrt d$ per iteration, and one EM step taken there leaves $\lambda$ unchanged, so that the parameters cannot be learned while the algorithm is stuck. A signal-to-noise computation recovers the condition $d\lambda^{1/\chi}>1$ of Chin et al.\ for $q=n^\chi$ communities and identifies personalized PageRank as a walk count with suboptimal geometric weights. Experiments on networks with up to $3\times10^5$ vertices confirm the location of the threshold at $d\lambda^2=1$ for $q=2$, the hard window for $q=5$, the planted-coloring value $c_d\approx12.84$ to within one percent, the growth factor to within $0.006$, the EM invariance, and the many-community scaling.

\end{abstract}

\noindent\textbf{Keywords.} Stochastic block model; Kesten--Stigum threshold; information--computation gap; minimax risk; low-degree polynomials; Fisher information; Cram\'er--Rao bound; belief propagation; empirical Bayes.

\medskip
\noindent\textbf{MSC 2020.} 62C20, 62C10, 62F10, 62H30, 05C80, 68Q17.

\section{Introduction}
\label{sec:intro}

The stochastic block model \citep{holland1983} is the canonical model of a network with hidden community structure. In its sparse symmetric form it has $n$ vertices, each carrying a hidden label in $[q]=\{1,\dots,q\}$, and each pair of vertices is joined by an edge independently, with probability $a/n$ if the two labels agree and $b/n$ otherwise. The average degree $d=(a+(q-1)b)/q$ stays bounded as $n\to\infty$, which is the regime relevant to real networks and the one in which the inference problem is hardest. The task is to recover the labels, up to a permutation of the $q$ names, from one observation of the graph. Two numbers govern the problem: the average degree $d$ and the signal strength
\begin{equation*}
\lambda=\frac{a-b}{a+(q-1)b},
\end{equation*}
the second eigenvalue of the Markov chain that transports a label across one edge. Their combination $d\lambda^2$ is the Kesten--Stigum number, after the threshold discovered by \citet{kesten1966} for multitype branching processes.

Fifteen years of work have established the following picture \citep{decelle2011,mossel2015,massoulie2014,mossel2018,bordenave2018,abbe2018,banks2016,mossel2023,abbe2017}. When $d\lambda^2>1$, polynomial-time algorithms (spectral methods based on the non-backtracking matrix, belief propagation) recover the communities with positive correlation. For $q=2$, when $d\lambda^2<1$ the graph is contiguous to an Erd\H{o}s--R\'enyi graph of the same average degree, so recovery is impossible for any method. For $q\ge5$ there is a window of parameters with $d\lambda^2<1$ in which recovery is information-theoretically possible, by an exhaustive search over labelings, but no polynomial-time algorithm is known to succeed, and lower bounds against low-degree polynomial algorithms \citep{hopkins2017,chin2025} show that a large natural class of efficient algorithms fails there. This is the information--computation gap of the block model, the prototype of such gaps in high-dimensional statistics.

The existing analyses of the gap come from probability theory and theoretical computer science: second-moment and contiguity arguments, spectral theory of random matrices, and the low-degree likelihood-ratio method. In this paper we analyze the same question with the tools of classical statistical decision theory and estimation theory: Bayes rules, minimax rules and admissibility; Fisher information, the Cram\'er--Rao bound, uniformly minimum-variance unbiased estimators (UMVUE), best linear unbiased estimators (BLUE), and Karlin's admissibility theorem. These tools turn out to describe the threshold and the gap exactly, and they yield statements of a different kind from those in the literature: worst-case rather than average-case, with explicit constants, and with a precise accounting of where the information about the parameters resides.

\subsection{Main results}
\label{sec:mainresults}

Throughout, labels are i.i.d.\ uniform on $[q]$, the quantity to be estimated for each pair of vertices is $x_{ij}=\ind{\sigma_i=\sigma_j}-1/q$, and a decision rule is scored by the mean squared error over pairs; the trivial rule $t\equiv0$ has risk $\vq=\frac1q(1-\frac1q)$. A rule is \emph{low-degree} of degree $D$ if each of its outputs is a polynomial of degree at most $D$ in the entries of the adjacency matrix; this class contains counts of small subgraphs, weighted path and walk counts, power iterations and linearized message passing, and is the standard proxy for polynomial-time computation in this problem \citep{hopkins2017,kunisky2019,schramm2022}.

\paragraph{The gap as a minimax theorem.}
By a symmetrization argument we prove that, within every class of rules that is closed under symmetrization, the minimax risk over the labelings of a given community-size profile equals the Bayes risk under the uniform prior on that profile (Theorem~\ref{thm:worst}); a one-line variant shows that \emph{every} algorithm, after one uniformly random relabeling of its input, has constant risk on each profile. The posterior mean is the unique Bayes rule and is admissible, and on the balanced profile it is the unique minimax rule, the uniform prior being least favorable (Theorem~\ref{thm:bayes}); a posterior sample, which is what an ideal Gibbs sampler returns, has exactly twice its Bayes risk (Lemma~\ref{lem:sample}). For the low-degree class we obtain the exact formula
\begin{equation*}
\inf_{\delta\in\cP_D}\ \brisk(\delta)=\vq\,\big(1-\Corr_D^2\big),
\end{equation*}
where $\Corr_D$ is the low-degree correlation introduced by \citet{schramm2022} and bounded for the block model by \citet{chin2025}, and the minimax risk of the class on each profile has the same form (Theorem~\ref{thm:lowdeg}). Combining this identity with the bound $\Corr_D\le C\sqrt{q/n}$ of Chin et al.\ and with the information-theoretic results of \citet{banks2016,abbe2018,mossel2023}, we obtain the gap as a theorem about worst-case risk (Corollary~\ref{cor:gap}): in the hard window no low-degree rule has worst-case risk below $\vq(1-C^2q/n)-o(1)$, while the information-theoretically optimal procedure, after a random relabeling, has risk at most $\vq-\varepsilon^4/q^2+o(1)$ on every labeling of a set of probability $1-o(1)$, where $1/q+\varepsilon$ is the overlap it achieves; in particular the Bayes risk of the posterior mean is at most $\vq-\varepsilon^4/q^2+o(1)$. The margin $\varepsilon^4/q^2$ is explicit (Lemma~\ref{lem:chance}).

\paragraph{The threshold from Fisher information.}
We compute the Fisher information about $\lambda$ at three levels. With the labels known it is $I_c(\lambda)=n\,d(q-1)/\big(2(1+(q-1)\lambda)(1-\lambda)\big)$, and the estimators of $a$ and $b$ are simultaneously the UMVUE, the BLUE and the maximum-likelihood estimator, attaining the Cram\'er--Rao bound exactly; moreover $d$ and $\lambda$ are orthogonal parameters (Section~\ref{sec:known}). For the observed graph, Fisher's identity expresses the score as $\kappa\sum_e(p_e-\pbar)$, a sum over edges of the excess posterior probability that the edge lies inside a community, and the information is $I_c(\lambda)(1-V)$, where $V$ is the fraction of missing information (Section~\ref{sec:unknown}). For the class of functions of short cycle counts we prove that the information is a series with terms of order $k(d\lambda^2)^k$,
\begin{equation*}
I_{\mathrm{cyc}}(\lambda)\ \le\ \frac{(q-1)^2\,d\,r^2(3-2r)}{2(1-r)^2},\qquad r=d\lambda^2<1,
\end{equation*}
and that it diverges when $r\ge1$ (Theorem~\ref{thm:cycleinfo}); these are statements about the Poisson limit experiment of the cycle counts, with constants uniform in the cycle length. A Cram\'er--Rao inequality for a restricted class of statistics (Theorem~\ref{thm:rcr}) then shows that $\lambda$ itself admits no unbiased cycle-based estimator, and that below the threshold the relative error of every unbiased cycle-based estimator of $\lambda^k$ is bounded below by a constant that does not depend on the cycle lengths used (Corollary~\ref{cor:crcyc}); a Hellinger computation caps the success probability of every cycle-count test by $\frac12(1+\sqrt{1-\rho^2})<1$ with an explicit $\rho$ (Theorem~\ref{thm:cycletest}); the UMVUE of $\lambda^k$ from the $k$-cycle count attains the Cram\'er--Rao bound and its relative error vanishes with $k$ exactly when $r>1$ (Section~\ref{sec:umvue}); and the BLUE of the cycle counts carries the full cycle information. We verify Karlin's conditions for the Poisson family, so that the cycle estimator is admissible when its mean is unrestricted, and show that on the restricted range of the block model it is inadmissible, being dominated by its clipped version (Proposition~\ref{prop:karlin}).

\paragraph{Belief propagation and empirical Bayes.}
The derivative of the belief-propagation map at its uninformative fixed point is $\lambda$ times the non-backtracking operator of the graph, corrected by the field term, which removes the uniform perturbation (the leading eigenvector, of eigenvalue $d$) that would otherwise grow by $|\lambda|d$ per iteration; on a sparse random graph a random perturbation is then multiplied by $|\lambda|\sqrt d$ per iteration (Theorem~\ref{thm:bp}). One step of the EM algorithm taken at that fixed point resets $d$ to the observed mean degree and leaves $\lambda$ unchanged for every graph (Proposition~\ref{prop:em}), so a pessimistic initial guess is self-confirming, and the number of iterations a damped algorithm spends before returning to the fixed point grows like $1/\ln(1/f)$ with $f=1-\alpha+\alpha|\lambda|\sqrt d$.

\paragraph{Many communities and PageRank.}
For $q=n^\chi$ communities a signal-to-noise computation for the walk counts between a pair of vertices gives the per-pair ratio $z_k=(d\lambda^2)^{k/2}\,q/\sqrt n$ and recovers the recovery condition $d\lambda^{1/\chi}>1$ of \citet{chin2025}, hence the rule $\lambda>d^{-\max(1/2,\chi)}$ for efficient recovery (Section~\ref{sec:many}). The same computation identifies personalized PageRank as a walk count with geometric weights $(\alpha/d)^k$ where the optimal weights are $\lambda^k$, extending the result of \citet{kloumann2017} to the sparse, unseeded, many-community setting.

\paragraph{Local learning and the curse of dimensionality.}
A lower bound for learning unstructured labels, $\mathrm{error}\ge(1-1/K)(1-n/p^D)$ for a learner that has seen $n$ of $p^D$ possible inputs, supplies the baseline against which recovery is measured and isolates what the block model adds, namely a relation between the labels of distinct inputs; the signal-to-noise computation \eqref{eq:snr} then explains why nearest-neighbor voting, $k$-means on spectral embeddings, random-walk methods and message-passing networks, all of which read bounded neighborhoods, stop at the Kesten--Stigum line (Section~\ref{sec:local}).

\paragraph{Experiments.}
All quantitative predictions are tested on simulated networks with up to $3\times10^5$ vertices (Section~\ref{sec:experiments}): the location of the threshold at $d\lambda^2=1$ for $q=2$, with a transition that steepens as $n$ grows; the hard window for $q=5$, in which the true labeling is a stable fixed point of belief propagation while every algorithm started without information stays at chance; the planted-coloring stability value $c_d\approx12.84$ of \citet{decelle2011}, reproduced to within one percent; the growth factor $|\lambda|\sqrt d$, reproduced to within $0.006$ for $d\lambda^2\le1.3$; the EM invariance, reproduced to $10^{-5}$; the exact decision-theoretic identities on a small model in which all $5{,}775$ partitions are enumerated; the scaling exponents of the critical $\lambda$ with $d$ for $q=n^\chi$, which match $\chi$ for $\chi\le0.7$; and the PageRank predictions.

\subsection{Related work}
\label{sec:related}

The Kesten--Stigum threshold originates in the theory of multitype branching processes \citep{kesten1966}; its role for reconstruction on trees was established by \citet{evans2000}. \citet{decelle2011} conjectured the full phase diagram of the sparse block model from the cavity method, introduced the belief-propagation algorithm used here, identified the hard phase for $q\ge5$, and computed the planted-coloring thresholds against which we check our simulations. The positive side of the conjecture, efficient recovery above the threshold, was proved for $q=2$ by \citet{mossel2018} and \citet{massoulie2014}, and for general $q$ by \citet{bordenave2018} and \citet{abbe2018}; the spectral methods we use are those of \citet{krzakala2013} and \citet{saade2014}. The negative side for $q=2$ is due to \citet{mossel2015}, whose contiguity argument rests on the Poisson limit of short cycle counts that we use in Section~\ref{sec:fisher}. That the threshold is not sharp for large $q$ was shown by \citet{banks2016}, who also located the information-theoretic threshold up to constants; on the tree, \citet{sly2011} showed that the Kesten--Stigum bound is the reconstruction threshold for $q=3$ at large degree and is not for any $q\ge5$ at any degree, and \citet{mossel2023} showed that the same holds for the block model, with $q=3,4$ sharp at large degree. Low-degree lower bounds for the block model are due to \citet{hopkins2017} and, in the form used here, to \citet{chin2025}; the low-degree framework for estimation is that of \citet{schramm2022}, see also \citet{kunisky2019}. The regime of many communities was opened by \citet{chin2025} and \citet{carpentier2026,carpentier2026b}. Minimax rates for community detection in the regime of growing degree were obtained by \citet{zhang2016}; our minimax statements concern the bounded-degree regime and the low-degree class. The connection between block models and personalized PageRank was made by \citet{kloumann2017} for the dense seeded two-block model. The behavior of the EM algorithm with belief propagation was studied by \citet{kawamoto2018}, and the failure of simulated annealing in planted problems by \citet{angelini2023}; semi-supervised variants were analyzed by \citet{zhang2014semi} and \citet{kanade2016}. The decision-theoretic and estimation-theoretic results we rely on, Jensen's inequality, the minimaxity of constant-risk Bayes rules, the admissibility of unique Bayes rules, the Lehmann--Scheff\'e theorem and Karlin's theorem, are classical \citep{lehmann1950,karlin1958,blyth1951,lehmann1998}; Fisher's identity for the score of incomplete data is used in the form given by \citet{louis1982} in the EM literature \citep{dempster1977}. The Poisson limit of short cycle counts is classical for Erd\H{o}s--R\'enyi graphs \citep{bollobas2001} and was extended to the block model by \citet{mossel2015} and \citet{banks2016}.

\subsection{Organization}

Section~\ref{sec:prelim} fixes the model, the notation and the definitions. Section~\ref{sec:threshold} derives the three model numbers and the threshold, and introduces a running example. Section~\ref{sec:decision} contains the decision-theoretic results, Section~\ref{sec:bp} the analysis of belief propagation and of the EM algorithm, and Section~\ref{sec:fisher} the Fisher-information results. Section~\ref{sec:many} treats many communities and PageRank, and Section~\ref{sec:local} the connection with local learning. Section~\ref{sec:experiments} reports the experiments and Section~\ref{sec:discussion} concludes. Appendix~\ref{app:notation} collects the notation and Appendix~\ref{app:algorithms} specifies the algorithms.

\section{Model, notation and definitions}
\label{sec:prelim}

\subsection{The model}

A graph on the vertex set $[n]$ is encoded by its adjacency matrix $A\in\{0,1\}^{n\times n}$, with $A_{ij}=A_{ji}=1$ if $i$ and $j$ are joined by an edge and $A_{ii}=0$. The degree of a vertex is its number of edges, and two vertices joined by an edge are neighbors. A walk of length $k$ is a sequence of $k+1$ vertices in which consecutive vertices are adjacent; it is non-backtracking if it never returns immediately to the vertex it has just left; a cycle of length $k$ is a closed walk through $k$ distinct vertices.

\begin{definition}[Sparse symmetric stochastic block model]
\label{def:sbm}
Let $n\ge q\ge2$ be integers and $a,b\ge0$ with $a\ne b$. Each vertex $i\in[n]$ receives a label $\sigma_i\in[q]$, the labels being i.i.d.\ uniform. Given $\sigma=(\sigma_1,\dots,\sigma_n)$, the pairs $\{i,j\}$ are joined independently, with probability $a/n$ if $\sigma_i=\sigma_j$ and $b/n$ if $\sigma_i\ne\sigma_j$. We write $\mathrm{SBM}(n,q,a,b)$ for the joint law of $(\sigma,A)$, and $P_\sigma$ for the conditional law of $A$ given $\sigma$. The graph $A$ is observed and $\sigma$ is not. The model is \emph{assortative} if $a>b$ and \emph{disassortative} if $a<b$; $a=0$ is the planted $q$-coloring model. The \emph{sparse regime} is $a,b,q$ fixed as $n\to\infty$.
\end{definition}

The \emph{profile} of a labeling $\sigma$ is the vector $m(\sigma)=(m_1,\dots,m_q)$ of community sizes, $m_s=\#\{i:\sigma_i=s\}$. For a profile $m$ we write $\Theta_m=\{\sigma\in\Lab: m(\sigma)=m\}$; when $q$ divides $n$, $\Theta_{\mathrm{bal}}$ denotes the orbit of the \emph{balanced} profile $m=(n/q,\dots,n/q)$. A permutation $g$ of $[n]$ acts on labelings, graphs and pair-indexed arrays by $(g\sigma)_i=\sigma_{g^{-1}(i)}$, $(gA)_{ij}=A_{g^{-1}(i)g^{-1}(j)}$ and $(gt)_{ij}=t_{g^{-1}(i)g^{-1}(j)}$. Each $\Theta_m$ is a single orbit of this action, and the law of $gA$ under $P_\sigma$ is $P_{g\sigma}$. The profile is independent of $(a,b)$ and is therefore ancillary for the edge parameters.

\begin{definition}[Model numbers]
\label{def:numbers}
\begin{equation}
d=\frac{a+(q-1)b}{q},\qquad
\lambda=\frac{a-b}{a+(q-1)b},\qquad
r=d\lambda^2=\frac{(a-b)^2}{q\,(a+(q-1)b)} .
\label{eq:dlr}
\end{equation}
The map $(a,b)\mapsto(d,\lambda)$ is invertible, $a=d(1+(q-1)\lambda)$ and $b=d(1-\lambda)$, with $\lambda\in[-1/(q-1),1]\setminus\{0\}$; $\lambda<0$ in the disassortative case and $\lambda=-1/(q-1)$ for planted coloring. The condition $d\lambda^2>1$ is equivalent to $|a-b|>q\sqrt d$. The surface $d\lambda^2=1$ is the \emph{Kesten--Stigum threshold}.
\end{definition}

\begin{definition}[Overlap, weak recovery, detection]
\label{def:overlap}
For labelings $\hat\sigma,\sigma\in\Lab$,
\begin{equation}
\overlap(\hat\sigma,\sigma)=\max_{\pi\in S_q}\ \frac1n\,\#\{i:\hat\sigma_i=\pi(\sigma_i)\},
\label{eq:overlap}
\end{equation}
the maximum being over the $q!$ permutations of the label names. A random labeling has overlap $1/q+o(1)$. An algorithm achieves \emph{weak recovery} (with margin $\varepsilon$) if its output satisfies $\overlap(\hat\sigma,\sigma)\ge1/q+\varepsilon$ with probability $1-o(1)$ for a fixed $\varepsilon>0$. Exact recovery is impossible in the sparse regime, since a fraction $e^{-d}+o(1)$ of the vertices are isolated. \emph{Detection} is the task of distinguishing $\mathrm{SBM}(n,q,a,b)$ from the Erd\H{o}s--R\'enyi graph with the same average degree, $\mathrm{SBM}(n,q,d,d)$, with probability of error $o(1)$.
\end{definition}

\subsection{Computational classes}

An algorithm is \emph{polynomial-time} (efficient, fast) if its running time is bounded by a fixed power of $n$; the exhaustive search over the $q^n$ labelings is not. A \emph{low-degree rule} of degree $D$ is a map $A\mapsto t(A)$ each of whose coordinates is a polynomial of degree at most $D$ in the entries of $A$; $\cP_D$ denotes the set of such rules. Counts of edges, paths, cycles and other subgraphs with at most $D$ edges, weighted sums of walk counts of length at most $D$, $D$ steps of a power iteration (repeated multiplication of a vector by $A$) and $D$ rounds of linearized message passing (belief propagation with its update replaced by its derivative at the uninformative fixed point) all belong to $\cP_D$. Polynomials of degree $D=O(\log n)$ are believed to capture the power of polynomial-time algorithms for problems of this type, and lower bounds against $\cP_D$ are the standard evidence of computational hardness \citep{hopkins2017,kunisky2019,schramm2022}. Let $\R[A]_{\le D}$ denote the space of polynomials of degree at most $D$ in the entries of $A$. For a prior $\xi$ on labelings, the \emph{low-degree correlation} of \citet{schramm2022,chin2025} is
\begin{equation}
\Corr_D(\xi)=\sup_{f\in\R[A]_{\le D}}\ \frac{\E_\xi[f(A)\,x_{12}]}{\sqrt{\E_\xi[f(A)^2]\,\E_\xi[x_{12}^2]}},\qquad x_{12}=\ind{\sigma_1=\sigma_2}-\tfrac1q ,
\label{eq:corr}
\end{equation}
the expectations being over $(\sigma,A)$ with $\sigma\sim\xi$. We write $\Corr_D$ for the i.i.d.\ uniform prior.

\subsection{Decision theory}

A decision problem consists of a parameter $\theta\in\Theta$, data $X\sim P_\theta$, an action space, and a loss $L(\theta,t)\ge0$. A \emph{decision rule} $\delta$ maps data to actions; a \emph{randomized} rule may use auxiliary randomness. The \emph{risk} is $\risk(\theta,\delta)=\E_\theta L(\theta,\delta(X))$, the \emph{Bayes risk} under a prior $\xi$ is $\brisk(\xi,\delta)=\int\risk(\theta,\delta)\,\xi(\dt\theta)$, and a \emph{Bayes rule} for $\xi$ minimizes it. A rule is \emph{minimax} if it minimizes $\sup_\theta\risk(\theta,\delta)$; a prior is \emph{least favorable} if its Bayes risk $\inf_\delta\brisk(\xi,\delta)$ is maximal among priors. A rule $\delta$ is \emph{dominated} by $\delta'$ if $\risk(\theta,\delta')\le\risk(\theta,\delta)$ for all $\theta$ with strict inequality for some $\theta$, and \emph{admissible} if it is not dominated. A rule is \emph{equivariant} (under vertex relabeling) if $\delta(gA)=g\delta(A)$ for every permutation $g$. We use the following classical facts \citep[Chapters~1, 4 and~5]{lehmann1998}: Jensen's inequality $f(\E Z)\le\E f(Z)$ for convex $f$; a Bayes rule with constant risk is minimax and its prior is least favorable; for a strictly convex loss the Bayes rule is unique; a unique Bayes rule is admissible, and so is every Bayes rule for a prior with positive mass on each point of a finite $\Theta$; for a convex loss, every randomized rule is matched or improved by a nonrandomized one. The Cauchy--Schwarz inequality is used in the forms $(\sum a_ib_i)^2\le\sum a_i^2\sum b_i^2$ and $\Cov(X,Y)^2\le\Var X\,\Var Y$.

\subsection{Estimation}

An estimator $T$ of $g(\theta)$ is unbiased if $\E_\theta T=g(\theta)$ for all $\theta$. The score is $S=\partial_\theta\log P_\theta(X)$, the Fisher information is $I(\theta)=\Var_\theta(S)$, and the Cram\'er--Rao bound states that $\Var_\theta(T)\ge g'(\theta)^2/I(\theta)$ for every unbiased $T$. A UMVUE has minimum variance among unbiased estimators at every $\theta$; a BLUE has minimum variance among unbiased estimators that are affine functions of the observations. A statistic is sufficient if the conditional law of the data given it does not depend on $\theta$, and complete if no nonzero function of it has mean zero for all $\theta$. A one-parameter exponential family has densities $\beta(\theta)e^{\theta T(x)}h(x)$, $\theta$ ranging over the natural parameter space $\{\theta:\int e^{\theta T}h<\infty\}$, assumed to contain an open interval; $T$ is then complete and sufficient, and by the Lehmann--Scheff\'e theorem \citep{lehmann1950} any unbiased function of $T$ is the unique UMVUE of its expectation. Karlin's theorem \citep{karlin1958} is stated in Section~\ref{sec:karlin}.

\subsection{Probability}

A Poisson variable with mean $\mu$ has $\Prob(X=x)=e^{-\mu}\mu^x/x!$ and variance $\mu$. For laws $p,p_0$ on a countable set, the Hellinger affinity is $\rho=\sum_x\sqrt{p(x)p_0(x)}\in[0,1]$ and the total variation distance is $\TV=\frac12\sum_x|p(x)-p_0(x)|$; the minimal error probability of a test between two equiprobable hypotheses is $\frac12(1-\TV)$. A Markov chain on a finite set is specified by its matrix of transition probabilities.

\subsection{Algorithms}

Belief propagation (BP) is the message-passing algorithm of \citet{decelle2011}, written out in Section~\ref{sec:bp}; its \emph{uninformative} fixed point assigns probability $1/q$ to every label in every message. Spectral clustering embeds the vertices by $q$ eigenvectors of a matrix built from $A$ and partitions the embedding by $k$-means; we use the Bethe Hessian $H(\varrho)=(\varrho^2-1)I-\varrho A+\Diag(\deg)$ with $\varrho=\sgn(\lambda)\sqrt d$ \citep{saade2014}, and the adjacency matrix for comparison. Label propagation is nearest-neighbor voting on the graph with a fraction of labels revealed. Gibbs sampling is the Markov chain on labelings with the posterior as stationary law. Path counting scores a pair of vertices by a weighted count of the non-backtracking walks between them. PageRank is the stationary law of the random walk with restart probability $1-\alpha$ at a uniform vertex, and personalized PageRank restarts at a fixed vertex. The EM algorithm \citep{dempster1977} is the standard maximum-likelihood procedure for incomplete data. All are specified in Appendix~\ref{app:algorithms}.

\section{The three model numbers and the threshold}
\label{sec:threshold}

This section derives $d$, $\lambda$ and the role of $d\lambda^2$ from first principles, fixes a running example, and quantifies the combinatorial barrier that separates exhaustive search from efficient computation. The material is classical \citep{kesten1966,evans2000,decelle2011,krzakala2013}; we include it because the later sections build on the explicit formulas.

\subsection{Average degree and signal strength}
\label{sec:lambda}

Fix a vertex. About $n/q$ other vertices share its label, each adjacent to it with probability $a/n$, and about $(q-1)n/q$ do not, each adjacent with probability $b/n$; hence the expected degree is $a/q+(q-1)b/q=d$. In the sparse regime the degree is asymptotically Poisson with mean $d$, and a fraction $e^{-d}$ of the vertices are isolated.

Condition on an edge $\{i,j\}$ and on $\sigma_i=s$. Since edges are drawn with probabilities proportional to $a$ or $b$, the conditional law of $\sigma_j$ puts mass $a/(a+(q-1)b)$ on $s$ and $b/(a+(q-1)b)$ on each other label. The $q\times q$ matrix $M$ of these conditional probabilities is a Markov transition matrix, the \emph{label chain}, and with $J$ the all-ones matrix,
\begin{equation}
M=\frac{bJ+(a-b)I}{a+(q-1)b}=\frac Jq+\lambda\Big(I-\frac Jq\Big).
\label{eq:M}
\end{equation}
The all-ones vector is an eigenvector with eigenvalue $1$, and every vector with zero sum is an eigenvector with eigenvalue $(a-b)/(a+(q-1)b)=\lambda$, of multiplicity $q-1$. Since $J/q$ and $I-J/q$ are complementary projections,
\begin{equation}
M^k=\frac Jq+\lambda^k\Big(I-\frac Jq\Big),\qquad
\Prob(\sigma_v=\sigma_u\mid v\text{ at distance $k$ from $u$})=\frac{1+(q-1)\lambda^k}{q},
\label{eq:Mk}
\end{equation}
the probability referring to the two ends of a path of length $k$. Thus $\lambda$ measures how much one edge reveals about the label at its far end, and $\lambda^k$ how much a path of length $k$ reveals.

\subsection{\texorpdfstring{Why $d\lambda^2$}{Why d lambda squared}}
\label{sec:tree}

In the sparse regime the ball of radius $o(\log n)$ around a vertex is, with high probability, a tree in which each vertex has a Poisson($d$) number of children and labels are transmitted along edges by the chain $M$. Consider $q=2$, write labels as $\pm1$, let $X_0$ be the root label and let $Y_k$ be the sum of the labels at depth $k$. By \eqref{eq:Mk}, $\E[X_v\mid X_0]=\lambda^kX_0$ for a vertex $v$ at depth $k$, so $\E[Y_k\mid X_0]=(d\lambda)^kX_0$. Two vertices at depth $k$ whose most recent common ancestor lies at depth $j\ge1$ satisfy $\E[X_uX_v\mid X_0]=\lambda^{2(k-j)}$, and there are $d^{2k-j}$ such ordered pairs on average, while pairs whose common ancestor is the root are conditionally independent given $X_0$. Writing $\ell=k-j$ and $r=d\lambda^2$, the conditional variance is, up to the centering terms (which cancel against the fluctuation of the generation size on a Poisson tree),
\begin{equation*}
\Var(Y_k\mid X_0)\approx d^k\sum_{\ell=0}^{k-1}r^\ell=d^k\,\frac{r^k-1}{r-1},
\end{equation*}
and the signal-to-noise ratio of the depth-$k$ statistic is
\begin{equation}
\mathrm{SNR}_k^2=\frac{\E[Y_k\mid X_0]^2}{\Var(Y_k\mid X_0)}\approx\frac{r^k(r-1)}{r^k-1}
\ \longrightarrow\
\begin{cases}
r-1 & (r>1),\\
0\ \text{like }1/k & (r=1),\\
0\ \text{like }r^k & (r<1).
\end{cases}
\label{eq:snr}
\end{equation}
Above the threshold the information carried by the vertices at distance $k$ does not decay with $k$; below it, the signal-to-noise ratio of the depth-$k$ census decays geometrically. This is the Kesten--Stigum threshold, and \eqref{eq:snr} is the mechanism by which many weak clues, each of strength $\lambda^k$, combine into a usable one when $d\lambda^2>1$. Two remarks on what \eqref{eq:snr} does and does not say are needed later. First, it concerns the linear statistic $Y_k$, the census of the labels at distance $k$, and the threshold it locates is exact only for that statistic. For $q=2$ it is also the reconstruction threshold of the tree: below it the information that the labels at distance $k$ carry about the root tends to zero as $k\to\infty$, whatever function of those labels is used \citep{evans2000}. For $q\ge5$ it is not: for every degree, nonlinear functions of the labels at distance $k$ retain information about the root as $k\to\infty$ at values of $d\lambda^2$ strictly below $1$ \citep{sly2011,mossel2023}, and correspondingly the belief-propagation recursion has an informative fixed point there. Second, \eqref{eq:snr} describes inference started from ignorance about the labels. The informative fixed point is a statement about self-consistency: a labeling that is already correct is preserved by local updates. The hard window of Section~\ref{sec:decision} lies between the two: the true labeling is a locally stable fixed point of belief propagation while every local computation started from an uninformative state stays at chance, which is exactly what the experiments of Section~\ref{sec:expq5} display. The same number appears in the spectrum of the non-backtracking matrix of the graph: its bulk eigenvalues have modulus about $\sqrt d$ and its $q-1$ informative eigenvalues are $d\lambda$, which separate from the bulk exactly when $d\lambda>\sqrt d$ \citep{krzakala2013,bordenave2018}.

\subsection{A running example}
\label{sec:example}

Let $q=5$, $a=3$, $b=1$. Then
\begin{equation*}
d=\frac{3+4}{5}=1.4,\qquad\lambda=\frac{3-1}{3+4}=\frac27,\qquad r=d\lambda^2=\frac75\cdot\frac4{49}=\frac4{35}\approx0.114,
\end{equation*}
a factor $8.75$ below the threshold; for $r=1$ with $q=5$ and $b=1$ one needs $(a-1)^2=5(a+4)$, that is $a=(7+\sqrt{125})/2\approx9.09$. A fraction $e^{-1.4}\approx24.7\%$ of the vertices are isolated. All numerical constants of Section~\ref{sec:fisher} are evaluated on this example.

\subsection{The combinatorial barrier}
\label{sec:bruteforce}

There are $q^n$ labelings ($5^{100}\approx7.9\times10^{69}$ for $q=5$ and $n=100$), and the $q!$ equivalent renamings do not change the order of magnitude. Below the information-theoretic threshold the search is also useless: an Erd\H{o}s--R\'enyi graph contains labelings with an atypically large number of intra-community edges, and the maximum-likelihood labeling is one of them, uncorrelated with the truth. Above the information-theoretic threshold but below $d\lambda^2=1$, the difficulty for efficient algorithms is of a different nature, which Section~\ref{sec:bp} makes precise: by \eqref{eq:snr} the census of every bounded neighborhood carries a signal that vanishes with its depth, so local updates started from an uninformative state have no gradient toward the truth. The posterior landscape then has two basins, an uninformative one that is locally stable until $d\lambda^2$ reaches~1 and an informative one around the true labeling, and local algorithms start in the wrong basin \citep{decelle2011,zdeborova2016}.

\section{Decision theory: the gap as a minimax theorem}
\label{sec:decision}

\subsection{The recovery problem as a decision problem}
\label{sec:setting}

The parameter is the labeling $\sigma\in\Theta=\Lab$, the data is $A\sim P_\sigma$, and the target is the pair array
\begin{equation}
x_{ij}(\sigma)=\ind{\sigma_i=\sigma_j}-\frac1q,\qquad i<j .
\label{eq:target}
\end{equation}
An action is a real array $t=(t_{ij})_{i<j}$, and the loss is the mean squared error over the $N=\binom n2$ pairs,
\begin{equation}
L(\sigma,t)=\frac1N\sum_{i<j}\big(t_{ij}-x_{ij}(\sigma)\big)^2 ,
\label{eq:loss}
\end{equation}
which is strictly convex in $t$ and invariant under renaming of the communities. The risk is $\risk(\sigma,\delta)=\E_\sigma L(\sigma,\delta(A))$. Two priors are used: the i.i.d.\ uniform prior $\pi$, which is the uniform law on the $q^n$ labelings, and for each profile $m$ the uniform law $\pi_m$ on the orbit $\Theta_m$; $\pi$ is the mixture of the $\pi_m$ with the multinomial weights $\pi(\Theta_m)$. We write $\brisk(\xi,\delta)$ for the Bayes risk under a prior $\xi$ and $\brisk(\delta)=\brisk(\pi,\delta)$.

The trivial rule $t\equiv0$ has risk $\triv(\sigma)=N^{-1}\sum_{i<j}x_{ij}(\sigma)^2$, which equals $\vq=\frac1q(1-\frac1q)$ on average under $\pi$ and $\vq+O(n^{-1/3})$ uniformly on the set
\begin{equation}
\Lam=\big\{\sigma:\ |m_s(\sigma)/n-1/q|\le n^{-1/3}\ \text{for all } s\big\},\qquad \pi(\Lam)\to1\ \ \text{(Chebyshev)},
\label{eq:typical}
\end{equation}
of labelings with typical profile; on an orbit $\Theta_m$ the best constant rule is $t\equiv\E_{\pi_m}x_{12}$, with risk $\Var_{\pi_m}(x_{12})$. A rule is useful at $\sigma$ only if its risk is below $\triv(\sigma)$. The symmetry of the problem is expressed by
\begin{equation}
P_\sigma(gA\in\cdot)=P_{g\sigma}(\cdot),\qquad L(g\sigma,gt)=L(\sigma,t),\qquad \Theta_m=\{g\sigma: g\in S_n\}\ \text{for any } \sigma\in\Theta_m ,
\label{eq:symmetry}
\end{equation}
for every permutation $g$ of the vertices: no labeling of a given profile is distinguished.

\subsection{Worst case equals average case}

\begin{lemma}[Symmetrization]
\label{lem:sym}
For a rule $\delta$ let $\bar\delta(A)=\frac1{n!}\sum_{g\in S_n}g^{-1}\delta(gA)$. Then $\bar\delta$ is equivariant, its risk is constant on each orbit $\Theta_m$, and
\begin{equation*}
\risk(\sigma,\bar\delta)\ \le\ \brisk(\pi_{m(\sigma)},\delta)\qquad\text{for every }\sigma .
\end{equation*}
\end{lemma}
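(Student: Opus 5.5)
The plan is to use the symmetry relations \eqref{eq:symmetry} three times, once for each claim, and then Jensen's inequality for the convex loss.

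\emph{Equivariance.} Fix a permutation $h$. Then
\[
\bar\delta(hA)=\frac1{n!}\sum_g g^{-1}\delta(ghA).
\]
Substitute $g'=gh$, so that $g^{-1}=h\,g'^{-1}$. The sum becomes $h\cdot\frac1{n!}\sum_{g'}g'^{-1}\delta(g'A)=h\bar\delta(A)$, since the action of $h$ on pair arrays is linear and commutes with averaging.

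\emph{Constant risk on orbits.} For an equivariant rule $\eta$ and any $h$,
\[
\risk(h\sigma,\eta)=\E_{h\sigma}L(h\sigma,\eta(A)).
\]
Under $P_{h\sigma}$ the graph $A$ has the law of $hA'$ with $A'\sim P_\sigma$. Therefore
\[
\risk(h\sigma,\eta)=\E_\sigma L\big(h\sigma,\eta(hA')\big)=\E_\sigma L\big(h\sigma,h\eta(A')\big)=\E_\sigma L\big(\sigma,\eta(A')\big)=\risk(\sigma,\eta),
\]
using the invariance $L(g\sigma,gt)=L(\sigma,t)$. Because $\Theta_m$ is a single orbit, the risk of $\eta$ is constant on $\Theta_m$. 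Apply this with $\eta=\bar\delta$.

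\emph{The inequality.} The loss is convex in $t$, being an average of squares. Jensen's inequality applied to the uniform average over $g$ gives
\[
L(\sigma,\bar\delta(A))\le\frac1{n!}\sum_g L\big(\sigma,g^{-1}\delta(gA)\big)=\frac1{n!}\sum_g L\big(g\sigma,\delta(gA)\big).
\]
Take $\E_\sigma$ and note that $gA\sim P_{g\sigma}$. Each term then equals $\risk(g\sigma,\delta)$, so
\[
\risk(\sigma,\bar\delta)\le\frac1{n!}\sum_g\risk(g\sigma,\delta).
\]
As $g$ ranges uniformly over $S_n$, $g\sigma$ is uniform on $\Theta_{m(\sigma)}$: the orbit map is equivariant and every fiber has the size of the stabilizer. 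Hence the right side is exactly $\brisk(\pi_{m(\sigma)},\delta)$.

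\emph{Main obstacle.} The only delicate point is bookkeeping. The convention $(gt)_{ij}=t_{g^{-1}(i)g^{-1}(j)}$ has to make $g\mapsto(\text{action})$ a left action on labelings, graphs and arrays alike, so that $(gh)A=g(hA)$. This is what justifies the substitution in the equivariance step and the identity $L(\sigma,g^{-1}t)=L(g\sigma,t)$ in the Jensen step; both should be checked directly from the definitions. Measurability is not an issue, since everything is a finite sum.
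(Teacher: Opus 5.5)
Your proposal is correct and follows the paper's proof step for step. It uses Jensen's inequality together with the invariance $L(g\sigma,gt)=L(\sigma,t)$ and $P_\sigma(gA\in\cdot)=P_{g\sigma}$ for the inequality, the uniformity of $g\sigma$ on $\Theta_{m(\sigma)}$ for the identification with $\brisk(\pi_{m(\sigma)},\delta)$, and equivariance followed by the same chain of equalities for constant risk on orbits; your substitution $g'=gh$ simply spells out the paper's remark that $h$ permutes the terms of the sum.
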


\begin{proof}
By Jensen's inequality and the convexity of $L$ in $t$, $L(\sigma,\bar\delta(A))\le\frac1{n!}\sum_gL(\sigma,g^{-1}\delta(gA))=\frac1{n!}\sum_gL(g\sigma,\delta(gA))$, using the invariance of the loss in \eqref{eq:symmetry}. Taking $\E_\sigma$ and using $P_\sigma(gA\in\cdot)=P_{g\sigma}$ gives $\risk(\sigma,\bar\delta)\le\frac1{n!}\sum_g\risk(g\sigma,\delta)$. As $g$ ranges over $S_n$, $g\sigma$ visits each element of $\Theta_{m(\sigma)}$ equally often, so the right-hand side is $\brisk(\pi_{m(\sigma)},\delta)$. Equivariance, $\bar\delta(hA)=h\bar\delta(A)$, holds because $h$ permutes the terms of the sum; it implies $\risk(h\sigma,\bar\delta)=\E_\sigma L(h\sigma,\bar\delta(hA))=\E_\sigma L(h\sigma,h\bar\delta(A))=\risk(\sigma,\bar\delta)$, hence constant risk on orbits.
\end{proof}

\begin{theorem}[Worst case equals average case]
\label{thm:worst}
\begin{enumerate}[label=\textup{(\alph*)},leftmargin=2em]
\item Let $\cD$ be a class of rules closed under symmetrization, that is, $\bar\delta\in\cD$ whenever $\delta\in\cD$; this holds in particular if $\cD$ is closed under averaging and under relabeling $\delta\mapsto g^{-1}\delta(g\,\cdot)$. Then for every profile $m$,
\begin{equation*}
\inf_{\delta\in\cD}\ \sup_{\sigma\in\Theta_m}\risk(\sigma,\delta)=\inf_{\delta\in\cD}\ \brisk(\pi_m,\delta).
\end{equation*}
\item Let $\delta$ be any rule and $L$ any loss invariant under relabeling. The randomized rule that draws $g$ uniformly from $S_n$ and outputs $g^{-1}\delta(gA)$ has risk $\brisk(\pi_{m(\sigma)},\delta)$ at every $\sigma$, at the cost of one pass over the graph.
\end{enumerate}
\end{theorem}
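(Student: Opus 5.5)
For part (a) the plan is to prove the two inequalities separately, using Lemma~\ref{lem:sym} for one direction and the trivial comparison of a supremum with an average for the other.

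The easy inequality comes first. For any $\delta\in\cD$ the Bayes risk $\brisk(\pi_m,\delta)$ is an average of $\risk(\sigma,\delta)$ over $\sigma\in\Theta_m$. An average is at most the supremum, so
\[
\brisk(\pi_m,\delta)\le\sup_{\sigma\in\Theta_m}\risk(\sigma,\delta).
\]
Taking the infimum over $\delta\in\cD$ on both sides gives $\inf_\cD\brisk(\pi_m,\cdot)\le\inf_\cD\sup_{\Theta_m}\risk$.

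For the reverse inequality, take any $\delta\in\cD$ and pass to $\bar\delta$, which lies in $\cD$ by the closure hypothesis. Lemma~\ref{lem:sym} gives $\risk(\sigma,\bar\delta)\le\brisk(\pi_m,\delta)$ for every $\sigma\in\Theta_m$. Hence $\sup_{\Theta_m}\risk(\cdot,\bar\delta)\le\brisk(\pi_m,\delta)$, and therefore
\[
\inf_{\cD}\sup_{\Theta_m}\risk\le\brisk(\pi_m,\delta).
\]
Taking the infimum over $\delta$ closes the argument.

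The remaining point in (a) is the parenthetical sufficient condition. Suppose $\cD$ is closed under relabeling $\delta\mapsto g^{-1}\delta(g\,\cdot)$ and under averaging, meaning finite convex combinations of outputs. Then $\bar\delta$, being the uniform average of the $n!$ relabeled rules, lies in $\cD$. I will state this as a one-line remark.

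For part (b), I would compute the risk of the randomized rule directly, without Jensen, since no convexity is assumed for $L$. The risk at $\sigma$ is the expectation over both $A$ and $g$:
\begin{align*}
\frac1{n!}\sum_g\E_\sigma L\bigl(\sigma,g^{-1}\delta(gA)\bigr)
&=\frac1{n!}\sum_g\E_\sigma L\bigl(g\sigma,\delta(gA)\bigr)\\
&=\frac1{n!}\sum_g\risk(g\sigma,\delta).
\end{align*}
The first equality uses the invariance $L(g\sigma,gt)=L(\sigma,t)$, and the second uses $P_\sigma(gA\in\cdot)=P_{g\sigma}$ from \eqref{eq:symmetry}. Exactly as in the proof of Lemma~\ref{lem:sym}, $g\sigma$ runs uniformly over $\Theta_{m(\sigma)}$, with each element hit $n!/|\Theta_m|$ times, so the sum equals $\brisk(\pi_{m(\sigma)},\delta)$. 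This holds for every $\sigma$, so the risk is constant on orbits and in fact equal, not merely bounded. The computational remark holds because applying $g$ to $A$ is just relabeling the vertices, which takes one pass over the edge list.

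There is no real obstacle in this argument. The one point needing care is the orbit-counting step: the stabilizer of $\sigma$ has the same size $\prod_s m_s!$ for every $\sigma\in\Theta_m$, so the map $g\mapsto g\sigma$ pushes the uniform law on $S_n$ forward to $\pi_m$. I would state this explicitly once and reuse it in both parts.
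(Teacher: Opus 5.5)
Your proposal is correct and follows the paper's proof essentially step for step. For (a) you compare the supremum with the average and apply Lemma~\ref{lem:sym} to $\bar\delta\in\cD$; for (b) you compute $\frac1{n!}\sum_g\risk(g\sigma,\delta)=\brisk(\pi_{m(\sigma)},\delta)$ directly from the invariance of $L$ and of the model, without Jensen, and your orbit-stabilizer remark (each element of $\Theta_m$ is hit $\prod_s m_s!$ times) simply makes explicit the counting step the paper uses in the proof of Lemma~\ref{lem:sym}.
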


\begin{proof}
(a) A supremum dominates an average, so the left side is at least the right side. Conversely, for $\delta\in\cD$ the symmetrization $\bar\delta$ lies in $\cD$ and, by Lemma~\ref{lem:sym}, $\sup_{\Theta_m}\risk(\cdot,\bar\delta)\le\brisk(\pi_m,\delta)$; take the infimum over $\delta$. (b) The risk of the randomized rule at $\sigma$ is $\frac1{n!}\sum_g\risk(g\sigma,\delta)=\brisk(\pi_{m(\sigma)},\delta)$, as in the proof of Lemma~\ref{lem:sym}; convexity is not needed.
\end{proof}

Part (a) is the minimax theorem of a constant-risk Bayes rule, applied \emph{inside} an arbitrary symmetric convex class: the uniform prior on each profile is least favorable for the restricted statistician as well as for the unrestricted one. Part (b) shows that worst-case and average-case hardness are the same question in this problem for every algorithm, since any algorithm that is good on average on a profile can be made equally good on every labeling of that profile by one random relabeling of its input. The Bayesian (super-population) and the fixed-labeling (minimax) viewpoints therefore coincide. Lemma~\ref{lem:sym} also exhibits a computational cost hidden in the classical statement that, for a convex loss, a randomized rule can be replaced by a nonrandomized one that is at least as good: the replacement costs $n!$ evaluations of $\delta$.

\subsection{The posterior mean}

The posterior mean under $\pi$ is
\begin{equation}
\delta^*(A)_{ij}=\E_\pi[x_{ij}\mid A]=\Prob_\pi(\sigma_i=\sigma_j\mid A)-\frac1q ,
\label{eq:postmean}
\end{equation}
a sum over all $q^n$ labelings; for the balanced orbit we also use $\delta^*_{\mathrm{bal}}$, the posterior mean under $\pi_{\mathrm{bal}}$, a sum over the $n!/((n/q)!^q)$ balanced labelings. Both are the precise form of ``try every labeling''. We assume $a,b>0$ in the next theorem, so that every graph has positive probability under every labeling.

\begin{theorem}[Optimality of the posterior mean]
\label{thm:bayes}
\begin{enumerate}[label=\textup{(\roman*)},leftmargin=2em]
\item $\delta^*$ is the unique Bayes rule for $\pi$ and is admissible; its risk depends on $\sigma$ only through the profile $m(\sigma)$.
\item When $q\mid n$, $\delta^*_{\mathrm{bal}}$ is the unique Bayes rule for $\pi_{\mathrm{bal}}$, has constant risk on $\Theta_{\mathrm{bal}}$, is the unique minimax rule for the parameter set $\Theta_{\mathrm{bal}}$, and is admissible there; $\pi_{\mathrm{bal}}$ is least favorable.
\end{enumerate}
\end{theorem}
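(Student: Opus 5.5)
The plan is to use the standard Bayesian decomposition of the Bayes risk under squared-error loss, and then to combine it with the symmetry relations \eqref{eq:symmetry}, Lemma~\ref{lem:sym} and the classical facts listed in Section~\ref{sec:prelim}.

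For (i), I would write $\brisk(\pi,\delta)=\E_\pi\big[\E_\pi[L(\sigma,\delta(A))\mid A]\big]$. The inner conditional expectation is a sum over pairs of $\E[(t_{ij}-x_{ij})^2\mid A]$. For each $A$ this is strictly convex in $t$ and is minimized only at $t=\E_\pi[x\mid A]=\delta^*(A)$. Since $a,b>0$, every graph $A$ has positive marginal probability. Hence any Bayes rule must agree with $\delta^*$ at every $A$, not just almost surely, and this gives uniqueness in the strong sense. Admissibility then follows from the classical fact that a unique Bayes rule is admissible. Alternatively, $\pi$ puts positive mass on each point of the finite set $\Lab$, so a dominating rule would have Bayes risk at most that of $\delta^*$ and would therefore be Bayes, and equal to $\delta^*$.

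For the profile dependence I would show that $\delta^*$ is equivariant. The prior $\pi$ is invariant under $g$, and the likelihood satisfies $P_{g\sigma}(gA)=P_\sigma(A)$. It follows that the posterior of $\sigma$ given $gA$ is the push-forward by $g$ of the posterior given $A$, so $\delta^*(gA)=g\delta^*(A)$. The same computation as in the last lines of the proof of Lemma~\ref{lem:sym} then gives $\risk(g\sigma,\delta^*)=\risk(\sigma,\delta^*)$. Since each $\Theta_m$ is a single orbit, the risk depends only on $m(\sigma)$.

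For (ii), the same argument with the prior $\pi_{\mathrm{bal}}$ gives the uniqueness of $\delta^*_{\mathrm{bal}}$ as Bayes rule. Again every $A$ has positive marginal probability, because $\pi_{\mathrm{bal}}$ charges labelings under which all graphs are possible. The prior $\pi_{\mathrm{bal}}$ is $S_n$-invariant, so the equivariance computation above applies and shows that the risk is constant on the single orbit $\Theta_{\mathrm{bal}}$. A Bayes rule with constant risk is minimax and its prior is least favorable, which gives both of those conclusions.

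For uniqueness of the minimax rule, let $\delta'$ be any minimax rule on $\Theta_{\mathrm{bal}}$. Then
\begin{equation*}
\brisk(\pi_{\mathrm{bal}},\delta')\le\sup_{\Theta_{\mathrm{bal}}}\risk(\cdot,\delta')=\sup_{\Theta_{\mathrm{bal}}}\risk(\cdot,\delta^*_{\mathrm{bal}})=\brisk(\pi_{\mathrm{bal}},\delta^*_{\mathrm{bal}}),
\end{equation*}
so $\delta'$ is Bayes for $\pi_{\mathrm{bal}}$, and by uniqueness $\delta'=\delta^*_{\mathrm{bal}}$. If randomized rules are allowed, convexity of the loss means that averaging the randomization does not increase risk. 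The strict convexity then forces the randomization to be degenerate, and I would check this point explicitly. Admissibility on $\Theta_{\mathrm{bal}}$ follows because the rule is the unique Bayes rule for a prior with full support on the finite set $\Theta_{\mathrm{bal}}$.

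The main obstacle is making uniqueness hold pointwise in $A$ rather than almost surely, and handling randomized competitors. The positivity assumption $a,b>0$ settles the first issue. For the second, I would apply Jensen's inequality with strict convexity: a genuinely randomized Bayes rule would have strictly larger posterior expected loss on some $A$ of positive probability, which is a contradiction.
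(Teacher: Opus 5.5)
Your proposal is correct and follows essentially the same route as the paper. Both arguments minimize the posterior expected loss pointwise using strict convexity, use equivariance of the posterior mean to get constant risk on orbits, note that a constant-risk Bayes rule is minimax with $\pi_{\mathrm{bal}}$ least favorable, and use the same chain of inequalities to force any minimax rule to be Bayes and hence equal to $\delta^*_{\mathrm{bal}}$; your explicit handling of pointwise uniqueness via $a,b>0$ and of randomized competitors only makes precise what the paper leaves implicit.
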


\begin{proof}
For a fixed $A$ the posterior expected loss of an action is $N^{-1}\sum_{i<j}\E[(t_{ij}-x_{ij})^2\mid A]$, each term strictly convex in $t_{ij}$ with unique minimizer $\E[x_{ij}\mid A]$; a rule that minimizes the posterior expected loss for every $A$ is Bayes, and strict convexity makes it unique. Under either prior the model is invariant under relabeling, so the posterior mean is equivariant and its risk is constant on orbits (Lemma~\ref{lem:sym}). Admissibility of $\delta^*$ follows because $\Theta$ is finite and $\pi$ charges every point. For (ii), a Bayes rule with constant risk is minimax and its prior is least favorable; if $\delta'$ were another minimax rule for $\Theta_{\mathrm{bal}}$ then $\brisk(\pi_{\mathrm{bal}},\delta')\le\sup_{\Theta_{\mathrm{bal}}}\risk(\cdot,\delta')\le\sup_{\Theta_{\mathrm{bal}}}\risk(\cdot,\delta^*_{\mathrm{bal}})=\brisk(\pi_{\mathrm{bal}},\delta^*_{\mathrm{bal}})$, so $\delta'$ is Bayes for $\pi_{\mathrm{bal}}$ and equals $\delta^*_{\mathrm{bal}}$; a unique Bayes rule is admissible.
\end{proof}

\begin{lemma}[A posterior sample costs a factor two]
\label{lem:sample}
Let $\tilde\sigma$ be drawn from the posterior of $\sigma$ given $A$ under $\pi$ (respectively $\pi_{\mathrm{bal}}$), and $\tilde x=x(\tilde\sigma)$. Then $\brisk(\pi,\tilde x)=2\,\brisk(\pi,\delta^*)$ (respectively with $\pi_{\mathrm{bal}}$ and $\delta^*_{\mathrm{bal}}$).
\end{lemma}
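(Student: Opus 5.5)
The approach is to condition on the graph and compare conditional risks pairwise. Fix $A$ and a pair $i<j$, and write $\mu_{ij}=\E_\pi[x_{ij}\mid A]=\delta^*(A)_{ij}$. By construction $\tilde\sigma$ is drawn from the posterior given $A$ and does not depend on $\sigma$ beyond $A$. Hence, conditionally on $A$, the pair $(\sigma,\tilde\sigma)$ consists of two independent draws from the same posterior. In particular $\tilde x_{ij}$ and $x_{ij}$ are conditionally i.i.d.\ with mean $\mu_{ij}$ and variance $V_{ij}(A)=\Var_\pi(x_{ij}\mid A)$.

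The key step is the elementary identity for two i.i.d.\ variables,
\begin{equation*}
\E\big[(\tilde x_{ij}-x_{ij})^2\mid A\big]=\Var(\tilde x_{ij}\mid A)+\Var(x_{ij}\mid A)=2V_{ij}(A).
\end{equation*}
Here the cross term vanishes by conditional independence. On the other side, $\E[(\mu_{ij}-x_{ij})^2\mid A]=V_{ij}(A)$ by definition of the conditional variance.

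Averaging over the $N$ pairs gives the posterior expected loss of $\tilde x$ as exactly twice that of $\delta^*$, for every $A$. Taking expectation over $A$ under the marginal induced by $\pi$ yields $\brisk(\pi,\tilde x)=2\,\brisk(\pi,\delta^*)$. The Bayes risk of the randomized rule $\tilde x$ is $\E_\pi L(\sigma,\tilde x)$, with the expectation taken jointly over $\sigma$, $A$ and the sampling randomness. This is the iterated expectation of the conditional losses above, so no interchange issues arise because everything is finite.

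The balanced case is identical: replace $\pi$ by $\pi_{\mathrm{bal}}$ throughout and note that $\delta^*_{\mathrm{bal}}$ is the corresponding posterior mean. The only point needing care, and the one I would state explicitly, is the conditional independence of $\sigma$ and $\tilde\sigma$ given $A$ with identical marginals. This holds because the sampler uses only $A$ and the sample is drawn from the posterior under the same prior that generated $\sigma$. If the sampler used a mismatched prior, the identity would fail, so the statement is genuinely about the Bayes risk under the prior used for sampling and not about the risk at a fixed $\sigma$.
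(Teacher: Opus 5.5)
Your proposal is correct and follows the paper's own proof. Conditionally on $A$, the variables $\tilde x_{ij}$ and $x_{ij}$ are independent with the same law, so the conditional loss of the sample is $2\Var(x_{ij}\mid A)$ against $\Var(x_{ij}\mid A)$ for the posterior mean; averaging over pairs and over $A$ gives the factor two.
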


\begin{proof}
Given $A$, $\tilde x_{ij}$ and $x_{ij}$ are independent with the same law, so $\E[(\tilde x_{ij}-x_{ij})^2\mid A]=2\Var(x_{ij}\mid A)$, whereas $\E[(\delta^*_{ij}-x_{ij})^2\mid A]=\Var(x_{ij}\mid A)$. Average over pairs and over $A$.
\end{proof}

An ideal Gibbs sampler returns a posterior sample, and Lemma~\ref{lem:sample} quantifies the price of reporting one sample instead of the posterior mean: exactly a factor two in Bayes risk, at every signal strength.

\subsection{The low-degree class}
\label{sec:gap}

The class $\cP_D$ is a linear space, hence closed under averaging, and relabeling maps a polynomial of degree $D$ to a polynomial of degree $D$, so Theorem~\ref{thm:worst}(a) applies to it. (PageRank-type scores divide by vertex degrees; they belong to $\cP_D$ once the degrees are replaced by their mean $d$.)

\begin{theorem}[Bayes and minimax risk of the low-degree class]
\label{thm:lowdeg}
For every prior $\xi$ invariant under vertex relabeling, in particular for $\pi$ and for each $\pi_m$,
\begin{equation}
\inf_{\delta\in\cP_D}\brisk(\xi,\delta)=\E_\xi[x_{12}^2]\,\big(1-\Corr_D(\xi)^2\big),
\label{eq:lowdeg}
\end{equation}
with $\Corr_D(\xi)$ as in \eqref{eq:corr}. For $\xi=\pi$ the right-hand side is $\vq(1-\Corr_D^2)$, and by Theorem~\ref{thm:worst}(a) the minimax risk of $\cP_D$ over $\Theta_m$ equals \eqref{eq:lowdeg} with $\xi=\pi_m$.
\end{theorem}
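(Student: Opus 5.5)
The plan is to reduce the Bayes risk over $\cP_D$ to a single-pair projection problem, and to use the relabeling invariance of $\xi$ so that every pair behaves like the pair $(1,2)$. For $\delta\in\cP_D$ the Bayes risk is additive over pairs:
\[
\brisk(\xi,\delta)=N^{-1}\sum_{i<j}\E_\xi\big(\delta_{ij}(A)-x_{ij}\big)^2 .
\]
Each coordinate $\delta_{ij}$ ranges freely over $\R[A]_{\le D}$, so the infimum decouples into $N$ separate problems $\inf_{f\in\R[A]_{\le D}}\E_\xi(f(A)-x_{ij})^2$. Take a permutation $g$ with $g(1)=i$, $g(2)=j$. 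The law of $(g\sigma,gA)$ equals that of $(\sigma,A)$, because $\xi$ is invariant and $P_\sigma(gA\in\cdot)=P_{g\sigma}$ by \eqref{eq:symmetry}. Moreover $f\mapsto f\circ g^{-1}$ is a bijection of $\R[A]_{\le D}$. Hence each pair problem has the same value as the problem for $x_{12}$, and the Bayes risk infimum equals $\inf_f\E_\xi(f-x_{12})^2$.

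Next I would evaluate this single problem in the Hilbert space $L^2$ of the joint law. The subspace $\R[A]_{\le D}$ is finite-dimensional, hence closed, so the infimum is attained by the orthogonal projection $f^*$ of $x_{12}$ onto it. The Pythagorean identity gives $\E(f^*-x_{12})^2=\E x_{12}^2-\E (f^*)^2$ and $\E[f^*x_{12}]=\E (f^*)^2$. On the other side, the supremum defining $\Corr_D(\xi)$ is attained at $f^*$ by Cauchy--Schwarz. This holds because for any $f$ in the subspace, $\E[fx_{12}]=\E[ff^*]\le\sqrt{\E f^2\,\E (f^*)^2}$. So $\Corr_D(\xi)^2=\E (f^*)^2/\E x_{12}^2$, and substituting yields \eqref{eq:lowdeg}.

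Two degenerate cases need a quick check. If $f^*=0$ the correlation is $0$ and the formula reads $\E x_{12}^2$, which is consistent. The correlation is sign-free, since $f$ may be replaced by $-f$, so squaring causes no issue.

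For $\xi=\pi$, labels are i.i.d.\ uniform, so $x_{12}$ takes the value $1-1/q$ with probability $1/q$ and $-1/q$ otherwise. Thus $\E x_{12}^2=\frac1q(1-\frac1q)^2+(1-\frac1q)\frac1{q^2}=\vq$. For the minimax statement, $\cP_D$ is a linear space and is invariant under relabeling, as noted before the theorem, so it is closed under symmetrization. Theorem~\ref{thm:worst}(a) then equates the minimax risk over $\Theta_m$ with $\inf_{\cP_D}\brisk(\pi_m,\cdot)$, which is \eqref{eq:lowdeg} with $\xi=\pi_m$. The only step needing care, rather than routine computation, is the reduction from all pairs to the pair $(1,2)$. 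That step requires that the joint law of $(\sigma,A)$, not merely the prior, be invariant, and that relabeling preserve polynomial degree; both follow from \eqref{eq:symmetry}.
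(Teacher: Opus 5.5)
Your proof is correct and follows the paper's argument: the same reduction to the single problem $\inf_{f}\E_\xi[(f(A)-x_{12})^2]$, using relabeling invariance of the joint law and the independence of the coordinates, then the same computation of $\E x_{12}^2$ under $\pi$ and the same appeal to Theorem~\ref{thm:worst}(a). The only difference is cosmetic: the paper minimizes $s\mapsto\E[(sf-x_{12})^2]$ for each fixed $f$ and then optimizes over $f$, which needs no orthogonal projection, whereas your projection argument also shows that the infimum and the supremum in \eqref{eq:corr} are both attained.
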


\begin{proof}
Under a relabeling-invariant prior every pair of vertices plays the same role, and the coordinates $t_{ij}$ of a rule in $\cP_D$ may be chosen independently, so $\inf_{\cP_D}\brisk(\xi,\cdot)=\inf_{\deg f\le D}\E_\xi[(f(A)-x_{12})^2]$. For fixed $f$ the quadratic $s\mapsto\E[(sf-x_{12})^2]=s^2\E f^2-2s\E[fx_{12}]+\E x_{12}^2$ is minimized at $s=\E[fx_{12}]/\E f^2$ with value $\E x_{12}^2\,(1-\E[fx_{12}]^2/(\E f^2\,\E x_{12}^2))$; optimizing over $f$ gives \eqref{eq:lowdeg}. Under $\pi$, $\Prob(\sigma_1=\sigma_2)=1/q$ and $\E x_{12}^2=\vq$.
\end{proof}

Theorem~\ref{thm:lowdeg} shows that the low-degree correlation, an average-case quantity, determines the Bayes and the minimax risk of the low-degree class exactly. The two external inputs are the following.

\begin{fact}[Low-degree bound, \citet{chin2025}, Theorem~1.3]
\label{fact:chin}
Suppose $q\le n^{1/2-\gamma_0}$, $a\ge c_1b$ and $r\le1-\eta$ for constants $\gamma_0,c_1,\eta>0$. Then there are constants $c_0=c_0(c_1,\eta)$ and $C=C(\eta)$ such that $\Corr_D\le C\sqrt{q/n}$ for all $D\le n^{c_0\gamma_0}$.
\end{fact}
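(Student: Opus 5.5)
Since Fact~\ref{fact:chin} is imported from \citet{chin2025}, I would not reprove it in full. If a self-contained argument were wanted, I would follow the cumulant method of \citet{schramm2022}.

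The starting point is their general bound
\begin{equation*}
\Corr_D^2\ \le\ \frac{1}{\E[x_{12}^2]}\sum_{\alpha:\,|\alpha|\le D}\frac{\kappa_\alpha^2}{\alpha!},
\end{equation*}
where $\alpha$ ranges over multigraphs on $[n]$ with at most $D$ edges and $\kappa_\alpha$ is the joint cumulant of $x_{12}$ and the entries $\{A_e:e\in\alpha\}$. For $\{0,1\}$-valued entries only simple graphs are needed, after rewriting in the orthonormal basis of centered, normalized edges $\bar A_e=(A_e-d/n)/\sqrt{d/n}$ under the null. The first structural step is that $\kappa_\alpha=0$ unless $\alpha$ is connected after the pair $\{1,2\}$ is added. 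Otherwise the variables split into two groups that are independent, because the labels are independent across components.

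The second step is to compute the cumulants. Conditionally on $\sigma$, the edges are independent with mean $\frac dn(1+\lambda(q\ind{\sigma_i=\sigma_j}-1))$, so every mixed moment reduces to a moment of products of centered label indicators. Averaging over uniform labels, each such product contracts like the transition matrix $M$ of \eqref{eq:M}. Along a path from $1$ to $2$ of length $k$, this produces $\kappa_\alpha\approx(d/n)^{k/2}\lambda^k(q-1)/q$ in the normalized basis.

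There are about $n^{k-1}$ such paths. Their total contribution is therefore of order $\sum_k r^k\cdot q/n$, where the factor $q$ comes from dividing by $\E x_{12}^2\approx1/q$. This series is bounded by $C(\eta)\,q/n$ when $r\le1-\eta$, which is exactly the claimed $\Corr_D\le C\sqrt{q/n}$.

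The main obstacle is the third step: bounding the cumulants of all non-path graphs $\alpha$, namely graphs with extra edges, cycles and dangling trees, uniformly in $q$ and up to $D=n^{c_0\gamma_0}$. I would first write the cumulant as a Möbius sum over partitions of the edge set. The aim is to show that each extra independent cycle (excess edge) costs a factor $O(D^{C}q/n)$, since a closed cycle gains at most a factor $q\lambda^{\ell}$ relative to the $n^{-1}$ lost by having one fewer free vertex. Dangling trees should contribute nothing, because centered edges attached through a single vertex have zero cumulant.

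The assumption $q\le n^{1/2-\gamma_0}$ is what makes $D^{C}q^2/n\le n^{-\gamma_0}$ summable over excess. The assumption $a\ge c_1b$ keeps the normalized edge variables bounded, with $\lambda$ bounded away from $1$ in the relevant sense, so the Möbius sums do not blow up. This combinatorial control of the partition sums is where I would expect the real work, and where the constant $c_0$ is determined.
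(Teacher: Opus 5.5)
Your proposal matches the paper, which states this result as an imported Fact from \citet[Theorem~1.3]{chin2025} and gives no proof of its own; your path computation, $\kappa_\alpha\approx(d/n)^{k/2}\lambda^k(q-1)/q$ summing to order $(q/n)\sum_k r^k$, correctly shows where the $\sqrt{q/n}$ scale comes from. There are two slips in the optional sketch. First, at a fixed number of edges, each excess cycle makes $\kappa_\alpha$ larger by a factor $q$, so it multiplies the contribution to $\sum_\alpha\kappa_\alpha^2$ by order $q^2/n$ rather than $q/n$; this is exactly why the hypothesis reads $q\le n^{1/2-\gamma_0}$. Second, $a\ge c_1b$ is a lower bound on $a/b$, vacuous for $c_1\le1$ in the assortative case, and it keeps $\lambda$ away from the planted-coloring endpoint $-1/(q-1)$, not away from $1$; this is why the paper insists on $a>0$ in the hard window.
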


\begin{fact}[Recovery below the threshold, \citet{banks2016,abbe2018,mossel2023}]
\label{fact:info}
For every $q\ge5$ there are parameters $(a,b)$ with $a>0$ and $r<1$, and an $\varepsilon>0$, such that some (exponential-time) algorithm achieves $\overlap(\hat\sigma,\sigma)\ge1/q+\varepsilon$ with probability $1-o(1)$ under $\mathrm{SBM}(n,q,a,b)$.
\end{fact}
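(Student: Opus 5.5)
The plan is to prove Fact~\ref{fact:info} with the first-moment (``typical partition'') argument of \citet{banks2016}, taking the disassortative regime close to planted coloring. There $\lambda$ is close to $-1/(q-1)$ and the Kesten--Stigum condition reads $d\gtrsim(q-1)^2$, while the information-theoretic threshold is much lower. I fix $q\ge5$, take $b=d q/(q-1)\cdot(1-\eta)$ with $a=\eta' d$ small but positive, and choose $d$ just below $(q-1)^2$, so that $r<1$.

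\textbf{The algorithm.} It is exhaustive search: output a uniformly random element of the set $\mathcal G(A)$ of \emph{good} labelings. A labeling is good if it is nearly balanced ($\sigma\in\Lam$) and its numbers of intra- and inter-community edges are within $n^{2/3}$ of their planted expectations $an/(2q)$ and $(q-1)bn/(2q)$.

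\textbf{Step 1: the truth is good.} By Chebyshev and \eqref{eq:typical}, the true $\sigma$ is good with probability $1-o(1)$.

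\textbf{Step 2: bad good labelings are rare.} Let $Z_{\mathrm{bad}}$ be the number of good labelings $\tau$ with $\overlap(\tau,\sigma)<1/q+\varepsilon$. I bound $\E Z_{\mathrm{bad}}$ by grouping the $\tau$ by their $q\times q$ overlap matrix $\alpha$ with $\sigma$ (entries $\alpha_{st}$ summing to $1$, near-uniform marginals). For fixed $\alpha$ the number of $\tau$ is $\exp(nH(\alpha)+o(n))$. Given $\sigma$, the intra-$\tau$ edge count is a sum of independent Bernoullis with mean
\[
\tfrac n2\Big(a\textstyle\sum_s\sum_t\alpha_{st}^2+b\big(\tfrac1q-\sum_s\sum_t\alpha_{st}^2\big)\Big)+O(1).
\]
A Chernoff/large-deviations bound gives the probability that this count hits the planted value $an/(2q)$ as $\exp(-n\,\Phi(\alpha)+o(n))$, with an explicit Poisson rate function $\Phi$. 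Hence
\[
\E Z_{\mathrm{bad}}\le\exp\Big(n\sup_{\alpha:\ \overlap<1/q+\varepsilon}\big[H(\alpha)-\Phi(\alpha)\big]+o(n)\Big).
\]
At the uniform matrix $\alpha=J/q^2$, $H=2\log q$. The rate function $\Phi$ grows like $d\cdot\frac{1}{q}\log(\cdot)$ near coloring, so the supremum is negative once $d\ge C q\log q$. Markov's inequality then gives $Z_{\mathrm{bad}}=0$ with probability $1-o(1)$.

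\textbf{Step 3: conclude.} On the intersection of the events from Steps~1 and~2, $\mathcal G(A)$ is nonempty and every element has overlap at least $1/q+\varepsilon$, so the output achieves weak recovery with margin $\varepsilon$.

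\textbf{Main obstacle: the window for small $q$.} For large $q$, $Cq\log q<(q-1)^2$ leaves room below Kesten--Stigum. The hard part is making Step~2 go through for $q=5,6,\dots$, where $q\log q$ and $(q-1)^2$ are comparable. The supremum over $\alpha$ must be computed sharply, not just at the uniform point. I would check numerically that the optimum over the constrained overlap matrices is attained at $\alpha=J/q^2$; I would try the one-parameter family $\alpha=(1-t)J/q^2+tI/q$ first and justify it by concavity/symmetrization. I would then verify that $H-\Phi<0$ there for some $d<(q-1)^2$ and small $a>0$; this is the exact computation carried out for $q\ge5$ by \citet{banks2016}.

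If this fails at $q=5$, the fallback is the route of \citet{mossel2023}. Sly's tree-reconstruction result \citep{sly2011} gives reconstruction below Kesten--Stigum for every $q\ge5$. One then transfers it to the graph by an exhaustive-search estimator combined with local belief propagation. That transfer is substantially harder, and I would cite it rather than reprove it.
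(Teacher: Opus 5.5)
The paper does not prove this Fact. It imports it: the explicit first-moment bounds of \citet{banks2016} in the disassortative case lie below the Kesten--Stigum line for every $q\ge5$ and are continuous in $(a,b)$, so $a>0$ is allowed, and \citet{mossel2023} is cited as well. Your first-moment route is the argument of \citet{banks2016}. Done correctly, it proves the Fact for every $q\ge5$ and makes the fallback unnecessary.

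As written, however, Step~2 does not get below the threshold at $q=5$. Two bookkeeping errors at the uniform point cause this.
\begin{enumerate}[label=(\roman*),leftmargin=2em]
\item For fixed $\sigma$, the number of $\tau$ with overlap matrix $\alpha$ is $\exp\big(n(H(\alpha)-\ln q)+o(n)\big)$. At $\alpha=J/q^2$ this is $q^n$, not $e^{2n\ln q}$.
\item You bound only the probability that the intra-$\tau$ count hits its target, although ``good'' fixes both counts. The two counts are sums over disjoint sets of pairs, hence independent given $\sigma$. Their joint Poisson rate at the uniform point is $\Phi_0=\frac1{2q}\big[a\ln a+(q-1)b\ln b-qd\ln d\big]$, not $\frac1{2q}\big(a\ln\frac ad-a+d\big)$.
\end{enumerate}
With your two choices, the requirement as $a\to0$ is $d/(2q)>2\ln q$, i.e.\ $d>4q\ln q\approx32$ for $q=5$, far above $(q-1)^2=16$. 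Correcting only the count still gives $d>2q\ln q\approx16.09>16$. With both corrections the condition is
\[
\frac{a\ln a+(q-1)b\ln b}{q}-d\ln d>2\ln q .
\]
At $a=0$ this reads $d\ln\frac q{q-1}>2\ln q$. That gives $d>14.43$ for $q=5$, and $2\ln q/\ln\frac q{q-1}<2q\ln q<(q-1)^2$ for $q\ge6$. Take $d$ strictly between this value and $(q-1)^2$, then $a>0$ small; continuity gives both $r<1$ and the condition.

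The obstacle you single out is not the real one. The one-parameter family with a concavity argument is unnecessary, and it would not control the full supremum anyway. With near-balanced marginals, both edge-count means depend on $\alpha$ only through $\sum_{s,t}\alpha_{st}^2$. Since $H(\alpha)\le2\ln q$ always, the exponent is at most $\ln q-\Phi\big(\sum_{s,t}\alpha_{st}^2\big)$, with $\Phi$ continuous and $\Phi(1/q^2)=\Phi_0$.

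Now suppose $\overlap(\tau,\sigma)<1/q+\varepsilon$. Averaging $\sum_{s'}\alpha_{s'\pi(s')}$ over the permutations with $\pi(s)=t$ gives $\alpha_{st}\le1/q^2+\varepsilon+o(1)$. The row sums then give $\alpha_{st}\ge1/q^2-(q-1)\varepsilon-o(1)$, so $\sum_{s,t}\alpha_{st}^2\le1/q^2+q^4\varepsilon^2+o(1)$. Hence, once $\Phi_0>\ln q$, a small enough $\varepsilon$ makes the exponent uniformly negative over all low-overlap $\alpha$. The polynomial number of overlap matrices and Markov's inequality then finish Step~2.

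So the only delicate point for small $q$ is the constant at the uniform point. That constant is exactly the \citet{banks2016} condition the paper relies on.
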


\citet{banks2016} establish Fact~\ref{fact:info} in the disassortative case for $q\ge5$ and in the assortative case for $q\ge11$, through explicit bounds on the detection region that are continuous in $(a,b)$, so that the window contains parameters with $a>0$; \citet{mossel2023} prove that the Kesten--Stigum bound is not the information-theoretic threshold for any $q\ge5$. The hypotheses of Facts~\ref{fact:chin} and~\ref{fact:info} are therefore simultaneously satisfiable for every $q\ge5$, and we call the set of such parameters the \emph{hard window}.

\begin{lemma}[Overlap controls risk]
\label{lem:chance}
Let $\sigma\in\Lam$ and let $\hat\sigma$ be a labeling with $\overlap(\hat\sigma,\sigma)\ge1/q+\varepsilon$. Let $u_{ij}=\ind{\hat\sigma_i=\hat\sigma_j}-\bar u$, where $\bar u$ is the mean of $\ind{\hat\sigma_i=\hat\sigma_j}$ over pairs, and $t=wu$ with $w=\varepsilon^2/q$. Then $|t_{ij}|\le1$ and
\begin{equation*}
L(\sigma,t)\le\triv(\sigma)-\frac{\varepsilon^4}{q^2}+O(n^{-1/3}),
\end{equation*}
the $O(n^{-1/3})$ term being uniform in $\sigma\in\Lam$ and $\hat\sigma$.
\end{lemma}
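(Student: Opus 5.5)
The plan is to expand the squared loss around the trivial rule and reduce everything to the confusion matrix of $(\sigma,\hat\sigma)$. Since $t=wu$,
\begin{equation*}
L(\sigma,t)=\triv(\sigma)-\frac{2w}{N}\sum_{i<j}u_{ij}x_{ij}+\frac{w^2}{N}\sum_{i<j}u_{ij}^2 .
\end{equation*}
For the quadratic term, $u_{ij}\in[-1,1]$ because $\bar u\in[0,1]$, so $N^{-1}\sum u_{ij}^2\le1$. Also $|t_{ij}|\le w=\varepsilon^2/q\le1$. It therefore suffices to show the correlation bound
\begin{equation*}
\frac1N\sum_{i<j}u_{ij}x_{ij}\ \ge\ \frac{\varepsilon^2}{q}-O(n^{-1/3}),
\end{equation*}
uniformly in $\sigma\in\Lam$ and $\hat\sigma$. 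With $w=\varepsilon^2/q$ this gives $L\le\triv-2\varepsilon^4/q^2+\varepsilon^4/q^2+O(n^{-1/3})$, which is the claim.

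To prove the correlation bound, I will introduce the confusion matrix $P_{st}=n^{-1}\#\{i:\sigma_i=s,\hat\sigma_i=t\}$, its column sums $c_t=\sum_sP_{st}$, and its row sums $m_s/n$. By the definition of $\Lam$, the row sums satisfy $m_s/n=1/q+O(n^{-1/3})$. Replacing sums over pairs $i<j$ by sums over ordered pairs with $i=j$ allowed changes each normalized sum by $O(1/n)$. Hence
\begin{itemize}[leftmargin=2em]
\item $N^{-1}\sum\ind{\sigma_i=\sigma_j}\ind{\hat\sigma_i=\hat\sigma_j}=\sum_{s,t}P_{st}^2+O(1/n)$;
\item $\bar u=\sum_tc_t^2+O(1/n)$;
\item $N^{-1}\sum\ind{\sigma_i=\sigma_j}=\sum_s(m_s/n)^2+O(1/n)=1/q+O(n^{-1/3})$.
\end{itemize}
Since $\sum_{i<j}u_{ij}=0$ by construction, the constant $-1/q$ in $x_{ij}$ drops out. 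We are left with
\begin{equation*}
\frac1N\sum_{i<j}u_{ij}x_{ij}=\sum_{s,t}P_{st}^2-\frac1q\sum_tc_t^2+O(n^{-1/3}).
\end{equation*}

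The key step, and the only non-routine one, is the algebraic lower bound on this confusion-matrix expression. I will write $P_{st}=c_t/q+e_{st}$, with the deviation chosen so that $\sum_se_{st}=0$ for each $t$. The cross term then vanishes, and $\sum_{s,t}P_{st}^2=\frac1q\sum_tc_t^2+\sum_{s,t}e_{st}^2$, so the expression equals $\sum e_{st}^2$.

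The overlap hypothesis supplies a permutation $\pi$ with $\sum_sP_{s\pi(s)}\ge1/q+\varepsilon$. Because $\pi$ is a bijection, $\sum_sc_{\pi(s)}/q=1/q$, hence $\sum_se_{s\pi(s)}\ge\varepsilon$. Cauchy--Schwarz over the $q$ entries then gives $\sum_{s,t}e_{st}^2\ge\sum_se_{s\pi(s)}^2\ge\varepsilon^2/q$.

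All error terms come either from the diagonal and pair-counting corrections, which are $O(1/n)$, or from the profile deviation, which is $O(n^{-1/3})$. In both cases they multiply quantities bounded by $1$, so they are uniform in $\sigma\in\Lam$ and $\hat\sigma$.
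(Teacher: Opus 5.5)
Your proposal is correct and follows the paper's proof essentially step for step. Both expand the loss around the trivial rule and use $\sum_{i<j}u_{ij}=0$ and $\sigma\in\Lam$ to reduce to $\sum_{s,t}P_{st}^2-\frac1q\sum_t c_t^2$. Both rewrite this as a sum of squared column-centered deviations, keep the entries matched by the optimal permutation, and apply Cauchy--Schwarz. The only differences are cosmetic: you work with $\pi$ instead of renaming the labels, and you spell out the $O(1/n)$ diagonal corrections and the check $|t_{ij}|\le w\le 1$, which the paper leaves implicit.
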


\begin{proof}
Rename the labels of $\hat\sigma$ so that the overlap is attained, which does not change $u$, and let $n_{st}=\#\{i:\sigma_i=s,\hat\sigma_i=t\}$ and $\hat m_t=\sum_sn_{st}$. Since $\sum_{i<j}u_{ij}=0$, the mean over pairs of $u_{ij}x_{ij}$ equals the mean of $u_{ij}\ind{\sigma_i=\sigma_j}$, which, up to $O(n^{-1/3})$ because $\sigma\in\Lam$, is
\begin{align*}
K&=\frac1{n^2}\Big[\sum_{s,t}n_{st}^2-\frac1q\sum_t\hat m_t^2\Big]=\frac1{n^2}\sum_t\sum_s\Big(n_{st}-\frac{\hat m_t}q\Big)^2\\
&\ge\ \frac1{n^2}\sum_s\Big(n_{ss}-\frac{\hat m_s}q\Big)^2\ \ge\ \frac1{qn^2}\Big(\sum_sn_{ss}-\frac nq\Big)^2\ \ge\ \frac{\varepsilon^2}q ,
\end{align*}
by the identity $\sum_s(n_{st}-\hat m_t/q)^2=\sum_sn_{st}^2-\hat m_t^2/q$, by keeping the diagonal terms, by Cauchy--Schwarz with $\sum_s\hat m_s=n$, and by $\sum_sn_{ss}\ge(1/q+\varepsilon)n$. Hence $L(\sigma,wu)=w^2\,\mathrm{avg}(u^2)-2wK+\triv(\sigma)\le w^2-2w\varepsilon^2/q+\triv(\sigma)+O(n^{-1/3})$, since $u^2\le1$, and $w=\varepsilon^2/q$ gives the claim.
\end{proof}

\begin{corollary}[The information--computation gap as a minimax statement]
\label{cor:gap}
Fix $q\ge5$ and parameters in the hard window. There is a sequence of sets $G_n\subset\Lab$ with $\pi(G_n)\to1$ such that:
\begin{enumerate}[label=\textup{(\alph*)},leftmargin=2em]
\item every $\delta\in\cP_D$ with $D\le n^{c_0\gamma_0}$ and outputs in $[-1,1]$ satisfies
\begin{equation*}
\sup_{\sigma\in G_n}\risk(\sigma,\delta)\ \ge\ \brisk(\delta)\ -\ o(1)\ \ge\ \vq\,\Big(1-\frac{C^2q}{n}\Big)-o(1);
\end{equation*}
\item there is a (randomized, exponential-time) rule $\delta_{\mathrm{slow}}$ with
\begin{equation*}
\risk(\sigma,\delta_{\mathrm{slow}})\ \le\ \vq-\frac{\varepsilon^4}{q^2}+o(1)\qquad\text{for every }\sigma\in G_n ,
\end{equation*}
and consequently $\brisk(\delta^*)\le\brisk(\delta_{\mathrm{slow}})\le\vq-\varepsilon^4/q^2+o(1)$.
\end{enumerate}
In words: in the hard window no low-degree rule has worst-case risk below the trivial risk $\vq-o(1)$, so the minimax risk of the low-degree class converges to $\vq$, while an exponential-time rule beats $\vq$ by the fixed margin $\varepsilon^4/q^2$ uniformly over a set of labelings of probability tending to one, and the Bayes risk of the posterior mean is smaller than $\vq$ by the same margin.
\end{corollary}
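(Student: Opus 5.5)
The plan is to take $G_n=\Lam$, the set of labelings with typical profile from \eqref{eq:typical}, and to prove (a) and (b) separately. The key ingredients are Theorem~\ref{thm:lowdeg} together with Fact~\ref{fact:chin} for (a), and Fact~\ref{fact:info}, Theorem~\ref{thm:worst}(b) and Lemma~\ref{lem:chance} for (b).

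For (a), fix $\delta\in\cP_D$ with outputs in $[-1,1]$. The second inequality is immediate. Theorem~\ref{thm:lowdeg} gives $\brisk(\delta)\ge\vq(1-\Corr_D^2)$, and Fact~\ref{fact:chin} bounds $\Corr_D^2\le C^2q/n$. We need that $r<1$ in the hard window, so $r\le1-\eta$ for some $\eta$, and that $a\ge c_1b$ for some $c_1$; both hold for fixed parameters with $a>0$.

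For the first inequality, write $\brisk(\delta)=\E_\pi\risk(\sigma,\delta)$ and split it over $\Lam$ and its complement. Since the outputs lie in $[-1,1]$ and $|x_{ij}|\le1$, the loss is bounded by $4$. Hence
\begin{equation*}
\brisk(\delta)\le\sup_{\Lam}\risk(\cdot,\delta)+4\,\pi(\Lam^c),
\end{equation*}
and $\pi(\Lam^c)=o(1)$ by Chebyshev, uniformly in $\delta$. This is the reason for the boundedness hypothesis.

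For (b), start from Fact~\ref{fact:info}. It gives an exponential-time algorithm $\hat\sigma(A)$ with $\overlap\ge1/q+\varepsilon$ with probability $1-o(1)$ under the joint law $\mathrm{SBM}$. Define $\delta_0(A)=w\,u(\hat\sigma(A))$ as in Lemma~\ref{lem:chance}, and let $\delta_{\mathrm{slow}}$ be its randomly relabeled version from Theorem~\ref{thm:worst}(b). Then $\risk(\sigma,\delta_{\mathrm{slow}})=\brisk(\pi_{m(\sigma)},\delta_0)$.

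The main obstacle is converting the average-case success guarantee, which is only under $\pi$, into a guarantee on every profile in $G_n$. Let $p_m$ be the failure probability of $\hat\sigma$ under $\pi_m$. Then $\sum_m\pi(\Theta_m)p_m=o(1)$, so by Markov the set $\mathcal M$ of profiles with $p_m\le\sqrt{o(1)}$ carries $\pi$-mass $1-o(1)$. I would therefore redefine
\begin{equation*}
G_n=\Lam\cap\{\sigma: m(\sigma)\in\mathcal M\},
\end{equation*}
which still satisfies $\pi(G_n)\to1$. Part (a) is unaffected, since the same splitting argument works for any set of mass $1-o(1)$.

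Now take $\sigma\in G_n$ and $\sigma'\sim\pi_{m(\sigma)}$; note $\sigma'\in\Lam$ because $\Lam$ depends only on the profile. On the success event, Lemma~\ref{lem:chance} gives
\begin{equation*}
L(\sigma',\delta_0)\le\triv(\sigma')-\varepsilon^4/q^2+O(n^{-1/3}),\qquad \triv(\sigma')=\vq+O(n^{-1/3})\ \text{on }\Lam.
\end{equation*}
On the failure event, which has probability $o(1)$, the loss is at most a constant, because $|t_{ij}|\le1$. Averaging gives $\risk(\sigma,\delta_{\mathrm{slow}})\le\vq-\varepsilon^4/q^2+o(1)$.

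Finally, for the consequence: $\delta^*$ is Bayes for $\pi$ by Theorem~\ref{thm:bayes}, so $\brisk(\delta^*)\le\brisk(\delta_{\mathrm{slow}})$. The randomized rule is covered either by convexity of the loss or by applying Bayes optimality conditionally on $g$. Splitting $\brisk(\delta_{\mathrm{slow}})$ over $G_n$ and its complement, with the loss bounded, gives $\brisk(\delta_{\mathrm{slow}})\le\vq-\varepsilon^4/q^2+o(1)$.
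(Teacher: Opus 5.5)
Your proposal is correct and follows the paper's proof essentially step for step. The paper makes the same final choice $G_n=\Lam\cap\{\sigma:p_{m(\sigma)}\le\sqrt{p_n}\}$ (with $p_n$ the overall failure probability) via Markov's inequality on the profile-wise failure probabilities. It also uses the same bounded-loss splitting $\brisk(\delta)\le\sup_{G_n}\risk(\cdot,\delta)+4\pi(G_n^c)$ together with Theorem~\ref{thm:lowdeg} and Fact~\ref{fact:chin} for (a), and the same application of Lemma~\ref{lem:chance} to the randomly relabeled output of the algorithm of Fact~\ref{fact:info} for (b). Your remark that the Bayes optimality of $\delta^*$ extends to the randomized $\delta_{\mathrm{slow}}$ by conditioning on $g$ makes explicit a step the paper leaves implicit.
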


\begin{proof}
Let $F$ be the event that the algorithm of Fact~\ref{fact:info} fails to reach overlap $1/q+\varepsilon$, so that $p_n=\Prob(F)\to0$, and for a profile $m$ let $p_n(m)=\Prob(F\mid\sigma\in\Theta_m)$. Set $G_n=\Lam\cap\{\sigma:p_n(m(\sigma))\le\sqrt{p_n}\}$; since $p_n=\E_\pi[p_n(m(\sigma))]$, Markov's inequality gives $\pi(G_n^c)\le\pi(\Lam^c)+\sqrt{p_n}\to0$. (a) For $\delta$ with outputs in $[-1,1]$ the loss is at most $4$, so $\brisk(\delta)=\E_\pi\risk(\sigma,\delta)\le\sup_{G_n}\risk(\cdot,\delta)+4\pi(G_n^c)$, and Theorem~\ref{thm:lowdeg} with Fact~\ref{fact:chin} bounds $\brisk(\delta)$ from below. (b) Let $\delta_1$ be the rule of Lemma~\ref{lem:chance} applied to the output of the algorithm of Fact~\ref{fact:info}, and $\delta_{\mathrm{slow}}$ its random relabeling, Theorem~\ref{thm:worst}(b), whose risk at $\sigma$ is the Bayes risk of $\delta_1$ under $\pi_{m(\sigma)}$. For $\sigma\in G_n$ the loss of $\delta_1$ is at most $\triv-\varepsilon^4/q^2+O(n^{-1/3})$ on $F^c$, with $\triv=\vq+O(n^{-1/3})$ on $\Lam$, and at most $4$ on $F$; averaging under $\pi_{m(\sigma)}$ gives $\risk(\sigma,\delta_{\mathrm{slow}})\le\vq-\varepsilon^4/q^2+4\sqrt{p_n}+O(n^{-1/3})$. The last claim follows since $\delta^*$ minimizes $\brisk$ and $\brisk(\delta_{\mathrm{slow}})\le\sup_{G_n}\risk(\cdot,\delta_{\mathrm{slow}})+4\pi(G_n^c)$.
\end{proof}

\subsection{The choice of loss}
\label{sec:lossremarks}

The pair loss \eqref{eq:loss} is the natural convex loss for recovery up to renaming. The exhaustive search of Section~\ref{sec:bruteforce} is optimal for a different loss.

\begin{proposition}[Maximum likelihood is Bayes for the all-or-nothing loss]
\label{prop:map}
Under the loss $\ind{\hat\sigma\ne\sigma}$ on labelings, the maximum-likelihood labeling is the Bayes rule for $\pi$; under the loss $\ind{\text{the partition induced by $\hat\sigma$ differs from that of $\sigma$}}$ restricted to partitions into $q$ nonempty classes, the maximum-likelihood partition is the Bayes rule. With ties broken uniformly at random the rule is equivariant, has constant risk on every orbit, and is admissible. Under the pair loss \eqref{eq:loss} the Bayes rule is the posterior mean \eqref{eq:postmean}.
\end{proposition}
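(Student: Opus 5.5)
The plan is to treat each of the three loss functions with the same elementary principle used in the proof of Theorem~\ref{thm:bayes}: a rule that minimizes the posterior expected loss separately for every graph $A$ is Bayes. The symmetry statements then follow from equivariance, exactly as in Lemma~\ref{lem:sym}.

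For the all-or-nothing loss on labelings, the posterior expected loss of an action $\hat\sigma$ is $1-\pi(\hat\sigma\mid A)$. Because $\pi$ is uniform on $\Lab$, the posterior is proportional to $P_\sigma(A)$, so the minimizers are exactly the maximum-likelihood labelings. For the partition loss, I work with the induced prior on partitions into $q$ nonempty classes. Each such partition corresponds to exactly $q!$ labelings, all with the same likelihood, since $P_\sigma$ depends on $\sigma$ only through its partition. The induced prior is therefore uniform on these partitions, and the posterior probability of a partition is proportional to its likelihood, so the maximum-likelihood partition is Bayes. One subtlety must be stated explicitly: under $\pi$, labelings with empty classes carry mass, and the claim as written is "restricted to" nonempty partitions. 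I will read this as the prior $\pi$ conditioned on all $q$ classes being nonempty. That conditioned prior is still uniform and still relabeling invariant, so the argument goes through unchanged. The pair-loss claim is just Theorem~\ref{thm:bayes}(i), already proved.

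For equivariance, let $M(A)$ be the set of maximizers. Since $P_{g\sigma}(gA)=P_\sigma(A)$ by \eqref{eq:symmetry}, we have $M(gA)=gM(A)$. The rule draws uniformly from $M(A)$, so the law of its output at $gA$ is the image under $g$ of its law at $A$; in other words, the randomized rule is equivariant in distribution. Both losses are relabeling invariant, so the computation at the end of the proof of Lemma~\ref{lem:sym} gives constant risk on each orbit $\Theta_m$.

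Admissibility is the step I expect to need the most care. The rule is randomized, but it is still Bayes, because it only mixes over minimizers of the posterior expected loss. The parameter set is finite, the prior charges every point (here I use the assumption $a,b>0$, so every likelihood is positive), and the classical fact quoted in the preliminaries then gives admissibility. Concretely, suppose $\delta'$ dominates the rule. Then $\brisk(\pi,\delta')<\brisk(\pi,\delta)$, because $\pi$ puts positive mass on the labeling where the inequality is strict, and this contradicts Bayes optimality. For partitions the same argument runs on the finite set of $q$-partitions under the uniform prior.
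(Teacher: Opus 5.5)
Your proposal is correct and follows the paper's proof essentially step by step. The shared steps are: the posterior expected loss $1-\Prob(\sigma=\hat\sigma\mid A)$ is minimized by the maximum-likelihood labeling under the uniform prior; the $q!$-to-one correspondence between labelings and $q$-partitions handles the partition loss; uniform tie-breaking gives equivariance and constant risk on orbits; and a Bayes rule for a prior charging every point of a finite parameter set is admissible. The only superfluous element is your appeal to $a,b>0$ in the admissibility step: $\pi$ charges every labeling whatever the likelihoods are, so that step holds without it.
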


\begin{proof}
Reporting $\hat\sigma$ has posterior expected loss $1-\Prob(\sigma=\hat\sigma\mid A)$, minimized by the posterior mode, which is the maximum-likelihood labeling because $\pi$ is uniform on $\Lab$; a partition into $q$ nonempty classes is induced by exactly $q!$ labelings, so the posterior mode over such partitions is the maximum-likelihood partition. Random tie-breaking makes the rule equivariant, hence of constant risk on orbits by Lemma~\ref{lem:sym}; a Bayes rule for a prior charging every point of a finite parameter set is admissible. The last sentence is Theorem~\ref{thm:bayes}(i).
\end{proof}

\begin{lemma}[Clipping]
\label{lem:clip}
Since $x_{ij}\in[-1/q,1-1/q]$, replacing each output $t_{ij}$ of a rule by its projection onto $[-1/q,1-1/q]$ does not increase the risk at any $\sigma$, and decreases it strictly whenever the rule leaves the interval with positive probability.
\end{lemma}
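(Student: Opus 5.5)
The argument is pointwise and reduces to a one-dimensional fact about projections onto an interval; the risk claims then follow by taking expectations. Write $I=[-1/q,\,1-1/q]$ and let $\Pi(s)$ denote the projection of a real number $s$ onto $I$, that is, $\Pi(s)=\min(\max(s,-1/q),1-1/q)$. The clipped rule is $\delta^{c}(A)_{ij}=\Pi(\delta(A)_{ij})$. Since $x_{ij}(\sigma)\in I$ for every $\sigma$, it suffices to compare $(\Pi(s)-x)^2$ with $(s-x)^2$ for $x\in I$ and $s\in\R$.

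\textbf{Pointwise inequality.} If $s\in I$ the two quantities are equal. If $s>1-1/q$, then $\Pi(s)=1-1/q$ and $x\le\Pi(s)<s$, so
\[
0\le\Pi(s)-x<s-x,
\]
and therefore $(\Pi(s)-x)^2<(s-x)^2$. The case $s<-1/q$ is symmetric. Hence $(\Pi(s)-x)^2\le(s-x)^2$, with strict inequality exactly when $s\notin I$.

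\textbf{Summing over pairs.} Averaging the pointwise inequality over the $N$ pairs gives
\[
L(\sigma,\delta^{c}(A))\le L(\sigma,\delta(A))
\]
for every $A$ and every $\sigma$. Taking $\E_\sigma$ then gives $\risk(\sigma,\delta^{c})\le\risk(\sigma,\delta)$. If the rule is randomized, the same inequality holds for every realization of the auxiliary randomness, so nothing changes.

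\textbf{Strictness.} Suppose that under $P_\sigma$ some coordinate $\delta(A)_{ij}$ leaves $I$ with positive probability. Then the nonnegative random variable
\[
\Delta=L(\sigma,\delta(A))-L(\sigma,\delta^{c}(A))
\]
is strictly positive on an event of positive probability, and hence $\E_\sigma\Delta>0$. One point needs care: the expectation of $\Delta$ must be well defined. If $\risk(\sigma,\delta)=\infty$ the strict inequality is trivial, because $\delta^{c}$ takes values in $I$ and so has finite risk. Otherwise both risks are finite and the subtraction is legitimate. I expect this to be the only delicate step, and it is a minor one.

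The statement reads ``whenever the rule leaves the interval with positive probability'', and this should be interpreted at each $\sigma$: strictness holds at those $\sigma$ for which $P_\sigma(\delta(A)\notin I^N)>0$. When $a,b>0$, every graph has positive probability under every $P_\sigma$, so leaving $I$ with positive probability under one $\sigma$ is the same as leaving it under all of them.
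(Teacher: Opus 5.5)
Your proof is correct and follows the paper's argument: projection onto an interval containing $x_{ij}$ does not increase $|t_{ij}-x_{ij}|$ and strictly decreases it when $t_{ij}$ lies outside, so averaging over pairs and taking $\E_\sigma$ gives both claims. Your added care about finiteness of the risk and about reading ``with positive probability'' at each $\sigma$ only makes explicit what the paper's one-line proof leaves implicit.
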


\begin{proof}
Projection onto an interval containing $x$ does not increase $|t-x|$, and decreases it when $t$ lies outside.
\end{proof}

A clipped polynomial is not, in general, a polynomial of degree at most $D$, so every low-degree rule that leaves the interval with positive probability is dominated by a rule outside $\cP_D$; in the exact computations of Section~\ref{sec:expexact} a least-squares quadratic rule leaves the interval for $37\%$ of its outputs and clipping reduces its risk by $10\%$.

\section{Belief propagation and empirical Bayes}
\label{sec:bp}

\subsection{Linear stability of the uninformative fixed point}

Belief propagation \citep{decelle2011} maintains, for every directed edge $i\to j$ and every label $t\in[q]$, a message $\psi(i\to j,t)$, the estimated probability that $\sigma_i=t$ computed without the information coming from $j$. One iteration is
\begin{equation}
\psi^{\mathrm{new}}(i\to j,t)=\frac{1}{Z_{i\to j}}\,e^{-h_t}\prod_{k\in\partial i\setminus j}\big[\,b+(a-b)\,\psi(k\to i,t)\,\big],\qquad
h_t=\frac1n\sum_{k=1}^n\big[\,b+(a-b)\,\psi_k(t)\,\big],
\label{eq:bp}
\end{equation}
where $\partial i\setminus j$ is the set of neighbors of $i$ other than $j$, $Z_{i\to j}$ normalizes the message to a probability vector, $\psi_k$ is the current marginal of vertex $k$, and the field $h_t$ accounts for the non-edges. The uninformative state $\psi\equiv1/q$ is a fixed point of \eqref{eq:bp} for every graph, since every bracket then equals $b+(a-b)/q=d$.

\begin{theorem}[Linear stability]
\label{thm:bp}
Write $\psi(k\to i,t)=1/q+e(k\to i,t)$ with $\sum_te(k\to i,t)=0$ for the messages, $\psi_k(t)=1/q+\epsilon_k(t)$ for the marginals, and let $\bar e(t)=n^{-1}\sum_k\epsilon_k(t)$ be the mean perturbation of the marginals, which is the quantity that enters the field $h_t$. To first order in the perturbations, one iteration of \eqref{eq:bp} acts as
\begin{equation}
e^{\mathrm{new}}(i\to j,t)=\lambda\Big[\sum_{k\in\partial i\setminus j}e(k\to i,t)-d\,\bar e(t)\Big].
\label{eq:bplin}
\end{equation}
Thus the derivative of the belief-propagation map at the uninformative fixed point is $\lambda$ times the non-backtracking operator $B$ of the graph, followed by the subtraction of one and the same vector from every message.
\begin{enumerate}[label=(\alph*),leftmargin=2em]
\item The subtraction acts on the uniform perturbation, which is the leading eigenvector of $B$, with eigenvalue $d$: on average over the degrees of a sparse random graph, a perturbation that is the same on every directed edge is annihilated, to first order, within two iterations, whereas without the field it would be multiplied by $|\lambda|d$ per iteration. On perturbations whose marginals have mean zero over the graph the map is $\lambda B$.
\item If the incoming perturbations are independent with mean zero and variance $v$, the expected variance of the outgoing perturbation on a sparse random graph is $d\lambda^2v$: the norm of a random perturbation is multiplied by $|\lambda|\sqrt d$ per iteration.
\end{enumerate}
\end{theorem}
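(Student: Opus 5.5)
I will linearize the update \eqref{eq:bp} directly around $\psi\equiv1/q$. Each bracket is $b+(a-b)(1/q+e(k\to i,t))=d+(a-b)e(k\to i,t)=d\,[1+q\lambda\,e(k\to i,t)]$, using $a-b=qd\lambda$. Similarly the field is $h_t=d+(a-b)\bar e(t)=d+qd\lambda\,\bar e(t)$. To first order the unnormalized message is therefore proportional to
\begin{equation*}
\exp\Big(-qd\lambda\,\bar e(t)+q\lambda\sum_{k\in\partial i\setminus j}e(k\to i,t)\Big).
\end{equation*}
The $t$-independent factors $e^{-d}d^{|\partial i|-1}$ are absorbed into $Z_{i\to j}$.

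Normalizing $\psi=e^{u_t}/\sum_s e^{u_s}$ with $u_t$ small gives $\psi(t)=1/q+(1/q)(u_t-\bar u)+O(u^2)$. Because $\sum_t e(k\to i,t)=0$ and $\sum_t\bar e(t)=0$, the exponent already has zero mean over $t$. So $e^{\mathrm{new}}(i\to j,t)=\lambda[\sum_{k\in\partial i\setminus j}e(k\to i,t)-d\,\bar e(t)]$, which is \eqref{eq:bplin}. The first term is $\lambda(Be)(i\to j,t)$ with $B$ the non-backtracking operator. The second is the same vector $-\lambda d\,\bar e$ on every directed edge.

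For (a), take a uniform perturbation $e(k\to i,t)=c(t)$. The marginals are $\epsilon_k(t)\approx\lambda\deg(k)c(t)$ to first order, by the same linearization with all neighbors included. Hence $\bar e\approx\lambda\,\overline{\deg}\,c\approx\lambda d\,c$. One step gives $\lambda(\deg(i)-1)c-\lambda^2d^2c$. Averaged over a sparse random graph, where the excess degree of the endpoint of a random edge has mean $d$, this is $\lambda d(1-\lambda d)c$.

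This alone is not zero, and getting the "annihilated within two iterations" claim is the delicate point. I expect it to work as follows. The field term lags the message update by one step, since $h_t$ uses current marginals. One should therefore track the pair (uniform message component, mean marginal) as a $2\times2$ linear recursion. Its matrix is $\lambda d$ times a nilpotent matrix of the form $\begin{pmatrix}1&-1\\1&-1\end{pmatrix}$, after identifying the mean marginal with $d$ times the uniform message in the average-degree approximation. The square of such a matrix is zero, which would give annihilation in two steps. Without the field the recursion is just multiplication by $\lambda d$, hence $|\lambda|d$ growth. If $\bar e=0$, the subtraction vanishes and the map is $\lambda B$.

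For (b), assume the incoming $e(k\to i,t)$ are independent, mean zero, variance $v$, and that the marginal mean term is negligible. Then $\Var e^{\mathrm{new}}(i\to j,t)=\lambda^2\,\E[|\partial i|-1]\,v$. The size-biased excess degree on a sparse random graph (Poisson$(d)$ degrees) has mean $d$, giving $d\lambda^2v$. The norm therefore scales by $|\lambda|\sqrt d$.

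The main obstacle is (a): making precise the average-degree bookkeeping between marginal and message perturbations, so that the two-step cancellation is exact to first order rather than approximate. I would state it at the level of expectations over degrees, as the theorem does.
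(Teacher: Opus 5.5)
Your proposal is correct and is essentially the paper's proof: the same linearization of bracket, field and normalization gives \eqref{eq:bplin}. Part (a) is exactly the paper's recursion $u'=m'=\lambda d\,(u-m)$ for the pair (uniform message perturbation $u$, mean marginal perturbation $m$), which uses mean excess degree $d$ in the message row and mean degree $d$ in the marginal row, and part (b) is the same sum-of-variances count over the $d$ non-backtracking continuations.

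Two points to tidy. First, $u$ and $m$ should be kept as independent coordinates of the state rather than ``identified'' with each other; that independence is what makes the matrix nilpotent. Your one-step value $\lambda d(1-\lambda d)c$ is just this recursion started at $m_0=\lambda d\,c$, and it vanishes at the second step as claimed. Second, in (b) the paper justifies dropping the field term instead of assuming it: for independent mean-zero inputs, $\bar e$ has standard deviation $O(|\lambda|\sqrt{dv/n})$.
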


\begin{proof}
A bracket equals $b+(a-b)(1/q+e)=d+(a-b)e=d(1+q\lambda e)$, since $a-b=q\lambda d$, and the field is $h_t=b+(a-b)(1/q+\bar e(t))=d+q\lambda d\,\bar e(t)$, so that $e^{-h_t}=e^{-d}\big(1-q\lambda d\,\bar e(t)\big)+O(\bar e^2)$. The product over the $K$ neighbors is $d^K\big(1+q\lambda\sum_ke(k\to i,t)\big)+O(e^2)$. The unnormalized message is therefore $e^{-d}d^K\big(1+q\lambda\sum_ke(k\to i,t)-q\lambda d\,\bar e(t)\big)$ up to second-order terms, and its sum over $t$ is $qe^{-d}d^K$ up to second-order terms, because both $e(k\to i,\cdot)$ and $\bar e(\cdot)$ sum to zero over the labels; dividing gives \eqref{eq:bplin}.

For (a), let $e(k\to i,t)=u(t)$ on every directed edge and $\bar e(t)=m(t)$. By \eqref{eq:bplin}, $e^{\mathrm{new}}(i\to j,t)=\lambda(\deg_i-1)u(t)-\lambda d\,m(t)$, and the same computation over the full neighborhood $\partial i$ gives the new marginals $\epsilon^{\mathrm{new}}_i(t)=\lambda\deg_i\,u(t)-\lambda d\,m(t)$. In a sparse random graph the mean degree is $d$ and the mean excess degree, the average of $\deg_i-1$ over directed edges $i\to j$, is $\E[\deg(\deg-1)]/\E\deg=d$ for a Poisson degree law. Averaging, the new messages and the new marginals both have mean perturbation $\lambda d\,(u(t)-m(t))$, so after one iteration $u=m$ and after the next both vanish. Without the field ($m\equiv0$) the uniform perturbation is multiplied by $\lambda d$ at every iteration. Degree fluctuations transfer a part of the uniform perturbation, of the order of the perturbation itself, to the other modes, where (b) applies.

For (b), the field term is the same for all messages and, for independent perturbations with mean zero, has standard deviation $O(\lambda\sqrt{dv/n})$, which is negligible; the variance of a sum of independent terms is the sum of the variances, and a directed edge has on average $d$ outgoing non-backtracking continuations, so the expected new variance is $\lambda^2\cdot d\cdot v$.
\end{proof}

The independence assumption in (b) holds for $o(\log n)$ iterations, while neighborhoods are trees; for longer times the spectral theorem of \citet{bordenave2018} gives the same factor. By that theorem all but $q$ eigenvalues of $B$ have modulus at most $\sqrt d+o(1)$; of the $q$ exceptional ones, the eigenvalue $d$ belongs to the uniform perturbation, which the field neutralizes by (a), and the $q-1$ eigenvalues $d\lambda$ belong to eigenvectors correlated with the communities, on which the map multiplies by $d\lambda^2$ per iteration. Below the threshold $d\lambda^2<|\lambda|\sqrt d$, so the random component (b) decays most slowly and sets the rate; above it the community signal grows faster than the random component and eventually dominates, which is what the growth-factor measurements of Section~\ref{sec:expexact} show (Figure~\ref{fig:checks}B). Theorem~\ref{thm:bp} is the two-basin picture of Section~\ref{sec:bruteforce}: for $d\lambda^2<1$ the uninformative state is attracting, so an algorithm started without information and improving by local updates returns to it; for $d\lambda^2>1$ it is repelling; in the hard window a second, informative fixed point coexists with it (Section~\ref{sec:expq5} exhibits it by initializing at the truth), and the algorithm starts in the wrong basin. The statistical problem is convex in the action and has a unique optimal rule (Theorem~\ref{thm:bayes}); the search over labelings is not convex, and the landscape of the algorithm has more than one basin.

\begin{corollary}[Time to return to the uninformative state]
\label{cor:rate}
With damping $\alpha\in(0,1]$ (new message $=\alpha\times$update $+(1-\alpha)\times$old) the linearized map is $(1-\alpha)I+\alpha\lambda B$ on perturbations whose marginals have mean zero, the uniform component being removed by the field (Theorem~\ref{thm:bp}(a)). Since the bulk spectrum of $B$ lies in the disk of radius $\sqrt d+o(1)$ \citep{bordenave2018} and the community modes decay faster when $d\lambda^2<1$, the asymptotic per-iteration contraction factor of a random perturbation is $f=1-\alpha+\alpha|\lambda|\sqrt d$, and a perturbation of initial size $s_0$ falls below a tolerance $\tau$ after about $\ln(s_0/\tau)/\ln(1/f)$ iterations, a number that diverges as $d\lambda^2\uparrow1$.
\end{corollary}

\subsection{Empirical Bayes: one EM step at the fixed point}
\label{sec:em}

In practice $a$ and $b$ are unknown hyperparameters. The empirical Bayes approach estimates them from the marginal likelihood $\mathcal L(a,b)=\sum_\sigma\pi(\sigma)P_\sigma(A;a,b)$ and plugs the estimates into the posterior. Maximization of $\mathcal L$ is carried out by the EM algorithm with belief propagation in the E-step \citep{decelle2011}: with the current $(a,b)$, iterate \eqref{eq:bp} to convergence; for each edge $e=\{i,j\}$ compute the two-point belief and the posterior probability that $e$ lies inside a community,
\begin{equation}
s_e=\sum_{t}\psi(i\to j,t)\,\psi(j\to i,t),\qquad p_e=\frac{a\,s_e}{a\,s_e+b\,(1-s_e)} ;
\label{eq:em}
\end{equation}
then update
\begin{equation}
a^{\mathrm{new}}=\frac{2q}{n}\sum_ep_e,\qquad b^{\mathrm{new}}=\frac{2q}{n(q-1)}\sum_e(1-p_e),
\label{eq:em2}
\end{equation}
the sums running over the edge set $E$.

\begin{proposition}[EM at the uninformative fixed point]
\label{prop:em}
At the uninformative fixed point one step of \eqref{eq:em}--\eqref{eq:em2} sets $d$ to the observed mean degree $2\nE/n$ and leaves $\lambda$ unchanged, for every graph.
\end{proposition}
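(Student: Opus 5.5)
At the uninformative fixed point every message equals $1/q$, so the two-point belief of every edge is $s_e=\sum_t(1/q)^2=1/q$, independently of the edge and of the graph. The plan is to substitute this into \eqref{eq:em}--\eqref{eq:em2}, compute the updated pair $(a^{\mathrm{new}},b^{\mathrm{new}})$ in closed form, and then map it to $(d,\lambda)$ through Definition~\ref{def:numbers}.

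First, $s_e=1/q$ gives
\begin{equation*}
p_e=\frac{a/q}{a/q+b(q-1)/q}=\frac{a}{a+(q-1)b}=\frac{1+(q-1)\lambda}{q},
\end{equation*}
using $a=d(1+(q-1)\lambda)$ and $a+(q-1)b=qd$. This is the same number $p$ for every edge. Hence, with $\nE$ the number of edges, the update \eqref{eq:em2} reads $a^{\mathrm{new}}=2q\nE p/n$ and $b^{\mathrm{new}}=2q\nE(1-p)/(n(q-1))$.

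Second, compute the new model numbers. The average degree is
\begin{equation*}
d^{\mathrm{new}}=\frac{a^{\mathrm{new}}+(q-1)b^{\mathrm{new}}}{q}=\frac{2\nE}{n}\big(p+(1-p)\big)=\frac{2\nE}{n},
\end{equation*}
which is the observed mean degree. For the signal strength,
\begin{equation*}
\lambda^{\mathrm{new}}=\frac{a^{\mathrm{new}}-b^{\mathrm{new}}}{a^{\mathrm{new}}+(q-1)b^{\mathrm{new}}}=\frac{p-(1-p)/(q-1)}{1}=\frac{qp-1}{q-1}.
\end{equation*}
Substituting $qp=1+(q-1)\lambda$ gives $\lambda^{\mathrm{new}}=\lambda$, for every graph with at least one edge.

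There is no real obstacle. The only care needed is a convention issue: the edge sums in \eqref{eq:em2} run over undirected edges, so that $2\nE$ matches $\sum_i\deg_i$, and the common factor $2\nE/n$ must cancel in the ratio defining $\lambda$. Should the graph have no edges, $\lambda^{\mathrm{new}}$ is undefined, and I would state that case as excluded or trivial.
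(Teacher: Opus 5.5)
Your proof is correct and is the paper's argument: you substitute $s_e=1/q$ to get the edge-independent value $p_e=a/(qd)$, then read off $d^{\mathrm{new}}=2\nE/n$ and $\lambda^{\mathrm{new}}=\lambda$ from \eqref{eq:em2}. The paper instead writes $a^{\mathrm{new}}=(2\nE/n)\,a/d$ and $b^{\mathrm{new}}=(2\nE/n)\,b/d$ and invokes preservation of $a/b$, whereas you compute $\lambda^{\mathrm{new}}=(qp-1)/(q-1)$ directly; this is the same calculation, and your caveat about the edgeless graph is a fair footnote that the paper omits.
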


\begin{proof}
With $\psi\equiv1/q$, $s_e=1/q$ and $p_e=(a/q)/(a/q+b(q-1)/q)=a/(qd)$ for every edge; hence $\sum_ep_e=\nE\,a/(qd)$ and $\sum_e(1-p_e)=\nE\,(q-1)b/(qd)$, so $a^{\mathrm{new}}=(2\nE/n)\,a/d$ and $b^{\mathrm{new}}=(2\nE/n)\,b/d$. The ratio $a/b$, and with it $\lambda$, is preserved, while $d^{\mathrm{new}}=(a^{\mathrm{new}}+(q-1)b^{\mathrm{new}})/q=2\nE/n$.
\end{proof}

Combined with Theorem~\ref{thm:bp}, Proposition~\ref{prop:em} describes a self-confirming failure: if the initial guess $\lambda_0$ satisfies $d\lambda_0^2<1$, with $d$ the true average degree (it is the graph, not the guess, that supplies the factor $\sqrt d$), belief propagation remains at the uninformative fixed point and $\lambda_0$ is never revised. This is a statement about the algorithm; the corresponding statistical statement is one of identifiability of the mixture $\mathcal L$: for $q=2$, \citet{mossel2015} proved that below the threshold $\mathrm{SBM}(n,2,a,b)$ is contiguous to $\mathrm{SBM}(n,2,d,d)$, so that $\lambda$ cannot be estimated consistently by any method. The experiments of Section~\ref{sec:expexact} show the same phenomenon for $q=3$ and show that an optimistic initialization converges to $d\lambda^2\approx1.06$--$1.09$ whenever the truth is below the threshold.

\section{Fisher information: the threshold as a radius of convergence}
\label{sec:fisher}

The Cram\'er--Rao bound, the theory of unbiased estimation and Karlin's theorem apply to every estimator, efficient or not. They separate the efficient from the inefficient once the Fisher information is computed \emph{within a class of statistics}, and this section carries out that computation at three levels: the complete data (labels known), the observed graph, and the class of functions of short cycle counts. Throughout this section we condition on the community profile, which is independent of $(a,b)$ and therefore carries no information about $\lambda$, and take it balanced for definiteness; ``for large $n$'' means that factors $1-a/n$ and $1-b/n$ are replaced by $1$. Both signs of $\lambda$ are allowed. We assume $a,b>0$, so that $\lambda$ ranges over the open interval $(-1/(q-1),1)$, and the statements below cover the assortative and the disassortative case together through the constant
\begin{equation}
c_\lambda=1-(q-1)\max(0,-\lambda)^3,\qquad c_\lambda\ \le\ 1+(q-1)\lambda^k\ \le\ q\quad(k\ge3),
\label{eq:clambda}
\end{equation}
where the inequalities use $|\lambda|\le1$. Thus $c_\lambda=1$ when $\lambda>0$, $c_\lambda>0$ throughout the range, and $c_\lambda\ge1-(q-1)^{-2}\ge3/4$ when $q\ge3$. We write
\begin{equation*}
\pbar=\frac{a}{qd}=\frac{1+(q-1)\lambda}{q},\qquad
N_{\mathrm{in}}=\frac{n(n/q-1)}{2},\qquad N_{\mathrm{out}}=\frac{n^2(q-1)}{2q},\qquad
\nE=\Ein+\Eout,
\end{equation*}
for the probability that an edge lies inside a community, the numbers of intra- and inter-community vertex pairs, and the numbers of intra-community, inter-community and total edges.

\subsection{Complete data}
\label{sec:known}

Given the labels, $\Ein\sim\mathrm{Bin}(N_{\mathrm{in}},a/n)$ and $\Eout\sim\mathrm{Bin}(N_{\mathrm{out}},b/n)$ are independent and sufficient. The log-likelihood is
\begin{equation*}
\ell=\Ein\ln\tfrac an+(N_{\mathrm{in}}-\Ein)\ln(1-\tfrac an)+\Eout\ln\tfrac bn+(N_{\mathrm{out}}-\Eout)\ln(1-\tfrac bn),
\end{equation*}
with score $\partial_a\ell=(n\,\Ein-a\,N_{\mathrm{in}})/(a(n-a))$ and information
\begin{equation}
I(a)=\frac{n^2\Var(\Ein)}{a^2(n-a)^2}=\frac{N_{\mathrm{in}}}{a(n-a)}\approx\frac{n}{2qa}.
\label{eq:Ia}
\end{equation}
The estimator $\hat a=n\,\Ein/N_{\mathrm{in}}$ is unbiased with $\Var(\hat a)=a(n-a)/N_{\mathrm{in}}=1/I(a)$, so it attains the Cram\'er--Rao bound. It is the UMVUE, because $\Ein$ is complete and sufficient (binomial family) and $\hat a$ is an unbiased function of it; it is the BLUE, because an unbiased linear estimator $\sum c_{ij}A_{ij}$ over intra-community pairs requires $\sum c_{ij}=n$, has variance $(a/n)(1-a/n)\sum c_{ij}^2$, and $\sum c_{ij}^2\ge(\sum c_{ij})^2/N_{\mathrm{in}}$ with equality only for equal weights; and it is the maximum-likelihood estimator, since the score vanishes at $\hat a$. The same holds for $\hat b=n\,\Eout/N_{\mathrm{out}}$, with $I(b)\approx n(q-1)/(2qb)$.

For the parameters $(d,\lambda)$ we have $\partial_\lambda a=d(q-1)$, $\partial_\lambda b=-d$, $\partial_da=a/d$, $\partial_db=b/d$, and for large $n$ the scores are
\begin{equation}
S_\lambda=d(q-1)\frac{\Ein-\E \Ein}{a}-d\,\frac{\Eout-\E \Eout}{b},\qquad S_d=\frac{\nE-\E \nE}{d}.
\label{eq:scores}
\end{equation}
With $\Var(\Ein)\approx na/(2q)$ and $\Var(\Eout)\approx nb(q-1)/(2q)$,
\begin{equation*}
\Cov(S_d,S_\lambda)=\frac1d\Big[d(q-1)\frac{\Var(\Ein)}a-d\,\frac{\Var(\Eout)}b\Big]=\frac{(q-1)n}{2q}-\frac{(q-1)n}{2q}=0 ,
\end{equation*}
so $d$ and $\lambda$ are orthogonal parameters: ignorance of $d$ costs nothing in the estimation of $\lambda$, and $d$ is estimated by the mean degree. The complete-data information about $\lambda$ is
\begin{equation}
I_c(\lambda)=\Var(S_\lambda)=\frac{nd^2(q-1)}{2q}\Big[\frac{q-1}a+\frac1b\Big]=\frac{nd^3(q-1)}{2ab}=\frac{n\,d\,(q-1)}{2\,(1+(q-1)\lambda)(1-\lambda)},
\label{eq:Ic}
\end{equation}
using $(q-1)b+a=qd$ and $ab=d^2(1+(q-1)\lambda)(1-\lambda)$. For $q=2$ this is $nd/(2(1-\lambda^2))$: each of the $nd/2$ edges is an intra-community edge with probability $(1+\lambda)/2$, and a Bernoulli observation with this parameter carries information $1/(1-\lambda^2)$ about $\lambda$. For the running example, $(1+\frac87)(\frac57)=\frac{75}{49}$ and $I_c=1.829\,n$, so that with known labels $\lambda$ could be estimated with standard error $0.739/\sqrt n$, equal to $0.0074$ for $n=10^4$.

\subsection{The observed graph: score and missing information}
\label{sec:unknown}

The law of the graph is the mixture $P_\lambda(A)=\sum_\sigma\pi_{\mathrm{bal}}(\sigma)P_\sigma(A)$. Differentiating its logarithm,
\begin{equation}
S(A)=\partial_\lambda\ln P_\lambda(A)=\sum_\sigma\frac{\pi_{\mathrm{bal}}(\sigma)P_\sigma(A)}{P_\lambda(A)}\,\partial_\lambda\ln P_\sigma(A)=\E[\,S_\lambda\mid A\,],
\label{eq:fisheridentity}
\end{equation}
the complete-data score averaged over the posterior (Fisher's identity, \citealp{louis1982}). Substituting $\Eout=\nE-\Ein$ in \eqref{eq:scores}, the two centering constants cancel for large $n$ (both equal $d(q-1)n/(2q)$) and
\begin{equation}
S_\lambda=\kappa\,(\Ein-\pbar\,\nE),\qquad \kappa=\frac{d(q-1)}a+\frac db=\frac{qd^2}{ab},\qquad \kappa\pbar=\frac db .
\label{eq:scoreedges}
\end{equation}
Given $A$ the total $\nE$ is known and $\E[\Ein\mid A]=\sum_ep_e$, where $p_e$ is the posterior probability that edge $e$ is an intra-community edge. Hence
\begin{equation}
S(A)=\kappa\sum_e\big(p_e-\pbar\big):
\label{eq:scoregraph}
\end{equation}
the score of the graph is the sum over edges of the excess posterior probability of being an intra-community edge. By the variance decomposition $\Var X=\Var\E[X\mid A]+\E\Var[X\mid A]$ applied to $S_\lambda$,
\begin{equation}
I_{\mathrm{graph}}(\lambda)=I_c(\lambda)-\kappa^2\,\E\big[\Var(\Ein\mid A)\big]=I_c(\lambda)\,(1-V),\qquad
V=\frac{\E[\Var(\Ein\mid A)]}{\Var(\Ein-\pbar\,\nE)}\in[0,1],
\label{eq:missing}
\end{equation}
the second term being the missing information in the sense of \citet{louis1982}. $V=0$ when the graph determines the intra-community edges and $V=1$ when it is uninformative about them. Belief propagation substitutes its own estimate for $p_e$; at the uninformative fixed point $p_e=\pbar$ for every edge (proof of Proposition~\ref{prop:em}), so the algorithm's version of \eqref{eq:scoregraph} vanishes identically: the algorithm behaves as if $V=1$ regardless of the true value of $V$.

\subsection{A Cram\'er--Rao bound within a class of statistics}
\label{sec:rcr}

Let $\cT$ be a linear space of statistics of $A$ containing the constants, and let $S_\cT$ be the $L^2$-projection of the score $S$ onto $\cT$, characterized by $\E[U(S-S_\cT)]=0$ for all $U\in\cT$. Define the information of the class as $I_\cT(\lambda)=\Var(S_\cT)$.

\begin{theorem}[Restricted Cram\'er--Rao bound]
\label{thm:rcr}
For every $T\in\cT$ with $g(\lambda)=\E_\lambda T$,
\begin{equation*}
\Var_\lambda(T)\ \ge\ \frac{g'(\lambda)^2}{I_\cT(\lambda)},\qquad\text{and}\qquad I_\cT(\lambda)\le I_{\mathrm{graph}}(\lambda).
\end{equation*}
Equality in the first inequality holds if and only if $T$ is an affine function of $S_\cT$.
\end{theorem}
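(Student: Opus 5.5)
The plan is to imitate the classical proof of the Cram\'er--Rao bound, with the score $S$ replaced by its projection $S_\cT$. First, for $T\in\cT$, differentiate $g(\lambda)=\E_\lambda T=\sum_A T(A)P_\lambda(A)$ under the sum. This is legitimate because the sample space $\{0,1\}^{n\times n}$ is finite and $P_\lambda(A)$ is a polynomial in $(a,b)$, hence smooth in $\lambda$ on the open range $(-1/(q-1),1)$. The derivative is $g'(\lambda)=\E[TS]$. Since $\E S=0$ and $\cT$ contains the constants, we get $\E[S_\cT]=\E[1\cdot S]=0$, and so $g'(\lambda)=\Cov(T,S)$.

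Next, apply the defining orthogonality $\E[U(S-S_\cT)]=0$ with $U=T\in\cT$. This gives $\E[TS]=\E[TS_\cT]$, hence $g'(\lambda)=\Cov(T,S_\cT)$. The covariance form of Cauchy--Schwarz then yields $g'(\lambda)^2\le\Var(T)\,\Var(S_\cT)=\Var(T)\,I_\cT(\lambda)$, which is the first inequality.

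The equality case is where I expect the only real care to be needed. Equality in Cauchy--Schwarz holds exactly when $T-\E T$ and $S_\cT$ are linearly dependent. If $I_\cT>0$, this means $T=\alpha+\beta S_\cT$ almost surely, i.e.\ $T$ is an affine function of $S_\cT$. Conversely, if $T=\alpha+\beta S_\cT$, then $T\in\cT$ (as $S_\cT\in\cT$ and constants are in $\cT$), $g'=\beta I_\cT$, $\Var T=\beta^2 I_\cT$, and equality holds. The degenerate case $I_\cT=0$ must be handled separately: then $S_\cT=0$ and $g'=0$ for every $T\in\cT$, so the bound reads $\Var T\ge 0/0$. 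I would state the convention that the bound is $+\infty$ unless $g'=0$; in that case the inequality holds trivially, and equality (i.e.\ $\Var T=0$) occurs exactly for constant $T$, which are affine functions of $S_\cT=0$. This is also precisely what the later corollary exploits, namely that $\lambda$ has no unbiased estimator in a class where $I_\cT$ vanishes or is finite in the relevant sense.

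Finally, for $I_\cT\le I_{\mathrm{graph}}$, orthogonality with $U=S_\cT$ gives the Pythagorean decomposition $\E S^2=\E S_\cT^2+\E(S-S_\cT)^2$. Since both $S$ and $S_\cT$ have mean zero, this reads $I_{\mathrm{graph}}(\lambda)=\Var S\ge\Var S_\cT=I_\cT(\lambda)$, with $I_{\mathrm{graph}}$ given by \eqref{eq:missing}. Existence of the projection is automatic because $L^2$ of a finite probability space is finite-dimensional, so the closed subspace $\cT$ admits an orthogonal projection.
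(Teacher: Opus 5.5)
Your proof is correct and follows the paper's argument step for step: you differentiate $g$, use the projection property to replace $S$ by $S_\cT$, apply Cauchy--Schwarz, and finish with the Pythagorean decomposition; your treatment of the degenerate case $I_\cT=0$ is a small addition the paper omits. One aside is inaccurate: Corollary~\ref{cor:crcyc} does not derive the non-existence of an unbiased cycle-based estimator of $\lambda$ from this theorem but from a separate power-series argument on the exponents of $\lambda$, and finiteness of $I_\cT$ alone would not rule out unbiased estimation.
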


\begin{proof}
Differentiating $g(\lambda)=\sum_AT(A)P_\lambda(A)$ gives $g'(\lambda)=\E[TS]=\E[TS_\cT]$, the last step by the projection property since $T\in\cT$. As $1\in\cT$, $\E S_\cT=\E S=0$, so $g'(\lambda)=\Cov(T,S_\cT)$ and Cauchy--Schwarz gives $g'(\lambda)^2\le\Var(T)\,I_\cT(\lambda)$, with equality if and only if $T$ is affine in $S_\cT$. Finally $\E S^2=\E S_\cT^2+\E(S-S_\cT)^2\ge\E S_\cT^2$.
\end{proof}

Theorem~\ref{thm:rcr} contains the Cram\'er--Rao bound ($\cT$ = all statistics), the information of a statistic $Y$ ($\cT$ = functions of $Y$, $S_\cT=\E[S\mid Y]$), the Gauss--Markov bound for the BLUE ($\cT$ = the span of a fixed list of statistics) and a low-degree information ($\cT=\R[A]_{\le D}$). Information is monotone in the class,
\begin{equation}
I_{\mathrm{class}}(\lambda)\ \le\ I_{\mathrm{graph}}(\lambda)\ \le\ I_c(\lambda)=\frac{n\,d\,(q-1)}{2(1+(q-1)\lambda)(1-\lambda)},
\label{eq:ladder}
\end{equation}
and the question whether an efficient method can learn $\lambda$ becomes whether $I_{\mathrm{class}}$ grows with $n$ for a class of efficiently computable statistics.

\subsection{Cycle counts}
\label{sec:cycles}

Let $X_k$ be the number of cycles of length $k\ge3$ in $A$; it is a polynomial of degree $k$ in the entries of $A$, computable by counting closed non-backtracking walks. Throughout Sections~\ref{sec:cycles}--\ref{sec:karlin} the average degree $d$ is treated as known: it is orthogonal to $\lambda$ (Section~\ref{sec:known}) and is estimated by the mean degree. A $k$-cycle is an ordered list of $k$ distinct vertices up to the $2k$ rotations and reflections, and there are $n(n-1)\cdots(n-k+1)\approx n^k$ ordered lists; given the labels, the $k$ edges are present with probability $\prod_ic(\sigma_{v_i},\sigma_{v_{i+1}})/n$, where $c(s,t)=a$ if $s=t$ and $b$ otherwise. Averaging over the labels, $\sum_{s_1,\dots,s_k}q^{-k}\prod_ic(s_i,s_{i+1})=\tr(Q^k)$ with $Q=(c(s,t)/q)_{s,t}$, whose eigenvalues are $d$ (once) and $d\lambda$ ($q-1$ times) by Section~\ref{sec:lambda}. Therefore
\begin{equation}
\mu_k=\E X_k=\frac{d^k}{2k}\big(1+(q-1)\lambda^k\big)+o(1),
\label{eq:muk}
\end{equation}
the Erd\H{o}s--R\'enyi value $d^k/2k$ multiplied by $1+(q-1)\lambda^k$.

\begin{fact}[Poisson limit of cycle counts; \citet{mossel2015,banks2016}]
\label{fact:poisson}
For every fixed $K$, as $n\to\infty$ the vector $(X_3,\dots,X_K)$ converges in law to independent Poisson variables with means \eqref{eq:muk}.
\end{fact}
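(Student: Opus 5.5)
Fact~\ref{fact:poisson} is cited from \citet{mossel2015,banks2016}; it is not new here. The plan below reconstructs it by the classical method of moments for Poisson limits, as in \citet{bollobas2001} for Erd\H{o}s--R\'enyi graphs. The target is the joint factorial moments. For nonnegative integers $j_3,\dots,j_K$ we show
\begin{equation*}
\E\Big[\prod_{k=3}^K (X_k)_{j_k}\Big]\ \longrightarrow\ \prod_{k=3}^K \mu_k^{j_k},
\end{equation*}
where $(X)_j=X(X-1)\cdots(X-j+1)$ and $\mu_k$ is the limit in \eqref{eq:muk}. Since the Poisson law is determined by its moments, this identifies the limit as independent Poisson variables.

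The left side counts ordered tuples of distinct cycles, $j_k$ of them of length $k$. We split these tuples into two classes.

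First, the vertex-disjoint tuples. Their number is $\prod_k (n^k/2k)^{j_k}(1+o(1))$. Given the labels, disjoint cycles use disjoint edge sets, so their presence events are conditionally independent. Averaging over the i.i.d.\ labels factorizes across cycles, because the cycles involve disjoint sets of labels. Each cycle then contributes $\tr(Q^k)/n^k$, exactly as in the computation leading to \eqref{eq:muk}. The product of these contributions gives $\prod_k\mu_k^{j_k}$.

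Second, the overlapping tuples. Consider a tuple whose union $H$ has $v$ vertices and $e$ edges. Since $H$ is connected in at least one component with more edges than vertices, we have $e\ge v+1$. There are $O(n^v)$ placements of $H$, and each appears with probability at most $(\max(a,b)/n)^e$. The total contribution is therefore $O(n^{v-e})=O(1/n)$, summed over the finitely many isomorphism types.

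The main obstacle is the dependence that the labels introduce, since conditional on $\sigma$ edges are independent but unconditionally they are not. The resolution is the two observations above: vertex-disjoint cycles are exactly independent after averaging over the i.i.d.\ labels, and overlapping configurations are negligible uniformly in $\sigma$ because $c(s,t)\le\max(a,b)$. Both steps use only boundedness, so the $o(1)$ error in \eqref{eq:muk} and the error in the disjoint count, of relative order $1-O(K\cdot\sum_k j_k/n)$, both vanish for fixed $K$.

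The same argument works conditionally on a balanced profile. Under that conditioning the labels are only exchangeable, not i.i.d., so the factorization across cycles is no longer exact. The error this introduces is $O(1/n)$, because sampling a bounded number of labels without replacement differs from sampling them i.i.d.\ by $O(1/n)$ in total variation.
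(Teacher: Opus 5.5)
Your method-of-moments argument is correct and is essentially the standard proof behind the cited result. The paper gives no proof of its own and imports the Fact from \citet{mossel2015} and \citet{banks2016}, who argue as you do: joint factorial moments, vertex-disjoint tuples factorizing through $\tr(Q^k)/n^k$, overlapping tuples discarded because their union has $e\ge v+1$ and so contributes $O(1/n)$, and, for the balanced profile on which Section~\ref{sec:fisher} conditions, the same $O(1/n)$ without-replacement correction you give.
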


Fact~\ref{fact:poisson}, which underlies the contiguity argument of \citet{mossel2015} and the small-subgraph conditioning step of \citet{banks2016}, is the only external input of this section. Accordingly, all statements in Sections~\ref{sec:cycles}--\ref{sec:karlin} concern the \emph{limit experiment}, in which $(X_3,\dots,X_K)$ are independent Poisson variables with means \eqref{eq:muk}, for an arbitrary fixed $K$, and all constants are uniform in $K$. For the testing problem of Section~\ref{sec:cycletests} the limit statements transfer to large finite $n$, because convergence in law on a countable set implies convergence in total variation.

For $X\sim\mathrm{Poisson}(\mu(\lambda))$ the score is $\mu'(X-\mu)/\mu$ and the information is $\mu'^2/\mu$. From \eqref{eq:muk}, $\mu_k'=(q-1)d^k\lambda^{k-1}/2$ and
\begin{equation}
I_k=\frac{\mu_k'^2}{\mu_k}=\frac{(q-1)^2\,k\,d^k\lambda^{2k-2}}{2\,(1+(q-1)\lambda^k)}=\frac{(q-1)^2\,k\,r^k}{2\lambda^2\,(1+(q-1)\lambda^k)},
\label{eq:Ik}
\end{equation}
using $d^k\lambda^{2k-2}=r^k/\lambda^2$. Independent counts add their information, and $c_\lambda\le1+(q-1)\lambda^k\le q$ by \eqref{eq:clambda}, so
\begin{equation}
\frac{(q-1)^2}{2q\lambda^2}\sum_{k=3}^Kkr^k\ \le\ I_{\mathrm{cyc},K}(\lambda)=\sum_{k=3}^KI_k\ \le\ \frac{(q-1)^2}{2c_\lambda\lambda^2}\sum_{k=3}^Kkr^k ,
\label{eq:Icycbounds}
\end{equation}
and for $r<1$, from $\sum_{k\ge1}kr^k=r/(1-r)^2$,
\begin{equation}
\sum_{k\ge3}kr^k=\frac{r}{(1-r)^2}-r-2r^2=\frac{r^3(3-2r)}{(1-r)^2}.
\label{eq:sumkrk}
\end{equation}

\begin{theorem}[Information in cycle counts]
\label{thm:cycleinfo}
Let $I_{\mathrm{cyc},K}(\lambda)$ be the Fisher information about $\lambda$ carried by $(X_3,\dots,X_K)$ in the limit experiment. If $r=d\lambda^2<1$, then for every $K$,
\begin{equation*}
I_{\mathrm{cyc},K}(\lambda)\ \le\ \frac{(q-1)^2\,r^3(3-2r)}{2c_\lambda\lambda^2(1-r)^2}=\frac{(q-1)^2\,d\,r^2(3-2r)}{2c_\lambda(1-r)^2},
\end{equation*}
a constant independent of $n$ and $K$, with $c_\lambda=1$ in the assortative case. If $r\ge1$, then $I_{\mathrm{cyc},K}(\lambda)\ge\frac{(q-1)^2}{2q\lambda^2}(3+4+\dots+K)\to\infty$ as $K\to\infty$.
\end{theorem}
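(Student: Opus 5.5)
The proof works entirely inside the limit experiment of Fact~\ref{fact:poisson}. There $(X_3,\dots,X_K)$ are independent Poisson variables with means $\mu_k(\lambda)$ from \eqref{eq:muk}, so the likelihood factorizes. The score is therefore the sum of the per-coordinate scores $\mu_k'(X_k-\mu_k)/\mu_k$, and these have mean zero and are independent. Hence $I_{\mathrm{cyc},K}=\sum_{k=3}^K I_k$, with $I_k=\mu_k'^2/\mu_k$ as in \eqref{eq:Ik}.

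The first step is to verify \eqref{eq:Ik} from $\mu_k=\frac{d^k}{2k}(1+(q-1)\lambda^k)$ with $d$ held fixed. This gives $\mu_k'=\frac{(q-1)d^k\lambda^{k-1}}{2}$ and
\begin{equation*}
I_k=\frac{(q-1)^2kd^k\lambda^{2k-2}}{2(1+(q-1)\lambda^k)}=\frac{(q-1)^2kr^k}{2\lambda^2(1+(q-1)\lambda^k)}.
\end{equation*}

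The second step bounds the denominator two-sidedly, as in \eqref{eq:clambda}: we need $c_\lambda\le1+(q-1)\lambda^k\le q$ for all $k\ge3$. For $\lambda>0$ the lower bound is $1$. For $\lambda<0$ and even $k$ the term $(q-1)\lambda^k$ is positive, so the bound is trivial. For $\lambda<0$ and odd $k\ge3$ we have $|\lambda|^k\le|\lambda|^3$, which gives $1-(q-1)|\lambda|^3$. This quantity is positive because $|\lambda|<1/(q-1)$ when $a>0$. The upper bound uses $|\lambda|\le1$.

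Summing gives \eqref{eq:Icycbounds}. For $r<1$, the tail identity \eqref{eq:sumkrk} follows by subtracting the $k=1,2$ terms from $r/(1-r)^2$ and simplifying: $r-(r+2r^2)(1-r)^2=r^3(3-2r)$. Since all terms are nonnegative, the partial sums up to $K$ are bounded by the full series. This yields the first displayed bound, and substituting $r^3/\lambda^2=dr^2$ gives the second form.

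For $r\ge1$ the argument uses the lower half of \eqref{eq:Icycbounds} with $r^k\ge1$. This gives $I_{\mathrm{cyc},K}\ge\frac{(q-1)^2}{2q\lambda^2}\sum_{k=3}^Kk$, which diverges like $K^2$.

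The main point needing care is not the algebra but interpretation. The statement concerns the limit experiment, with $K$ fixed while $n\to\infty$ first. The Fisher information must be taken as that of the product Poisson family, not as the limit of the finite-$n$ information, because convergence in law does not transfer the information itself. I would state this explicitly. I would also note two regularity facts: each Poisson family is smooth in $\lambda$ on the open range $(-1/(q-1),1)$, and $\mu_k>0$ there (by the $c_\lambda$ bound). Together these justify differentiating under the sum and identifying the information with the variance of the score. The other delicate spot is the odd-$k$ disassortative lower bound on $1+(q-1)\lambda^k$, which is handled above.
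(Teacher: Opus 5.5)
Your proposal is correct and follows the paper's own argument: additivity of the Poisson informations $I_k=\mu_k'^2/\mu_k$ over independent coordinates, the two-sided bound $c_\lambda\le1+(q-1)\lambda^k\le q$, the tail identity for $\sum_{k\ge3}kr^k$, and $kr^k\ge k$ when $r\ge1$. Your explicit treatment of the odd-$k$ disassortative case, and your remark that the information is that of the product Poisson family, match the paper's use of \eqref{eq:clambda} and its framing of the result as a statement about the limit experiment.
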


\begin{proof}
The upper bound is \eqref{eq:Icycbounds} with \eqref{eq:sumkrk} and $r/\lambda^2=d$; the lower bound follows from $kr^k\ge k$ for $r\ge1$.
\end{proof}

The threshold thus appears as the radius of convergence of the cycle-information series. Before drawing the Cram\'er--Rao consequence we record which functions of $\lambda$ can be estimated without bias from cycle counts at all.

\begin{corollary}[Cram\'er--Rao bound for cycle statistics]
\label{cor:crcyc}
In the limit experiment, let $T$ be a function of $(X_3,\dots,X_K)$ with $\E_\lambda|T|<\infty$ and $\E_\lambda T=g(\lambda)$ throughout the parameter range. Then $g$ is a power series in $\lambda$ whose exponents are sums of elements of $\{3,\dots,K\}$; in particular $\lambda$ and $\lambda^2$ admit no unbiased estimator based on cycle counts, while $\lambda^k$ does for every $k\ge3$. If $d\lambda^2<1$ and $g(\lambda)=\lambda^k$ with $k\ge3$, then
\begin{equation*}
\frac{\Var_\lambda(T)}{\lambda^{2k}}\ \ge\ \frac{k^2}{\lambda^2\,I_{\mathrm{cyc},K}(\lambda)}\ \ge\ \frac{2c_\lambda\,k^2(1-r)^2}{(q-1)^2\,r^3(3-2r)},
\end{equation*}
a bound that depends neither on $n$ nor on $K$: including longer cycles cannot bring the relative error of an unbiased cycle-based estimator of $\lambda^k$ below this constant.
\end{corollary}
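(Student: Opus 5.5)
The corollary has two parts: which functions $g$ are unbiasedly estimable, and the Cram\'er--Rao bound. We treat them in that order.

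\emph{Structure of $g$.} In the limit experiment the joint law of $(X_3,\dots,X_K)$ is a product of Poisson laws. Hence
\begin{equation*}
g(\lambda)=\sum_{x}T(x)\prod_{k=3}^K e^{-\mu_k}\frac{\mu_k^{x_k}}{x_k!},
\end{equation*}
with $\mu_k=\frac{d^k}{2k}(1+(q-1)\lambda^k)$ and $d$ known (Section~\ref{sec:cycles}). Write $\mu_k=\alpha_k+\beta_k\lambda^k$ with $\alpha_k=d^k/2k$ and $\beta_k=(q-1)d^k/2k$. Then $e^{-\mu_k}=e^{-\alpha_k}\exp(-\beta_k\lambda^k)$ and $\mu_k^{x_k}$ are both power series in the variable $\lambda^k$. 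Their product over $k$ is a power series in $\lambda$ whose exponents lie in the additive semigroup generated by $\{3,\dots,K\}$.

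Summing over $x$ needs justification. The condition $\E_\lambda|T|<\infty$ on the whole parameter range lets us dominate the terms by the corresponding series at a larger $|\lambda|$, or treat $\lambda$ as complex with $|\lambda|$ below the range radius. So the sum converges absolutely, and rearranging gives a power series in $\lambda$ with exponents in that semigroup. Since $1$ and $2$ are not such sums, the coefficients of $\lambda^1$ and $\lambda^2$ vanish. Two power series agreeing on an interval have equal coefficients, so $g=\lambda$ or $g=\lambda^2$ is impossible.

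\emph{Existence for $k\ge3$.} For $3\le k\le K$ the estimator $(X_k-\alpha_k)/\beta_k$ is unbiased for $\lambda^k$. For $k>K$, write $k$ as a sum of elements of $\{3,\dots,K\}$ (every integer $\ge3$ is a sum of $3$s and $4$s when $K\ge4$). Use products of factorial-moment estimators $X_j(X_j-1)\cdots$ of independent counts: these are unbiased for products of powers of $\mu_j$, and combining them with binomial expansions yields unbiased estimates of $\lambda^k$. The edge case $K=3$ allows only multiples of $3$; there the claim for other $k$ needs $K$ large enough, and I would state it as ``for $K\ge k$'' or ``$K\ge4$''. This bookkeeping is the main obstacle. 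The second claim is the cleaner one; the first is mostly careful power-series justification.

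\emph{The bound.} Apply Theorem~\ref{thm:rcr} with $\cT$ the square-integrable functions of $(X_3,\dots,X_K)$. Then $I_\cT=I_{\mathrm{cyc},K}$ (information of a statistic, $S_\cT=\E[S\mid Y]$). The regularity needed for differentiating under the sum holds for Poisson products on the open parameter interval, and if $\Var T=\infty$ the bound is trivial. With $g'=k\lambda^{k-1}$,
\begin{equation*}
\Var T\ge \frac{k^2\lambda^{2k-2}}{I_{\mathrm{cyc},K}}.
\end{equation*}
Dividing by $\lambda^{2k}$ gives the first inequality. Inserting the upper bound of Theorem~\ref{thm:cycleinfo}, in the form $\lambda^2 I_{\mathrm{cyc},K}\le (q-1)^2r^3(3-2r)/(2c_\lambda(1-r)^2)$, gives the explicit constant. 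It is independent of $n$ and $K$ because Theorem~\ref{thm:cycleinfo}'s bound is.
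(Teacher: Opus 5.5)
Your proof follows the paper's argument. You expand the product-Poisson likelihood as a power series in $\lambda$ whose exponents lie in the additive semigroup generated by $\{3,\dots,K\}$. You then apply Theorem~\ref{thm:rcr} with $g'(\lambda)=k\lambda^{k-1}$ and insert the bound of Theorem~\ref{thm:cycleinfo}.

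There is one small difference. You expand $g$ directly, factors $e^{-\mu_k}$ included, which puts its exponents in the semigroup at once. The paper instead multiplies through by $e^{\sum_k\mu_k}$ and then needs a smallest-bad-exponent argument. You also supply the convergence step the paper leaves implicit. There the clean majorant is at $|\lambda|$ itself, not at a larger value: the sum of the absolute coefficient series is at most $\E_{|\lambda|}|T|\,\exp\bigl(2\sum_k\beta_k|\lambda|^k\bigr)$. This is finite because $|\lambda|\in(0,1)$ lies in the parameter range, so the series converges on $|\lambda|<1$.

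One claim in your existence paragraph is false. Not every integer $\ge3$ is a sum of $3$s and $4$s: $5$ is not, being the Frobenius number of $\{3,4\}$. Indeed, for $K=4$ the first half of the corollary you are proving shows that $\lambda^5$ has \emph{no} unbiased estimator, so the alternative ``$K\ge4$'' must be dropped.

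The correct form of your extension is this: $\lambda^k$ is unbiasedly estimable from $(X_3,\dots,X_K)$ exactly when $k$ is a sum of elements of $\{3,\dots,K\}$. Your factorial-moment construction gives one direction and the structure result gives the other. This holds for every $k\ge3$ once $K\ge5$. The paper's own justification, $\lambda^k=(\mu_k-\alpha_k)/\beta_k$, uses $X_k$ directly and so reads the claim as ``$K\ge k$''; that is the form you should state.
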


\begin{proof}
Multiplying $\E_\lambda T=g(\lambda)$ by $e^{\sum_k\mu_k}$ gives $\sum_xT(x)\prod_k\mu_k^{x_k}/x_k!=g(\lambda)\,e^{\sum_k\mu_k(\lambda)}$. With $\mu_k=\alpha_k+\beta_k\lambda^k$, $\alpha_k=d^k/2k$ and $\beta_k=(q-1)d^k/2k$, the left side is a power series in $\lambda$ containing only exponents that are sums of elements of $\{3,\dots,K\}$, and so is $e^{\sum_k\mu_k(\lambda)}$, whose constant term $e^{\sum_k\alpha_k}$ is nonzero. If $g$ had a term of some other exponent $m$, the smallest such $m$ would appear on the right side with coefficient $g_me^{\sum_k\alpha_k}\ne0$, since every other contribution to that coefficient would express $m$ as a sum of allowed exponents; this contradicts the form of the left side. The exponents $1$ and $2$ are not such sums, while $\lambda^k=(\mu_k-\alpha_k)/\beta_k$ for $k\ge3$. The inequality is Theorem~\ref{thm:rcr} with $g'(\lambda)=k\lambda^{k-1}$ together with Theorem~\ref{thm:cycleinfo} and $\lambda^2d=r$.
\end{proof}

For the running example, $c_\lambda=1$, $(q-1)^2d=22.4$, $r^2=0.013061$, $3-2r=2.7714$ and $(1-r)^2=0.78449$, so $I_{\mathrm{cyc},K}\le0.517$ for every $K$, and the relative standard error of every unbiased cycle-based estimator of $\lambda^3$ is at least $14.6$, whatever cycle lengths it uses; the estimator from triangles alone has relative standard error $16.6$ (Section~\ref{sec:umvue}, Table~\ref{tab:ladder}). In the disassortative family of Section~\ref{sec:expq5} ($q=5$, $a=1$) at $d\lambda^2=0.9$ one has $\lambda=-0.235$ and $c_\lambda=0.948$, so the disassortative bounds of this section differ from their assortative counterparts by about five percent.

\subsection{UMVUE and BLUE from cycle counts}
\label{sec:umvue}

Solving \eqref{eq:muk} for $\lambda^k$ gives the estimator
\begin{equation}
T_k=\frac{2kX_k/d^k-1}{q-1},\qquad\E T_k=\lambda^k,\qquad
\Var(T_k)=\Big(\frac{2k}{(q-1)d^k}\Big)^2\mu_k=\frac{2k\,(1+(q-1)\lambda^k)}{(q-1)^2\,d^k}.
\label{eq:Tk}
\end{equation}
$T_k$ is the unique UMVUE of $\lambda^k$ based on $X_k$, by the Lehmann--Scheff\'e theorem for the Poisson family, and it attains the Cram\'er--Rao bound: the target has slope $2k/((q-1)d^k)$ in $\mu_k$ and the information about $\mu_k$ is $1/\mu_k$, so the bound is $\mathrm{slope}^2\,\mu_k$, which is \eqref{eq:Tk}. Its relative variance is
\begin{equation}
\frac{\Var(T_k)}{\lambda^{2k}}=\frac{2k\,(1+(q-1)\lambda^k)}{(q-1)^2\,r^k}\ \longrightarrow\
\begin{cases}0 & (r>1),\\ \infty & (r<1),\end{cases}\qquad k\to\infty :
\label{eq:relvar}
\end{equation}
long cycles give an increasingly accurate estimate of $\lambda$ above the threshold and an increasingly inaccurate one below it; $T_k^{1/k}$ is the estimator used by \citet{mossel2015} above the threshold for $q=2$. For the running example and $k=3$, $\lambda^3=0.0233$, $\Var(T_3)=6\cdot1.093/(16\cdot2.744)=0.149$, and the standard error $0.387$ is $16.6$ times the quantity estimated.

Several cycle lengths do not form a one-parameter exponential family (their means depend on $\lambda$ through different powers), so there is no one-dimensional complete sufficient statistic and no UMVUE in general. The linear theory is complete, however: for $U=\sum_kw_k(X_k-\mu_k)$, Cauchy--Schwarz gives
\begin{equation*}
\frac{\Cov(U,S)^2}{\Var(U)}=\frac{(\sum_kw_k\mu_k')^2}{\sum_kw_k^2\mu_k}\ \le\ \sum_k\frac{\mu_k'^2}{\mu_k}=I_{\mathrm{cyc},K},
\end{equation*}
with equality for $w_k=\mu_k'/\mu_k$. The BLUE of the cycle counts therefore carries the entire cycle information, and since it is a polynomial of degree $K$, the low-degree information of degree $K$ is at least $I_{\mathrm{cyc},K}$. Above the threshold this lower bound is of order $Kr^K$, so the low-degree information grows without bound with the degree. Below the threshold the cycle counts contribute only a bounded amount to it, and the inequality says nothing about the other polynomials of degree $K$; the question of what they add is a different one, and we do not pursue it here.

\subsection{Tests based on cycle counts}
\label{sec:cycletests}

Consider detection, Definition~\ref{def:overlap}, with cycle counts as the only statistics. For Poisson laws with means $\mu$ and $\nu$, $\sum_x\sqrt{p(x)p_0(x)}=e^{-(\mu+\nu)/2}\sum_x(\sqrt{\mu\nu})^x/x!=e^{-(\sqrt\mu-\sqrt\nu)^2/2}$, and the affinity of independent coordinates is the product of the affinities, so in the limit of Fact~\ref{fact:poisson}
\begin{equation}
\rho_K=\exp\Big(-\frac12\sum_{k=3}^K\big(\sqrt{\mu_k(\lambda)}-\sqrt{\mu_k(0)}\big)^2\Big).
\label{eq:hellinger}
\end{equation}
With $u_k=(q-1)\lambda^k$ we have $\mu_k(\lambda)=\mu_k(0)(1+u_k)$ and $c_\lambda\le1+u_k\le q$ by \eqref{eq:clambda}, so
\begin{equation}
\big|\sqrt{1+u_k}-1\big|=\frac{|u_k|}{\sqrt{1+u_k}+1}\ \le\ \frac{|u_k|}{1+\sqrt{c_\lambda}},
\label{eq:sqrtbound}
\end{equation}
and each term of \eqref{eq:hellinger} is at most $\frac{d^k}{2k}\cdot\frac{(q-1)^2\lambda^{2k}}{(1+\sqrt{c_\lambda})^2}=\frac{(q-1)^2r^k}{2k(1+\sqrt{c_\lambda})^2}$. Since $\sum_{k\ge1}r^k/k=-\ln(1-r)$,
\begin{equation}
\rho_K\ \ge\ \rho_\infty:=\exp\Big(-\frac{(q-1)^2}{4(1+\sqrt{c_\lambda})^2}\big(-\ln(1-r)-r-\tfrac{r^2}2\big)\Big)\ >\ 0\qquad(r<1),
\label{eq:rhobound}
\end{equation}
uniformly in $K$; in the assortative case the prefactor is $(q-1)^2/16$. Writing $|p-p_0|=|\sqrt p-\sqrt{p_0}|(\sqrt p+\sqrt{p_0})$ and applying Cauchy--Schwarz, $\sum|p-p_0|\le\sqrt{(2-2\rho)(2+2\rho)}$, that is $\TV\le\sqrt{1-\rho^2}$.

\begin{theorem}[Cycle-count tests]
\label{thm:cycletest}
Let $\lambda\in(-1/(q-1),1)\setminus\{0\}$. In the limit experiment: if $d\lambda^2<1$, every test of $\mathrm{SBM}(n,q,a,b)$ against $\mathrm{SBM}(n,q,d,d)$ based on $(X_3,\dots,X_K)$ has success probability at most $\frac12\big(1+\sqrt{1-\rho_\infty^2}\big)<1$, for every $K$, and this bound is the limit of the optimal finite-$n$ success probabilities. If $d\lambda^2\ge1$, then $\rho_K\to0$ as $K\to\infty$, so the success probability of the best cycle-count test tends to one in the iterated limit ($n\to\infty$, then $K\to\infty$).
\end{theorem}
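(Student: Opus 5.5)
The plan is to reduce everything to the Hellinger computation already carried out before the statement. For two equiprobable hypotheses the optimal success probability of a test based on a statistic $Y$ is $\frac12(1+\TV(P_Y,Q_Y))$. In the limit experiment $Y=(X_3,\dots,X_K)$ is a vector of independent Poisson variables with means $\mu_k(\lambda)$ under the block model and $\mu_k(0)=d^k/2k$ under the Erd\H{o}s--R\'enyi graph with the same $d$. The latter is the case $\lambda=0$ of \eqref{eq:muk}, since $a=b=d$ there.

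\textbf{Case $d\lambda^2<1$.} I will combine $\TV\le\sqrt{1-\rho^2}$, derived just above via Cauchy--Schwarz, with the product formula \eqref{eq:hellinger}. The uniform bound \eqref{eq:rhobound} gives $\rho_K\ge\rho_\infty$. Two small checks are needed here.

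\begin{enumerate}[label=(\roman*),leftmargin=2em]
\item Each term $\frac{d^k}{2k}(\sqrt{1+u_k}-1)^2$ is at most $\frac{(q-1)^2r^k}{2k(1+\sqrt{c_\lambda})^2}$, using \eqref{eq:sqrtbound}, $|u_k|=(q-1)|\lambda|^k$, and $d^k\lambda^{2k}=r^k$. This needs $c_\lambda>0$, which holds on the open range $\lambda\in(-1/(q-1),1)$.
\item The sum $\sum_{k\ge3}r^k/k=-\ln(1-r)-r-r^2/2$ is finite for $r<1$. Together with the factor $\frac12$ in the exponent, this yields the prefactor $(q-1)^2/(4(1+\sqrt{c_\lambda})^2)$.
\end{enumerate}

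Since $t\mapsto\sqrt{1-t^2}$ is decreasing on $[0,1]$, every cycle-count test has success probability at most $\frac12(1+\sqrt{1-\rho_\infty^2})$, which is strictly less than $1$ because $\rho_\infty>0$.

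\textbf{Transfer to finite $n$.} The optimal finite-$n$ success probability is $\frac12(1+\TV(\mathcal L_n,\mathcal L'_n))$, where $\mathcal L_n,\mathcal L'_n$ are the laws of the count vector under the two models. By Fact~\ref{fact:poisson} both laws converge in distribution on the countable set $\mathbb{Z}_{\ge0}^{K-2}$, hence in total variation (Scheff\'e). The triangle inequality then gives $\TV(\mathcal L_n,\mathcal L'_n)\to\TV$ of the Poisson limits. So the finite-$n$ optimum converges to the limit-experiment optimum, which is bounded as above. I read ``this bound is the limit'' in the sense that the limit of the optimal finite-$n$ success probabilities obeys the bound. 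I would state explicitly that the constant is an upper bound on that limit, not claimed to be attained.

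\textbf{Case $d\lambda^2\ge1$.} I need $\sum_k(\sqrt{\mu_k(\lambda)}-\sqrt{\mu_k(0)})^2=\infty$, and I expect the lower bound on each term to be the main obstacle. Each term equals $\frac{d^k}{2k}(\sqrt{1+u_k}-1)^2$. Using $|\sqrt{1+u}-1|=|u|/(\sqrt{1+u}+1)\ge|u|/(\sqrt q+1)$, since $1+u_k\le q$, the term is at least $\frac{(q-1)^2r^k}{2k(1+\sqrt q)^2}$. With $r\ge1$ this is at least a constant times $1/k$, so the harmonic series diverges and $\rho_K\to0$.

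It remains to pass from $\rho_K$ to $\TV$. The lower bound $\TV\ge1-\rho$ follows from $\sum\min(p,p_0)\le\sum\sqrt{pp_0}$. Hence $\TV\to1$, so the optimal success probability in the limit experiment tends to one as $K\to\infty$. Combined with the finite-$n$ transfer (first $n\to\infty$ at fixed $K$, then $K\to\infty$), this gives the iterated-limit claim.
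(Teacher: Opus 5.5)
Your proposal is correct and follows the paper's proof essentially step for step. For $r<1$ you combine \eqref{eq:rhobound}, $\TV\le\sqrt{1-\rho^2}$ and the formula $\frac12(1+\TV)$; for $r\ge1$ you lower-bound each Hellinger term using $1+u_k\le q$, obtain a divergent harmonic series, and conclude with $\TV\ge1-\rho_K$. The paper covers the transfer to finite $n$ only in a single sentence. Your Scheff\'e/triangle-inequality argument spells it out, and your reading of ``this bound is the limit'' as an upper bound on the limit of the finite-$n$ optima, rather than its exact value, is the accurate interpretation.
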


\begin{proof}
The first statement combines \eqref{eq:rhobound}, $\TV\le\sqrt{1-\rho^2}$ and the formula $\frac12(1+\TV)$ for the success probability of the optimal test between equiprobable hypotheses. For the second, since $1+u_k\le q$, the identity in \eqref{eq:sqrtbound} gives $|\sqrt{1+u_k}-1|\ge|u_k|/(1+\sqrt q)\ge|u_k|/(2\sqrt q)$ for either sign of $u_k$, so each term of \eqref{eq:hellinger} is at least $(q-1)^2r^k/(8qk)\ge(q-1)^2/(8qk)$ when $r\ge1$; the harmonic series diverges, hence $\rho_K\to0$ and $\TV\ge1-\rho_K\to1$.
\end{proof}

For the running example, $(q-1)^2/16=1$ and $-\ln(1-r)-r-r^2/2=r^3/3+r^4/4+\dots=0.000545$, so $\rho_\infty\ge0.99946$, $\TV\le0.033$, and no test based on cycle counts of bounded length is correct more than $51.6\%$ of the time, against $1/2$ for random guessing. \citet{banks2016} proved that detection is possible below $d\lambda^2=1$ for $q\ge5$ in the disassortative and $q\ge11$ in the assortative case, by a test that searches over partitions. Theorem~\ref{thm:cycletest} applies to both cases, and in particular to the disassortative one in which the window is established: there the cycle statistics are capped strictly below certainty while the full graph admits a consistent test, a second form of the gap, for the class of cycle statistics.

\subsection{Karlin's theorem, range restriction and shrinkage}
\label{sec:karlin}

Karlin's theorem \citep{karlin1958}, see \citet[Section~5.2]{lehmann1998}: in a one-parameter exponential family $\beta(\theta)e^{\theta T(x)}h(x)$ with natural parameter space $(\theta_{\mathrm{low}},\theta_{\mathrm{up}})$, the estimator $\delta_{s,\zeta}=(T+\zeta s)/(1+s)$ of $\gamma(\theta)=\E_\theta T$ is admissible under squared error provided that, for some $\theta_0$,
\begin{equation*}
\int_{\theta_0}^{\theta^*}\frac{e^{-\zeta s\theta}}{\beta(\theta)^s}\dt\theta\to\infty\ \ (\theta^*\to\theta_{\mathrm{up}})
\qquad\text{and}\qquad
\int_{\theta^*}^{\theta_0}\frac{e^{-\zeta s\theta}}{\beta(\theta)^s}\dt\theta\to\infty\ \ (\theta^*\to\theta_{\mathrm{low}}).
\end{equation*}

\begin{proposition}[Karlin's conditions for the Poisson family; range restriction]
\label{prop:karlin}
For the Poisson family the conditions hold for all $s\ge0$ and $\zeta\ge0$; in particular $X$ is admissible for its mean, and so is $T_k$ for $\lambda^k$ whenever $\mu_k$ ranges over $(0,\infty)$. In the block model $\lambda\in[-1/(q-1),1]$, so $\lambda^k$ lies in $J_k=[-(q-1)^{-k},1]$ and $\mu_k$ is confined to $[(1-(q-1)^{1-k})d^k/2k,\,qd^k/2k]$; there $T_k$ takes values outside $J_k$ with positive probability and is dominated by its projection onto $J_k$.
\end{proposition}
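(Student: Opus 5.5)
Write the Poisson law as a one-parameter exponential family: $P(X=x)=e^{-\mu}\mu^x/x!=\beta(\theta)e^{\theta x}h(x)$ with natural parameter $\theta=\ln\mu$, $\beta(\theta)=\exp(-e^\theta)$, $h(x)=1/x!$, and $T=X$, $\gamma(\theta)=e^\theta=\mu$. The natural parameter space is $(\theta_{\mathrm{low}},\theta_{\mathrm{up}})=(-\infty,\infty)$. The integrand in Karlin's conditions is
\begin{equation*}
\frac{e^{-\zeta s\theta}}{\beta(\theta)^s}=\exp\big(s e^{\theta}-\zeta s\theta\big).
\end{equation*}

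\emph{Upper end.} As $\theta\to+\infty$ the exponent $se^\theta-\zeta s\theta$ tends to $+\infty$ when $s>0$ and is identically $0$ when $s=0$. In either case the integrand is bounded below by a positive constant for large $\theta$, so $\int_{\theta_0}^{\theta^*}$ diverges.

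\emph{Lower end.} As $\theta\to-\infty$, $se^\theta\to0$ and $-\zeta s\theta\ge0$ since $\zeta,s\ge0$. The integrand is therefore at least $e^{0}\cdot 1$ up to a factor tending to $1$, bounded below by $1/2$ for $\theta$ small. Hence $\int_{\theta^*}^{\theta_0}$ diverges as $\theta^*\to-\infty$.

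Taking $s=0$ gives the admissibility of $X$ for $\mu$. For $T_k$, note that $T_k=(2kX_k/d^k-1)/(q-1)$ is an affine map of $X_k$ with positive slope, and $\lambda^k$ is the same affine map of $\mu_k$. Squared-error risk scales by the squared slope under such a map, so admissibility transfers from $X_k$ for $\mu_k$ (over $\mu_k\in(0,\infty)$) to $T_k$ for $\lambda^k$. In the limit experiment $X_k$ is a single Poisson coordinate, which is all that is needed here.

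For the range restriction, $\lambda\in[-1/(q-1),1]$ gives $\lambda^k\in J_k$: the maximum is $1$, and the minimum is $\ge-(q-1)^{-k}$, attained at $\lambda=-1/(q-1)$ for odd $k$ (for even $k$ the minimum is $0\in J_k$ anyway). Then $\mu_k=\frac{d^k}{2k}(1+(q-1)\lambda^k)$ lies in the stated interval, because $1+(q-1)\lambda^k$ ranges within $[1-(q-1)^{1-k},\,q]$.

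Two observations show that $T_k$ leaves $J_k$ with positive probability:
\begin{itemize}
\item $X_k=0$ has positive Poisson probability and gives $T_k=-1/(q-1)<-(q-1)^{-k}$ for $k\ge3$, $q\ge2$.
\item Large $X_k$ gives $T_k>1$, also with positive probability.
\end{itemize}
Lemma~\ref{lem:clip} is stated for pair targets, but its one-line argument applies verbatim: projecting onto an interval containing the estimand never increases $|t-\lambda^k|$ and strictly decreases it when $t$ lies outside. The clipped estimator therefore has risk $\le$ that of $T_k$ at every $\lambda$, with strict inequality everywhere since the exit probability is positive for all $\lambda$. This is domination, so $T_k$ is inadmissible on the restricted range.

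The only delicate point is the sign bookkeeping at the lower end for $\zeta,s\ge0$, which the computation above handles. The main obstacle is the minor care needed for even versus odd $k$ in describing $J_k$.
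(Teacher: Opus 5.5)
Your proposal is correct and follows the paper's proof essentially step for step: the same exponential-family form with $\theta=\ln\mu$, the integrand $\exp(se^\theta-\zeta s\theta)$ bounded below at both ends of $\R$, the affine transfer of admissibility from $X_k$ to $T_k$, and the clipping argument of Lemma~\ref{lem:clip}. One small slip: for $q=2$ you have $-1/(q-1)=-(q-1)^{-k}=-1$, so $X_k=0$ gives $T_k=-1\in J_k$ and your first bullet fails there, but your second bullet ($T_k>1$ with positive probability, since the Poisson law has full support) already shows that $T_k$ leaves $J_k$ for every $q$.
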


\begin{proof}
With $\theta=\ln\mu$, $\Prob(X=x)=\exp(-e^\theta)e^{\theta x}/x!$, so $\beta(\theta)=\exp(-e^\theta)$, $T(x)=x$, $\gamma(\theta)=e^\theta$ and the natural parameter space is $\R$. The integrand is $\exp(-\zeta s\theta+se^\theta)$. As $\theta\to+\infty$, $e^\theta$ eventually dominates $\zeta\theta$, so the integrand is at least $1$ beyond some $\theta_1$ and the integral grows at least linearly; as $\theta\to-\infty$, both $-\zeta s\theta$ and $se^\theta$ are nonnegative for $\theta\le0$, so the integrand is at least $1$ and the integral again grows linearly. Since $T_k$ and $\lambda^k$ are the same increasing affine function of $X_k$ and $\mu_k$, admissibility transfers. On the restricted range the argument of Lemma~\ref{lem:clip} applies.
\end{proof}

For the running example with $k=3$, $\mu_3=\frac{d^3}6(1+4\lambda^3)=0.500$, the target is $\lambda^3=0.0233$, the range is $J_3=[-1/64,1]$, and
\begin{center}
\begin{tabular}{@{}cccc@{}}
\toprule
$X_3$ & probability & $T_3$ & projection onto $J_3$\\
\midrule
$0$ & $0.607$ & $-0.250$ & $-0.016$\\
$1$ & $0.303$ & $0.297$ & $0.297$\\
$2$ & $0.076$ & $0.843$ & $0.843$\\
$\ge3$ & $0.014$ & $\ge1.39$ & $1$\\
\bottomrule
\end{tabular}
\end{center}
The mean squared error is $0.149$ for $T_3$ and $0.0883$ for its projection: the UMVUE reports a value below the smallest one the model allows in $61\%$ of the networks, and clipping lowers its error by $41\%$. Shrinkage leads to the same conclusion: for $cT_k$ with $g=\lambda^k$, $\MSE(c)=c^2\Var(T_k)+(1-c)^2g^2$ is minimized at $c^*=1/(1+\Var(T_k)/g^2)$, and $\Var(T_k)/g^2$ is the relative variance \eqref{eq:relvar}, so $c^*\to1$ as $k$ grows when $r>1$ and $c^*\to0$ when $r<1$. For the example, $\Var(T_3)/g^2=275$ and $c^*=0.0036$: the optimal shrinkage retains $0.36\%$ of the triangle count, and its error is $99.6\%$ of that of ignoring triangles altogether.

\subsection{The information ladder}

\begin{table}[t]
\centering
\caption{Fisher information about $\lambda$ at three levels, with the values for the running example ($q=5$, $a=3$, $b=1$). The cycle-count rows refer to the limit experiment of Fact~\ref{fact:poisson}; the last row refers to the estimated score \eqref{eq:scoregraph} with belief-propagation beliefs, not to a statistic.}
\label{tab:ladder}
\small
\begin{tabular}{@{}p{3.2cm}p{4.8cm}p{2.9cm}p{4.0cm}@{}}
\toprule
Statistics available & Information about $\lambda$ & Growth & Running example\\
\midrule
Complete data (labels known) & $\dfrac{n\,d\,(q-1)}{2(1+(q-1)\lambda)(1-\lambda)}$ & linear in $n$ & $1.829\,n$; s.e.\ $0.0074$ at $n=10^4$\\[2ex]
The observed graph & $I_c\,(1-V)$ & governed by $V$ & ---\\[0.5ex]
Cycle counts, $d\lambda^2<1$ & $\le\dfrac{(q-1)^2\,d\,r^2(3-2r)}{2c_\lambda(1-r)^2}$ & bounded uniformly in $K$ & $\le0.517$; rel.\ s.e.\ of $\lambda^3$ $\ge14.6$\\[2ex]
Cycle counts, $d\lambda^2\ge1$ & $\ge\dfrac{(q-1)^2}{2q\lambda^2}\sum_{k=3}^Kk$ & unbounded in $K$ & ---\\[1.5ex]
Score estimated by belief propagation at its uninformative fixed point & $0$ (identically) & none & $0$\\
\bottomrule
\end{tabular}
\end{table}

Table~\ref{tab:ladder} summarizes the section. The complete-data information grows linearly in $n$; the information of the graph is a fraction $1-V$ of it; the information of cycle counts, in the limit experiment, is bounded uniformly in the cycle length exactly when $d\lambda^2<1$, where it also caps every cycle-count test strictly below certainty, and grows without bound with the cycle length otherwise; and the score estimated by belief propagation at its uninformative fixed point is identically zero. Together with Section~\ref{sec:decision}, this identifies the Kesten--Stigum threshold with the point at which cycle statistics begin to accumulate information as their length grows: below it, cycles of all lengths together carry a bounded amount of information about $\lambda$; above it, each additional cycle length adds more than the one before.

\section{Many communities and PageRank}
\label{sec:many}

\subsection{The recovery condition for $q=n^\chi$}

When the number of communities grows polynomially, $q=n^\chi$ with $0<\chi<1$, the Kesten--Stigum line is no longer the boundary of efficient recovery. \citet[Theorems~1.2, 1.3, 1.7]{chin2025} proved that for $1/2<\chi<1$ a polynomial-time path-counting algorithm achieves weak recovery when $d\lambda^{1/\chi}>C(\log d)^2$, that low-degree polynomials fail below $d\lambda^2=1$ when $q\le n^{1/2-\gamma}$ (Fact~\ref{fact:chin}), and that with unbounded computation recovery is possible when $d\lambda>14$ and impossible when $d\lambda<1$; \citet{carpentier2026,carpentier2026b} established low-degree lower bounds at the new threshold for all densities, with matching algorithms based on counts of paths, cliques and blown-up cycles, up to logarithmic factors. The following computation locates the new threshold in two lines and is the basis of the experiments of Section~\ref{sec:expmany}.

Fix two vertices $u,v$ and count the walks of length $k$ from $u$ to $v$. With $Q$ as in Section~\ref{sec:cycles}, the expected count is $(q/n)(Q^k)_{\sigma_u\sigma_v}$, that is $(d^k+(q-1)(d\lambda)^k)/n$ if $\sigma_u=\sigma_v$ and $(d^k-(d\lambda)^k)/n$ otherwise. The signal is the difference, $q(d\lambda)^k/n$; the noise is the fluctuation of a count with mean of order $d^k/n$, hence of standard deviation $\sqrt{d^k/n}$. The per-pair signal-to-noise ratio is
\begin{equation}
z_k=\frac{q(d\lambda)^k/n}{\sqrt{d^k/n}}=(d\lambda^2)^{k/2}\,\frac{q}{\sqrt n}.
\label{eq:zk}
\end{equation}
The factor $q/\sqrt n$ separates the two regimes. For bounded $q$ it is negligible: a single pair of vertices cannot be tested, and recovery must pool the information of whole neighborhoods, which by \eqref{eq:snr} is possible exactly when $d\lambda^2>1$. For $q\gg\sqrt n$ it is large, and pairs can be tested directly, provided that the walks are long enough to span a community and the test is still reliable at that length: $(d\lambda)^k\approx n/q$ gives $k^*=\ln(n/q)/\ln(d\lambda)$, and $z_{k^*}>1$ becomes, after taking logarithms with $\ln q=\chi\ln n$,
\begin{equation}
\chi\ln d+\ln\lambda>0\quad\Longleftrightarrow\quad d^\chi\lambda>1\quad\Longleftrightarrow\quad d\lambda^{1/\chi}>1 ,
\label{eq:chinnumber}
\end{equation}
the condition of \citet{chin2025} up to their logarithmic factor. Two efficient mechanisms thus coexist, collective (spectral, belief propagation; $d\lambda^2>1$) and pairwise (path counting; $d^\chi\lambda>1$), and the weaker condition governs. The resulting threshold for efficient recovery, established on the achievability side by \citet{chin2025} up to logarithmic factors and conjectured to be sharp for low-degree algorithms \citep{chin2025,carpentier2026}, is
\begin{equation}
\lambda>d^{-\max(1/2,\,\chi)},\qquad\chi=\frac{\ln q}{\ln n},
\label{eq:maxformula}
\end{equation}
the two conditions coinciding at $q=\sqrt n$. The number of communities enters only through the logarithmic ratio $\chi$. For every $\chi<1$ a gap remains between the information-theoretic condition $d\lambda>1$ and \eqref{eq:maxformula}, and it closes as $\chi\to1$.

\subsection{PageRank as a walk count with suboptimal weights}
\label{sec:pagerank}

Personalized PageRank from a vertex $u$ assigns to each vertex $v$ the score
\begin{equation*}
\pi_u(v)=(1-\alpha)\sum_{k\ge0}\alpha^k(P^k)_{uv},\qquad P=\Diag(\deg)^{-1}A,
\end{equation*}
a geometrically weighted sum over walk lengths of the probability that the random walk started at $u$ is at $v$ after $k$ steps. On a sparse graph with degrees concentrated near $d$, $(P^k)_{uv}\approx d^{-k}(A^k)_{uv}$, so $\pi_u$ is a walk count with weight $(\alpha/d)^k$ on length $k$. The matched filter for \eqref{eq:zk} weights each length by its signal $(d\lambda)^k$ divided by its noise variance $d^k$, that is by $\lambda^k$. The two coincide only if $\alpha=d\lambda$, which is impossible with $\alpha<1$ whenever $d\lambda>1$: PageRank discounts long walks too strongly in exactly the regime in which they carry the signal. This extends to the sparse, unseeded, many-community setting the result of \citet{kloumann2017}, who showed for the dense two-block model with a seed set that the optimal geometric discount is $(p_{\mathrm{in}}-p_{\mathrm{out}})/(p_{\mathrm{in}}+p_{\mathrm{out}})$, which is $\lambda$ for $q=2$. Three consequences follow and are confirmed in Section~\ref{sec:expmany}: ordinary PageRank, a single score per vertex, cannot separate pairs and tracks the degree; personalized PageRank with every vertex as its own seed is a valid pair test for many communities but is dominated by the $\lambda^k$-weighted path count, increasingly so as $d$ grows; and neither method enters the fixed-$q$ hard window, since after replacing degrees by their mean both are low-degree rules to which Theorem~\ref{thm:lowdeg} and Fact~\ref{fact:chin} apply.

\section{Local learning and the curse of dimensionality}
\label{sec:local}

The methods that fail in the hard window, nearest-neighbor voting, spectral embeddings followed by $k$-means, random-walk and Markov-chain methods, and message-passing neural networks, share one feature: each of them reads the graph through bounded neighborhoods. This section places that observation in the framework of learning theory. The starting point is a lower bound for learning labels that carry no structure at all, which fixes the baseline that any structured problem must beat.

\begin{proposition}[Lower bound for unstructured labels]
\label{prop:nfl}
Let the input space consist of $p^D$ points ($D$ features with $p$ values each), let each point carry a label drawn independently and uniformly from $K$ classes, and let a learner observe the labels of $n$ distinct points and then predict the label of a point drawn uniformly from the input space. Every learner has expected error at least
\begin{equation}
\Big(1-\frac1K\Big)\Big(1-\frac{n}{p^D}\Big).
\label{eq:unstructured}
\end{equation}
In particular, for $K=2$ an error of $\epsilon$ requires $n\ge(1-2\epsilon)\,2^D$ labeled points.
\end{proposition}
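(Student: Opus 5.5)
The plan is to condition on everything the learner sees, namely the $n$ observed points and their labels, together with the learner's internal randomness, and then split according to whether the test point was observed. Write $\mathcal X$ for the input space, so $|\mathcal X|=p^D$, and let $S\subset\mathcal X$ be the set of $n$ observed points. The learner may choose $S$ deterministically, randomly, or even adaptively. The test point $Z$ is uniform on $\mathcal X$ and independent of $S$ and of the labels. The event $Z\notin S$ therefore has probability exactly $1-n/p^D$ conditionally on $S$, whatever rule chose $S$.

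The key step is the following. On the event $Z\notin S$, the label $y_Z$ is independent of the labels $(y_s)_{s\in S}$, of $S$, of the learner's randomness and of the location of $Z$. This holds because the labels are i.i.d.\ uniform on the $K$ classes, and $Z$ together with the learner's prediction $\hat y(Z)$ is a function only of $S$, the observed labels, the auxiliary randomness and $Z$. One caveat is adaptive choice of $S$: even then $S$ depends only on labels at points of $S$, so the label at a fresh point stays uniform given everything observed. I would make this precise by revealing the labels sequentially and checking that the label at any point not yet queried is uniform and independent of the past. Conditionally on all of this information, the prediction is a fixed class and $y_Z$ is uniform. Hence $\Prob(\hat y(Z)\ne y_Z\mid Z\notin S,\ldots)=1-1/K$.

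The bound then follows by averaging. The error is at least $\Prob(Z\notin S)(1-1/K)=(1-1/K)(1-n/p^D)$, since the contribution of the event $Z\in S$ is nonnegative. For the particular case take $K=2$ and $p=2$, reading $2^D$ as the binary-feature instance. Requiring $\epsilon\ge\frac12(1-n/2^D)$ gives $n\ge(1-2\epsilon)2^D$.

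The main, and only, delicate point is the conditional independence of the fresh label from what the learner has seen when $S$ is chosen adaptively. The sequential-revelation argument settles it, and everything else is a one-line average.
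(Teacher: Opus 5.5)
Your proof is correct and follows the paper's argument: on the event that the test point was not observed (probability $1-n/p^D$), its label is uniform and independent of everything the learner has seen, so any prediction errs with probability $1-1/K$, and the event that it was observed contributes nonnegatively. Your sequential-revelation treatment of adaptively chosen points is extra care that the paper omits. For the final clause you need not restrict to $p=2$, since $n\ge(1-2\epsilon)p^D\ge(1-2\epsilon)2^D$ for every $p\ge2$, and the bound is vacuous when $\epsilon>1/2$.
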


\begin{proof}
With probability at least $1-n/p^D$ the test point is not among the $n$ observed points; its label is then independent of the observations and uniform on $K$ classes, so any prediction is wrong with probability $1-1/K$.
\end{proof}

Proposition~\ref{prop:nfl} is the curse of dimensionality in its sharpest form: each additional feature multiplies the number of labels required by $p$. It does not, however, locate the Kesten--Stigum threshold, for three reasons that are instructive. It is a statement about information, not computation, so it cannot separate what is possible from what is possible efficiently. It assumes the labels of distinct points to be unrelated, whereas in the block model the graph \emph{is} the relation between labels: every edge is a noisy statement that its endpoints agree. And it contains none of $a$, $b$, $q$ or $d$. Its role is to describe the baseline: a learner without usable structure guesses, and guessing is what the trivial risk $\vq$ measures in the pair loss.

What the block model adds is a specific structure, and \eqref{eq:snr} says precisely which learners can exploit it. Table~\ref{tab:local} lists the familiar methods, the quantity each one reads, and where it stops.

\begin{table}[t]
\centering
\caption{What local learning methods read in the graph, and why they stop at the Kesten--Stigum line.}
\label{tab:local}
\small
\begin{tabular}{@{}p{3.6cm}p{5.2cm}p{6.6cm}@{}}
\toprule
Method & Quantity read & Limitation\\
\midrule
Nearest-neighbor voting (label propagation) & majority of the labels in a depth-$k$ neighborhood & by \eqref{eq:snr} the signal of the vote decays geometrically with depth when $d\lambda^2<1$; revealed labels add a gain proportional to their fraction (Section~\ref{sec:expq2}; \citealp{zhang2014semi,kanade2016})\\
$k$-means on a spectral embedding & a few eigenvectors of a matrix built from $A$ & only as good as the embedding: adjacency eigenvectors fail even above the threshold in sparse graphs, Bethe-Hessian eigenvectors work down to it \citep{krzakala2013,saade2014}\\
Random walks, PageRank, Markov-chain embeddings & polynomials of $P=\Diag(\deg)^{-1}A$ & low-degree rules (Section~\ref{sec:gap}); PageRank in addition uses suboptimal weights (Section~\ref{sec:pagerank})\\
Gibbs sampling, simulated annealing & local moves on the posterior landscape & trapped in the uninformative basin below, and well above, the threshold (Section~\ref{sec:expq5}; \citealp{angelini2023})\\
Message-passing (graph) neural networks & $k$ rounds of neighborhood aggregation & learned aggregation cannot create information that the neighborhood does not contain; $k$ rounds are a degree-$k$ rule when the aggregation is polynomial\\
\bottomrule
\end{tabular}
\end{table}

The escape from the curse of dimensionality is more data per unknown. The block model offers the same escape: at fixed $\lambda$, increasing the degree $d$ raises $d\lambda^2$ past $1$, after which the Bethe-Hessian and belief-propagation methods succeed. The hard window is the regime in which $d$, $\lambda$ and $q$ are fixed by the data, the information is present in the global consistency of a labeling, and no bounded neighborhood contains it.

\section{Experiments}
\label{sec:experiments}

We test every quantitative statement of Sections~\ref{sec:threshold}--\ref{sec:many} on simulated networks. The algorithms are specified in Appendix~\ref{app:algorithms}. Unless stated otherwise, $n=50{,}000$, labels are i.i.d.\ uniform, one network is generated per parameter setting, and the overlap \eqref{eq:overlap} is computed exactly by solving the assignment problem over the $q!$ renamings. Away from the threshold the overlap of a method varies from one network to another by about $n^{-1/2}\approx0.005$ at $n=50{,}000$, so that differences of a few hundredths between methods are significant there. Near $d\lambda^2=1$ the variation between networks of the same size is larger, several hundredths (Table~\ref{tab:q2sizes}), and the departures from chance of that size that appear just below the threshold in Tables~\ref{tab:q2} and~\ref{tab:q5} are real features of networks of this size; the relevant question is whether they shrink as $n$ grows, and Section~\ref{sec:expq2} shows that they do.

\subsection{Two communities: the threshold at $d\lambda^2=1$}
\label{sec:expq2}

For $q=2$ and $d=3$ every method is at or within a few hundredths of chance ($0.50$) below $d\lambda^2=1$ and departs from chance above it (Table~\ref{tab:q2}; Figure~\ref{fig:scans}, left), with belief propagation from a random initialization performing as well as from the true labels, in agreement with \citet{mossel2015,mossel2018,massoulie2014}. At a single network size the transition is smoothed: the Bethe-Hessian overlap of $0.541$ at $d\lambda^2=0.9$ in Table~\ref{tab:q2} is well outside the $n^{-1/2}$ fluctuations. Table~\ref{tab:q2sizes} therefore follows the two best methods across three network sizes, with several networks per setting. Below the threshold the overlaps move toward $0.50$ as $n$ grows: at $d\lambda^2=0.9$ the Bethe-Hessian overlap falls from $0.533$ at $n=12{,}500$ to $0.522$ at $50{,}000$ and $0.514$ at $200{,}000$, and at $0.95$ from $0.567$ to $0.549$ and $0.519$, with belief propagation following the same course ($0.527$, $0.530$, $0.502$ and $0.543$, $0.545$, $0.511$). Above the threshold they do not: the overlaps at $1.1$ and $1.2$ are $0.61$--$0.62$ and $0.66$--$0.69$ at all three sizes. At $d\lambda^2=1$ itself, where the limiting overlap is $0.50$ but the approach to it is slowest, the three sizes are indistinguishable. The transition thus steepens with the size of the network and is located at $d\lambda^2=1$, as Theorem~\ref{thm:bp} and the results of \citet{mossel2015,mossel2018} require. The eigenvectors of the adjacency matrix remain at chance even at $d\lambda^2=2$, being localized on high-degree vertices \citep{krzakala2013}, while the Bethe-Hessian eigenvectors track belief propagation \citep{saade2014}. Label propagation with a fraction of revealed labels gains, below the threshold, only in proportion to that fraction (overlap $0.52$ with $1\%$ and $0.57$ with $5\%$ revealed at $d\lambda^2=0.5$), consistent with \citet{zhang2014semi} and \citet{kanade2016}.

\begin{table}[t]
\centering
\caption{Two communities, $d=3$, $n=50{,}000$: overlap of four methods (chance $0.50$).}
\label{tab:q2}
\small
\begin{tabular}{@{}ccccc@{}}
\toprule
$d\lambda^2$ & Bethe Hessian + $k$-means & adjacency + $k$-means & BP, random init. & BP, planted init.\\
\midrule
0.5 & 0.506 & 0.501 & 0.504 & 0.504\\
0.9 & 0.541 & 0.504 & 0.503 & 0.519\\
1.1 & 0.638 & 0.502 & 0.644 & 0.644\\
1.5 & 0.773 & 0.503 & 0.800 & 0.800\\
2.0 & 0.876 & 0.503 & 0.895 & 0.895\\
\bottomrule
\end{tabular}
\end{table}

\begin{table}[t]
\centering
\caption{Two communities, $d=3$: overlap of the Bethe-Hessian method and of belief propagation from a random initialization at three network sizes, averaged over $4$, $2$ and $2$ networks respectively (chance $0.50$). The range over networks of the same size shrinks with $n$: up to $0.09$ at $n=12{,}500$ (and $0.15$ for belief propagation at $1.1$, where one network had not left the uninformative fixed point after $1{,}500$ iterations), up to $0.08$ at $50{,}000$, and below $0.015$ at $200{,}000$ except at $0.95$ and $1.0$, where it reaches $0.07$.}
\label{tab:q2sizes}
\small
\begin{tabular}{@{}ccccccc@{}}
\toprule
& \multicolumn{3}{c}{Bethe Hessian + $k$-means} & \multicolumn{3}{c}{BP, random init.}\\
\cmidrule(lr){2-4}\cmidrule(lr){5-7}
$d\lambda^2$ & $n=12{,}500$ & $n=50{,}000$ & $n=200{,}000$ & $n=12{,}500$ & $n=50{,}000$ & $n=200{,}000$\\
\midrule
0.70 & 0.536 & 0.505 & 0.504 & 0.520 & 0.502 & 0.507\\
0.80 & 0.505 & 0.510 & 0.504 & 0.507 & 0.508 & 0.505\\
0.90 & 0.533 & 0.522 & 0.514 & 0.527 & 0.530 & 0.502\\
0.95 & 0.567 & 0.549 & 0.519 & 0.543 & 0.545 & 0.511\\
1.00 & 0.569 & 0.550 & 0.552 & 0.545 & 0.533 & 0.542\\
1.05 & 0.604 & 0.596 & 0.573 & 0.598 & 0.570 & 0.577\\
1.10 & 0.624 & 0.611 & 0.614 & 0.605 & 0.627 & 0.623\\
1.20 & 0.680 & 0.660 & 0.668 & 0.690 & 0.679 & 0.685\\
\bottomrule
\end{tabular}
\end{table}

\begin{figure}[t]
\centering
\includegraphics[width=\textwidth]{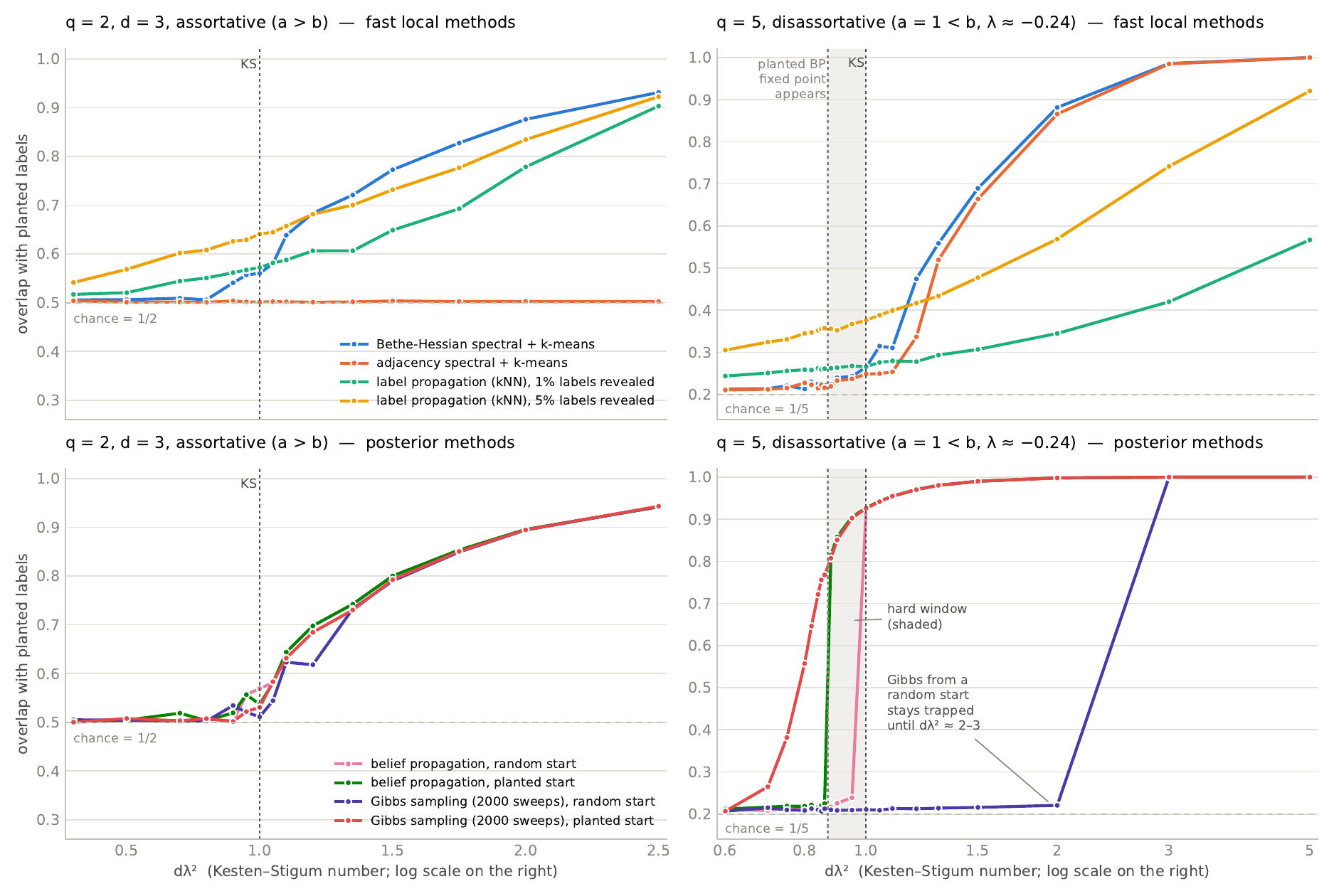}
\caption{Overlap against $d\lambda^2$ for $n=50{,}000$. Left: $q=2$, $d=3$, assortative. Right: $q=5$, disassortative ($a=1$; $b$ chosen to give the value of $d\lambda^2$ on the axis, so that $d$ grows from $11.5$ to $82$). Top: efficient local methods; bottom: belief propagation and Gibbs sampling from a random initialization and from the true labels. The shaded band is the window in which the true labeling is a locally stable fixed point of belief propagation while every method initialized without information remains at chance; it contains the information-theoretically hard window.}
\label{fig:scans}
\end{figure}

\subsection{Five communities: the hard window}
\label{sec:expq5}

For $q=5$ we use the disassortative family $a=1<b$, for which \citet{banks2016} prove detection below the threshold. Table~\ref{tab:q5} and Figure~\ref{fig:scans} (right) exhibit the two basins of Section~\ref{sec:bp}: at $d\lambda^2=0.88$ and $0.95$ the true labeling is a locally stable fixed point of belief propagation (planted initialization: overlap $0.82$--$0.91$), while belief propagation from a random initialization, the spectral method and Gibbs sampling from a random initialization all remain within a few hundredths of chance ($0.20$); belief propagation from a random initialization succeeds at $d\lambda^2=1$, where Theorem~\ref{thm:bp} predicts the loss of stability of the uninformative fixed point. Figure~\ref{fig:traj} shows the dynamics. At $d\lambda^2=0.6$ the Gibbs sampler initialized at the truth decays to chance within $10^3$ sweeps, so the truth is not even a metastable state. At $0.9$ the same sampler remains near overlap $0.78$ for $3\times10^4$ sweeps when initialized at the truth and at $0.21$ when initialized at random: the posterior has two basins on the same graph. At $1.5$ belief propagation recovers the communities in about $50$ iterations while the Gibbs sampler is still at chance after $3\times10^4$ sweeps, and it remains trapped up to $d\lambda^2\approx2$--$3$, in agreement with the analysis of simulated annealing in planted problems by \citet{angelini2023}. In the assortative families with $q=5$ and $q=10$ the window is too narrow to be resolved at this size, consistent with its being certified by \citet{banks2016} only for $q\ge11$ and by \citet{mossel2023} at large degree.

\begin{table}[t]
\centering
\caption{Five communities, disassortative ($a=1$), $n=50{,}000$: overlap (chance $0.20$). The Gibbs columns are after $2{,}000$ sweeps (one sweep updates every vertex once): at $0.80$ the planted initialization is still decaying toward chance, the truth not being a fixed point of belief propagation there, while at $0.88$ and $0.95$ the planted values are stable and the random initializations are trapped (Figure~\ref{fig:traj}).}
\label{tab:q5}
\footnotesize
\begin{tabular}{@{}cccccc@{}}
\toprule
$d\lambda^2$ & Bethe Hessian + $k$-means & BP, random & BP, planted & Gibbs, random & Gibbs, planted\\
\midrule
0.60 & 0.212 & 0.211 & 0.212 & 0.207 & 0.207\\
0.80 & 0.213 & 0.214 & 0.218 & 0.208 & 0.557\\
0.88 & 0.224 & 0.217 & 0.815 & 0.210 & 0.807\\
0.95 & 0.243 & 0.238 & 0.906 & 0.209 & 0.903\\
1.00 & 0.264 & 0.927 & 0.927 & 0.211 & 0.926\\
1.50 & 0.689 & 0.990 & 0.990 & 0.216 & 0.990\\
2.00 & 0.881 & 0.998 & 0.998 & 0.221 & 0.998\\
3.00 & 0.986 & 1.000 & 1.000 & 1.000 & 1.000\\
\bottomrule
\end{tabular}
\end{table}

\begin{figure}[t]
\centering
\includegraphics[width=\textwidth]{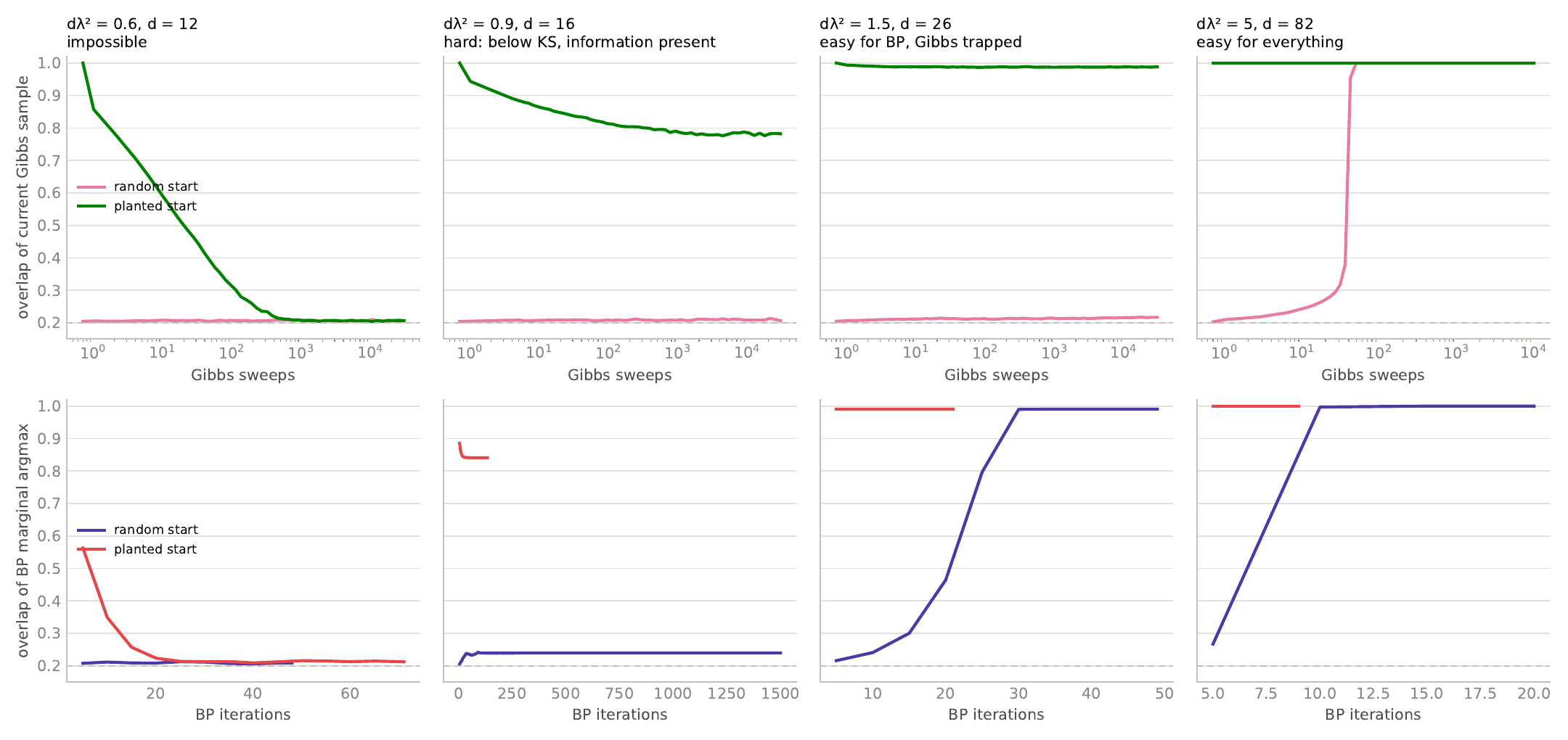}
\caption{Five communities, disassortative, $n=50{,}000$: overlap of the current Gibbs sample (top; sweeps on a logarithmic scale) and of the belief-propagation marginals (bottom) against time, from a random initialization and from the true labels, at four values of $d\lambda^2$.}
\label{fig:traj}
\end{figure}

\paragraph{Planted coloring.}
In the planted $5$-coloring model ($a=0$, $\lambda=-1/4$) the true labeling became a stable fixed point of belief propagation between $d\lambda^2=0.80$ and $0.81$, that is for an average degree between $12.80$ and $12.96$. The value computed by \citet{decelle2011} from the cavity method is $c_d\approx12.84$.

\paragraph{The running example.}
For $q=5$, $a=3$, $b=1$ ($d\lambda^2=0.114$) all methods ended at overlaps between $0.20$ and $0.22$, within $0.02$ of chance: spectral $0.205$, label propagation with $5\%$ of labels revealed $0.222$, belief propagation $0.204$, Gibbs sampling $0.204$, and Gibbs sampling initialized at the truth decayed to chance. The fraction of isolated vertices was $24.6\%$, against $e^{-1.4}=24.7\%$.

\subsection{Exact and quantitative checks of Sections~\ref{sec:decision} and~\ref{sec:bp}}
\label{sec:expexact}

\begin{figure}[t]
\centering
\includegraphics[width=\textwidth]{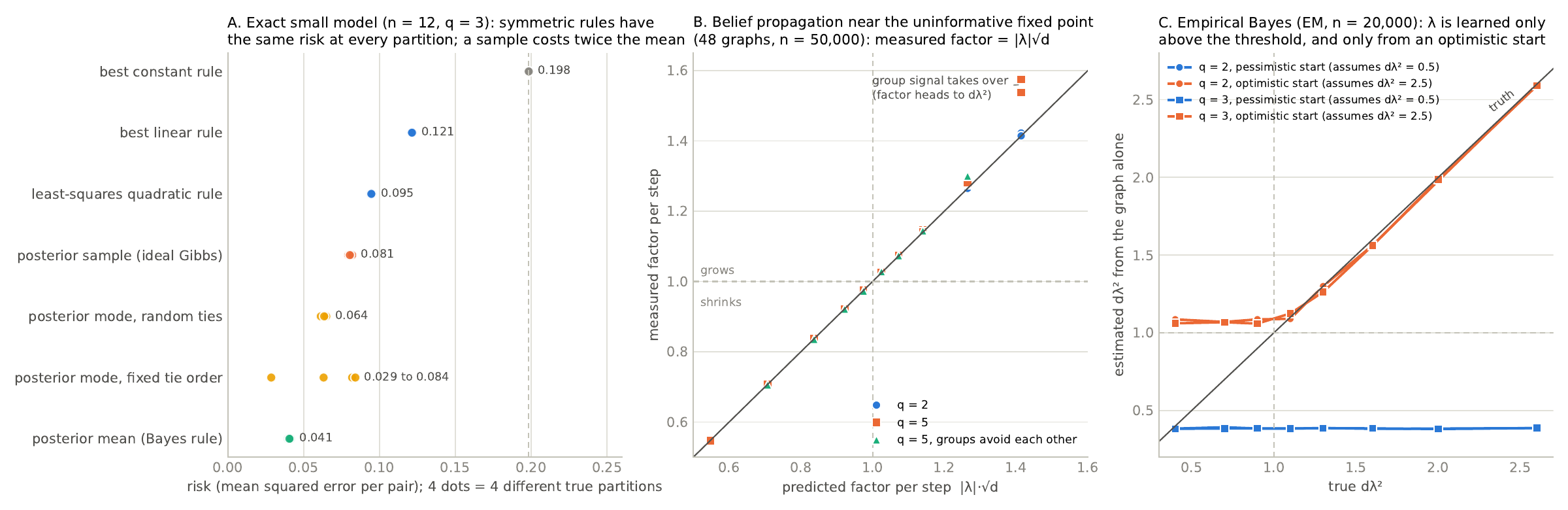}
\caption{A: risk of seven rules at four true partitions in the exact model ($n=12$, $q=3$, $d=4$, $\lambda=0.6$; $10^4$ graphs per partition). B: measured growth factor of a perturbation of belief propagation at the uninformative fixed point against the prediction $|\lambda|\sqrt d$ of Theorem~\ref{thm:bp}, on 48 graphs with $n=50{,}000$. C: $d\lambda^2$ estimated by EM from the graph alone against its true value, from a pessimistic and an optimistic initialization ($n=20{,}000$).}
\label{fig:checks}
\end{figure}

\paragraph{An exactly solvable model.}
For $n=12$ and $q=3$ the balanced orbit contains $5{,}775$ partitions, and the posterior mean, the posterior mode and posterior samples under $\pi_{\mathrm{bal}}$ are computed exactly; every rule is evaluated at four true partitions on $10^4$ graphs each (Figure~\ref{fig:checks}A, Table~\ref{tab:exact}). In this model the target is centered at its orbit mean, $x_{ij}=\ind{\sigma_i=\sigma_j}-c$ with $c=(n/q-1)/(n-1)=3/11$, so that the best constant rule is $t\equiv0$, with risk $c(1-c)=0.198$; all other rules considered contain a constant term, so their risks are unaffected by the centering. Every equivariant rule has the same risk at the four partitions up to sampling error (Lemma~\ref{lem:sym}), whereas the rule that always reports a fixed partition has risk $0$ there and $0.36$--$0.45$ elsewhere, $0.397$ on average; its random relabeling has risk $0.397$ at every partition (Theorem~\ref{thm:worst}(b)) and its symmetrization is the best constant rule, with risk $0.198$ (Lemma~\ref{lem:sym}). The ratio of the risk of a posterior sample to that of the posterior mean is $2.000$, $1.984$ and $1.973$ at $\lambda=0.3$, $0.6$ and $0.8$ (Lemma~\ref{lem:sample}). The risks of the nested classes are ordered as constant $\ge$ linear $\ge$ quadratic $\ge$ all rules. The maximum-likelihood partition (Proposition~\ref{prop:map}) differs from the true partition in $28.1\%$ of the graphs at $\lambda=0.6$, against $34.8\%$ for a posterior sample, but has pair risk $0.064$ against $0.041$ for the posterior mean, and at $\lambda=0.3$ its pair risk $0.295$ exceeds the risk $0.198$ of the best constant rule. Clipping the fitted polynomial rules (Lemma~\ref{lem:clip}) lowers their risk by up to $10\%$ (degree~$2$ at $\lambda=0.8$: $37\%$ of outputs outside the range; risk $0.0377\to0.0338$).

\begin{table}[t]
\centering
\caption{Exact model ($n=12$, $q=3$, $d=4$, $\lambda=0.6$): risk of seven rules at four true partitions.}
\label{tab:exact}
\small
\begin{tabular}{@{}p{5.6cm}p{3.2cm}p{6.6cm}@{}}
\toprule
Rule & Risk & Comment\\
\midrule
Best constant rule & 0.198 & $c(1-c)$\\
Best linear rule & 0.121 & $\cP_1$\\
Least-squares quadratic rule (ten features) & 0.094--0.095 & larger class, smaller risk\\
Posterior sample (ideal Gibbs) & 0.079--0.082 & $2\times$ Bayes risk (Lemma~\ref{lem:sample})\\
Posterior mode, random tie-breaking & 0.061--0.065 & equivariant, constant risk; Bayes for the $0$--$1$ loss only\\
Posterior mode, fixed tie-breaking & 0.029--0.084 & not equivariant: risk depends on the partition\\
Posterior mean & 0.040--0.041 & smallest average and worst-case risk; constant\\
\bottomrule
\end{tabular}
\end{table}

\paragraph{Growth factor of belief propagation.}
On 48 graphs with $n=50{,}000$ ($q\in\{2,5\}$, $d\in[3,28]$, both signs of $\lambda$, $d\lambda^2\in[0.3,2]$), belief propagation was initialized at $1/q$ plus noise of magnitude $10^{-8}$ and the norm of the perturbation was tracked for 16 iterations. On the 39 graphs with $d\lambda^2\le1.3$ the measured factor per iteration agrees with $|\lambda|\sqrt d$ to within $0.006$ (mean absolute deviation $0.002$), and on all 48 graphs it exceeds $1$ exactly when $d\lambda^2>1$ (Figure~\ref{fig:checks}B); for $d\lambda^2\in\{1.6,2\}$ the measured factor exceeds $|\lambda|\sqrt d$ by up to $0.16$, the community signal, which grows by the larger factor $d\lambda^2$, taking over within the 16 iterations. Corollary~\ref{cor:rate} is confirmed by the 13 random-initialization runs of Sections~\ref{sec:expq2}--\ref{sec:expq5} with $d\lambda^2\le0.88$ ($q=2$: $0.3$--$0.8$; $q=5$: $0.6$--$0.88$; initial noise $0.02$, tolerance $10^{-5}$, $\alpha=0.8$): the number of iterations ranges from 18 to 118 while the product $\text{iterations}\times\ln(1/f)$ stays between $6.0$ and $8.1$, against $\ln(0.02/10^{-5})=7.6$.

\paragraph{Empirical Bayes.}
On 16 graphs with $n=20{,}000$ ($q=2$, $d=3$ and $q=3$, $d=5$; true $d\lambda^2\in[0.4,2.6]$), $a$ and $b$ were hidden and estimated by \eqref{eq:em}--\eqref{eq:em2} from a pessimistic initialization (assumed $d\lambda^2=0.5$) and an optimistic one (assumed $2.5$), both with the degree misspecified by a factor $1.3$ (Figure~\ref{fig:checks}C). In all 16 pessimistic runs $\lambda$ changed by less than $10^{-5}$ and the degree estimate equaled the observed mean degree after one step (Proposition~\ref{prop:em}). From the optimistic initialization, $\lambda$ was recovered to within $0.009$ in all eight runs with true $d\lambda^2\ge1.3$, whereas for the six runs with true values $0.4$, $0.7$ and $0.9$ the estimate converged to $d\lambda^2\in[1.06,1.09]$ irrespective of the truth, and for the two runs at $1.1$ it converged to $1.09$ and $1.12$, indistinguishable from that limit: below the threshold $\lambda$ is not learned from either initialization. In the pessimistic runs with true $d\lambda^2\ge1.3$ the residual bias of the beliefs, although of magnitude $10^{-5}$, pointed toward the true communities: the label of largest belief agreed with the truth for $60$--$94\%$ of the vertices for $q=2$ (chance $50\%$) and for $50$--$73\%$ for $q=3$ (chance $33\%$), as expected from repeated application of the operator \eqref{eq:bplin}.

\subsection{Many communities and PageRank}
\label{sec:expmany}

We generated 341 networks from the balanced model with $n\in\{3\times10^4,10^5,3\times10^5\}$, $q=\operatorname{round}(n^\chi)$ for $\chi\in\{0.3,0.5,0.6,0.7,0.8,0.9\}$, $d\in\{4,9,16\}$ ($d\in\{4,16\}$ for $n=3\times10^5$) and 6--8 values of $\lambda$ per family. For 50 source vertices per network, every other vertex was scored by the weighted non-backtracking walk count $\sum_{k\le K}\lambda^kN_k(u,v)$, with $N_k$ computed by the recursion $N_{k+1}=AN_k-(\Diag(\deg)-I)N_{k-1}$ and $K$ between 7 and 13, and we recorded the fraction of true community members among the $n/q-1$ highest scores (the \emph{retrieval fraction}; chance is $(n/q-1)/(n-1)$). Figure~\ref{fig:many} summarizes the results.

\begin{figure}[t]
\centering
\includegraphics[width=\textwidth]{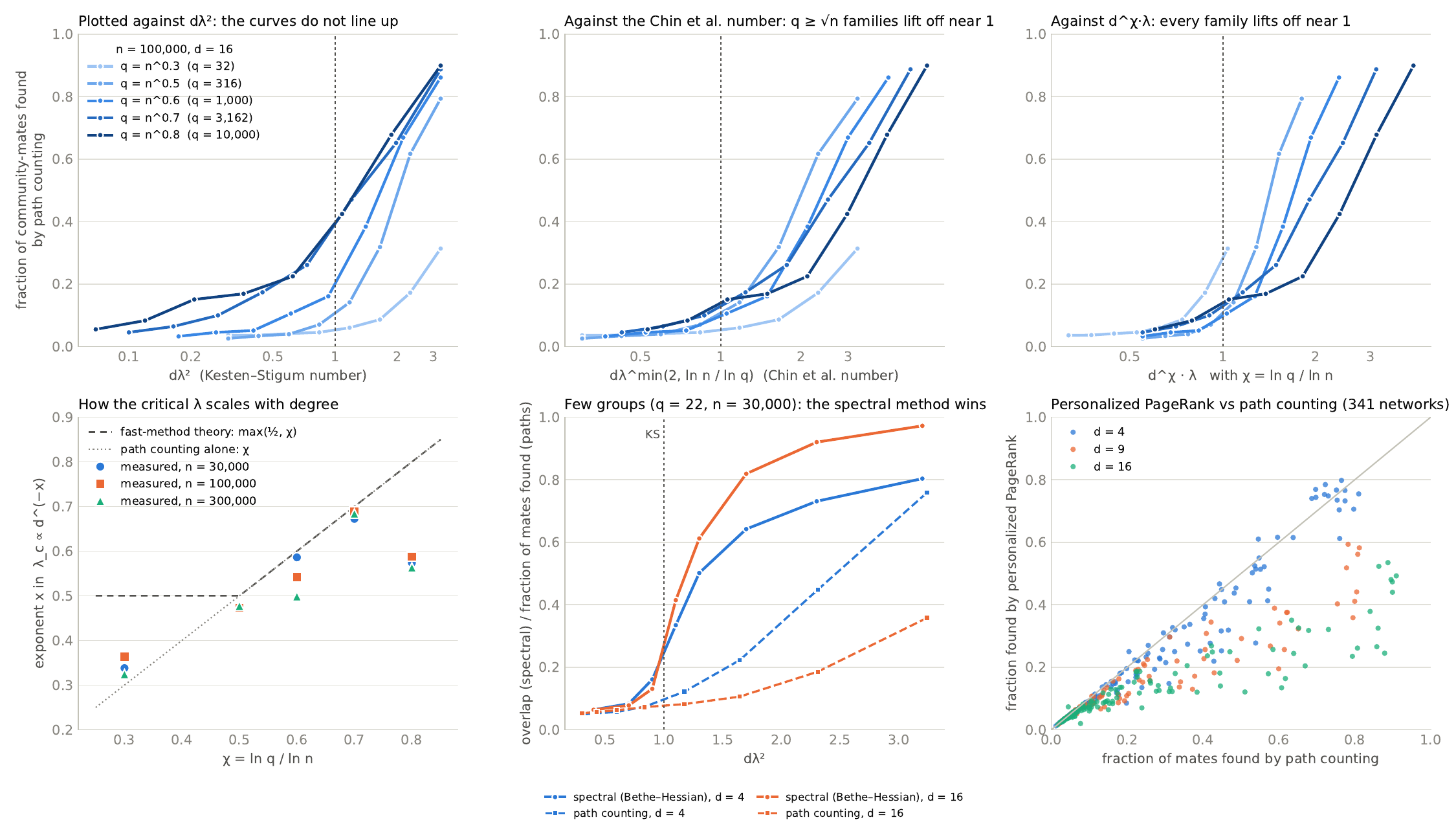}
\caption{Many communities, $q=n^\chi$. Top: retrieval fraction of $\lambda^k$-weighted path counting ($n=10^5$, $d=16$) against $d\lambda^2$ (the families do not align), against $d\lambda^{\min(2,1/\chi)}$ (the families with $q\ge\sqrt n$ depart from chance near $1$) and against $d^\chi\lambda$ (all families depart near $1$). Bottom left: fitted exponent $x$ in $\lambda_c\propto d^{-x}$ against $\chi$, with the predictions $x=\chi$ for path counting and $\max(1/2,\chi)$ for the best efficient method. Bottom middle: for $q=22$, $n=3\times10^4$, the spectral method departs from chance at $d\lambda^2=1$ while path counting requires $d\lambda^2\approx1.5$--$2.5$. Bottom right: personalized PageRank against path counting on all 341 networks.}
\label{fig:many}
\end{figure}

\begin{enumerate}[leftmargin=2em]
\item \emph{Pairwise recovery below the Kesten--Stigum line requires many communities.} At $n=3\times10^5$, $d=4$ and $d\lambda^2\in[0.74,0.78]$ the retrieval fraction was $0.035$ for $\chi=0.3$ ($44$ communities; chance $0.023$), $0.046$ for $\chi=0.5$ ($548$; chance $0.002$), $0.090$ for $\chi=0.6$, $0.128$ for $\chi=0.7$ and $0.226$ for $\chi=0.8$ ($24{,}082$ communities of $12$ vertices; chance $4\times10^{-5}$), against $0.023$, $0.005$, $0.011$, $0.035$ and $0.127$ for direct neighbors alone.
\item \emph{The critical $\lambda$ scales as \eqref{eq:chinnumber} predicts.} Defining $\lambda_c$ by a retrieval fraction of $20\%$ and fitting $\lambda_c\propto d^{-x}$ over the available degrees, the exponent $x$ was $0.32$--$0.36$ for $\chi=0.3$, $0.47$--$0.48$ for $0.5$, $0.50$--$0.59$ for $0.6$, $0.67$--$0.69$ for $0.7$ and $0.56$--$0.59$ for $0.8$ (ranges over $n$), against the prediction $x=\chi$ for path counting; the agreement holds up to $\chi=0.7$ and breaks down at $0.8$, where communities have about ten members and the asymptotic reasoning no longer applies ($\chi=0.9$, with communities of 3--4 vertices, is excluded from the fit for the same reason). For $\chi\le0.7$ the product $d^\chi\lambda_c$ lies between $0.88$ and $1.36$ across all sizes and degrees.
\item \emph{The number $d\lambda^{1/\chi}$ orders the outcomes best.} The rank correlation between the retrieval fraction and a candidate number over the 341 networks is $0.95$ for $d\lambda^{\min(2,1/\chi)}$, $0.93$ for $d^\chi\lambda$, $0.77$ for $d\lambda^2$ and $0.63$ for $d\lambda$.
\item \emph{For few communities the collective method dominates.} For $q=22$ and $n=3\times10^4$ the Bethe-Hessian method departs from chance at $d\lambda^2=1$ for $d=4$ and $d=16$ (overlap $0.13$--$0.16$ at $0.9$, $0.33$--$0.42$ at $1.1$), whereas path counting on the same graphs requires $d\lambda^2\approx1.5$--$2.5$.
\item \emph{PageRank.} Ordinary PageRank is correlated $0.99$ with the degree, and the probability that it ranks a random community member above a random non-member (the area under the ROC curve) is $0.50$, that of random guessing. Personalized PageRank with every vertex as its own seed (restart probabilities $0.5$ and $0.85$, scores normalized by degree) is a valid pair test but is dominated by $\lambda^k$-weighted path counting, increasingly so with $d$: on the networks where path counting retrieves $15$--$70\%$ of the members, the mean retrieval fraction was $0.32$ against $0.36$ at $d=4$, $0.21$ against $0.36$ at $d=9$ and $0.17$ against $0.35$ at $d=16$, as Section~\ref{sec:pagerank} predicts; for few communities below the threshold it remains at chance ($0.033$ against $0.023$ for $44$ communities at $n=3\times10^5$).
\end{enumerate}

\section{Discussion}
\label{sec:discussion}

The results of this paper describe the Kesten--Stigum threshold and the information--computation gap of the sparse block model in the language of classical statistics. Three descriptions emerge, and they agree. In decision-theoretic terms, the uniform prior on each community profile is least favorable for every symmetric class of rules; the posterior mean is the unique Bayes rule, is admissible, and is minimax on the balanced profile; and the Bayes risk of the low-degree class is $\vq(1-\Corr_D^2)$, so that the low-degree bound of \citet{chin2025} and the information-theoretic results of \citet{banks2016,abbe2018,mossel2023} combine into a worst-case theorem: in the hard window no low-degree rule has worst-case risk below the trivial risk, while an exponential-time rule beats it by the explicit margin $\varepsilon^4/q^2$. In information-theoretic terms, the threshold is the radius of convergence of the cycle-information series, the point at which the Fisher information carried by cycle counts, bounded uniformly in the cycle length below the threshold, begins to grow without bound as longer cycles are included, with explicit Cram\'er--Rao constants and an explicit cap on every cycle-count test. In algorithmic terms, it is the point at which the uninformative fixed point of belief propagation loses stability, with growth factor $|\lambda|\sqrt d$, and the EM algorithm, if started below it, confirms its own initialization. The many-community formula \eqref{eq:maxformula} and the analysis of PageRank follow from the same signal-to-noise computation, and the experiments confirm each prediction quantitatively, including the planted-coloring constant of \citet{decelle2011} and the scaling exponents of \citet{chin2025}.

The decision-theoretic framework also explains, in a precise sense, why the hardness of the window is a statement about classes of statistics. Theorem~\ref{thm:worst}(b) reduces the worst-case question to the Bayes risk for \emph{every} class of algorithms, polynomial-time algorithms included; what distinguishes the low-degree class and the cycle statistics is that both their Bayes risk and their Fisher information are $L^2$ projections and can be computed. Extending the restricted Cram\'er--Rao bound of Theorem~\ref{thm:rcr} to classes that capture more of polynomial-time computation, and determining whether the missing-information fraction $V$ of \eqref{eq:missing} is bounded away from one throughout the hard window (which would follow from a consistent estimator of $\lambda$ there), are natural next steps. Two further directions suggested by the experiments are the behavior of clipped low-degree rules, which lie outside $\cP_D$ and dominate the polynomial rules they are built from, and the residual alignment of belief propagation with the truth near its uninformative fixed point above the threshold, which is of order $10^{-5}$ in magnitude but correctly signed.

\bibliographystyle{plainnat}
\bibliography{refs}

@article{holland1983,
  author  = {Holland, Paul W. and Laskey, Kathryn Blackmond and Leinhardt, Samuel},
  title   = {Stochastic blockmodels: First steps},
  journal = {Social Networks},
  volume  = {5},
  number  = {2},
  pages   = {109--137},
  year    = {1983}
}

@article{kesten1966,
  author  = {Kesten, Harry and Stigum, Bernt P.},
  title   = {Additional limit theorems for indecomposable multidimensional {G}alton--{W}atson processes},
  journal = {Annals of Mathematical Statistics},
  volume  = {37},
  number  = {6},
  pages   = {1463--1481},
  year    = {1966}
}

@article{evans2000,
  author  = {Evans, William and Kenyon, Claire and Peres, Yuval and Schulman, Leonard J.},
  title   = {Broadcasting on trees and the {I}sing model},
  journal = {Annals of Applied Probability},
  volume  = {10},
  number  = {2},
  pages   = {410--433},
  year    = {2000}
}

@article{decelle2011,
  author  = {Decelle, Aurelien and Krzakala, Florent and Moore, Cristopher and Zdeborov{\'a}, Lenka},
  title   = {Asymptotic analysis of the stochastic block model for modular networks and its algorithmic applications},
  journal = {Physical Review E},
  volume  = {84},
  pages   = {066106},
  year    = {2011}
}

@article{mossel2015,
  author  = {Mossel, Elchanan and Neeman, Joe and Sly, Allan},
  title   = {Reconstruction and estimation in the planted partition model},
  journal = {Probability Theory and Related Fields},
  volume  = {162},
  pages   = {431--461},
  year    = {2015}
}

@article{mossel2018,
  author  = {Mossel, Elchanan and Neeman, Joe and Sly, Allan},
  title   = {A proof of the block model threshold conjecture},
  journal = {Combinatorica},
  volume  = {38},
  number  = {3},
  pages   = {665--708},
  year    = {2018}
}

@inproceedings{massoulie2014,
  author    = {Massouli{\'e}, Laurent},
  title     = {Community detection thresholds and the weak {R}amanujan property},
  booktitle = {Proceedings of the 46th Annual ACM Symposium on Theory of Computing (STOC)},
  pages     = {694--703},
  year      = {2014}
}

@article{bordenave2018,
  author  = {Bordenave, Charles and Lelarge, Marc and Massouli{\'e}, Laurent},
  title   = {Nonbacktracking spectrum of random graphs: Community detection and nonregular {R}amanujan graphs},
  journal = {Annals of Probability},
  volume  = {46},
  number  = {1},
  pages   = {1--71},
  year    = {2018}
}

@article{abbe2018,
  author  = {Abbe, Emmanuel and Sandon, Colin},
  title   = {Proof of the achievability conjectures for the general stochastic block model},
  journal = {Communications on Pure and Applied Mathematics},
  volume  = {71},
  number  = {7},
  pages   = {1334--1406},
  year    = {2018}
}

@inproceedings{banks2016,
  author    = {Banks, Jess and Moore, Cristopher and Neeman, Joe and Netrapalli, Praneeth},
  title     = {Information-theoretic thresholds for community detection in sparse networks},
  booktitle = {Proceedings of the 29th Conference on Learning Theory (COLT), PMLR 49},
  pages     = {383--416},
  year      = {2016}
}

@article{sly2011,
  author  = {Sly, Allan},
  title   = {Reconstruction for the {P}otts model},
  journal = {Annals of Probability},
  volume  = {39},
  number  = {4},
  pages   = {1365--1406},
  year    = {2011}
}

@inproceedings{mossel2023,
  author    = {Mossel, Elchanan and Sly, Allan and Sohn, Youngtak},
  title     = {Exact phase transitions for stochastic block models and reconstruction on trees},
  booktitle = {Proceedings of the 55th Annual ACM Symposium on Theory of Computing (STOC)},
  pages     = {96--102},
  year      = {2023}
}

@inproceedings{hopkins2017,
  author    = {Hopkins, Samuel B. and Steurer, David},
  title     = {Efficient {B}ayesian estimation from few samples: Community detection and related problems},
  booktitle = {58th Annual IEEE Symposium on Foundations of Computer Science (FOCS)},
  pages     = {379--390},
  year      = {2017}
}

@article{schramm2022,
  author  = {Schramm, Tselil and Wein, Alexander S.},
  title   = {Computational barriers to estimation from low-degree polynomials},
  journal = {Annals of Statistics},
  volume  = {50},
  number  = {3},
  pages   = {1833--1858},
  year    = {2022}
}

@misc{kunisky2019,
  author = {Kunisky, Dmitriy and Wein, Alexander S. and Bandeira, Afonso S.},
  title  = {Notes on computational hardness of hypothesis testing: Predictions using the low-degree likelihood ratio},
  year   = {2019},
  note   = {arXiv:1907.11636}
}

@inproceedings{chin2025,
  author    = {Chin, Byron and Mossel, Elchanan and Sohn, Youngtak and Wein, Alexander S.},
  title     = {Stochastic block models with many communities and the {K}esten--{S}tigum bound},
  booktitle = {Proceedings of the 38th Conference on Learning Theory (COLT), PMLR 291},
  year      = {2025},
  note      = {arXiv:2503.03047}
}

@inproceedings{carpentier2026,
  author    = {Carpentier, Alexandra and Giraud, Christophe and Verzelen, Nicolas},
  title     = {Phase transition for stochastic block model with more than $\sqrt{n}$ communities},
  booktitle = {Proceedings of the 39th Conference on Learning Theory (COLT), PMLR 336},
  year      = {2026},
  note      = {arXiv:2509.15822}
}

@misc{carpentier2026b,
  author = {Carpentier, Alexandra and Giraud, Christophe and Verzelen, Nicolas},
  title  = {Phase transition for stochastic block model with more than $\sqrt{n}$ communities ({II})},
  year   = {2025},
  note   = {arXiv:2511.21526}
}

@article{kloumann2017,
  author  = {Kloumann, Isabel M. and Ugander, Johan and Kleinberg, Jon},
  title   = {Block models and personalized {P}age{R}ank},
  journal = {Proceedings of the National Academy of Sciences},
  volume  = {114},
  number  = {1},
  pages   = {33--38},
  year    = {2017}
}

@article{kawamoto2018,
  author  = {Kawamoto, Tatsuro},
  title   = {Algorithmic detectability threshold of the stochastic block model},
  journal = {Physical Review E},
  volume  = {97},
  pages   = {032301},
  year    = {2018}
}

@article{angelini2023,
  author  = {Angelini, Maria Chiara and Ricci-Tersenghi, Federico},
  title   = {Limits and performances of algorithms based on simulated annealing in solving sparse hard inference problems},
  journal = {Physical Review X},
  volume  = {13},
  pages   = {021011},
  year    = {2023}
}

@inproceedings{saade2014,
  author    = {Saade, Alaa and Krzakala, Florent and Zdeborov{\'a}, Lenka},
  title     = {Spectral clustering of graphs with the {B}ethe {H}essian},
  booktitle = {Advances in Neural Information Processing Systems 27 (NeurIPS)},
  pages     = {406--414},
  year      = {2014}
}

@article{krzakala2013,
  author  = {Krzakala, Florent and Moore, Cristopher and Mossel, Elchanan and Neeman, Joe and Sly, Allan and Zdeborov{\'a}, Lenka and Zhang, Pan},
  title   = {Spectral redemption in clustering sparse networks},
  journal = {Proceedings of the National Academy of Sciences},
  volume  = {110},
  number  = {52},
  pages   = {20935--20940},
  year    = {2013}
}

@article{zhang2014semi,
  author  = {Zhang, Pan and Moore, Cristopher and Zdeborov{\'a}, Lenka},
  title   = {Phase transitions in semisupervised clustering of sparse networks},
  journal = {Physical Review E},
  volume  = {90},
  pages   = {052802},
  year    = {2014}
}

@article{zhang2016,
  author  = {Zhang, Anderson Y. and Zhou, Harrison H.},
  title   = {Minimax rates of community detection in stochastic block models},
  journal = {Annals of Statistics},
  volume  = {44},
  number  = {5},
  pages   = {2252--2280},
  year    = {2016}
}

@article{abbe2017,
  author  = {Abbe, Emmanuel},
  title   = {Community detection and stochastic block models: Recent developments},
  journal = {Journal of Machine Learning Research},
  volume  = {18},
  number  = {177},
  pages   = {1--86},
  year    = {2017}
}

@article{zdeborova2016,
  author  = {Zdeborov{\'a}, Lenka and Krzakala, Florent},
  title   = {Statistical physics of inference: Thresholds and algorithms},
  journal = {Advances in Physics},
  volume  = {65},
  number  = {5},
  pages   = {453--552},
  year    = {2016}
}

@book{lehmann1998,
  author    = {Lehmann, Erich L. and Casella, George},
  title     = {Theory of Point Estimation},
  edition   = {2nd},
  publisher = {Springer},
  address   = {New York},
  year      = {1998}
}

@article{karlin1958,
  author  = {Karlin, Samuel},
  title   = {Admissibility for estimation with quadratic loss},
  journal = {Annals of Mathematical Statistics},
  volume  = {29},
  number  = {2},
  pages   = {406--436},
  year    = {1958}
}

@article{lehmann1950,
  author  = {Lehmann, Erich L. and Scheff{\'e}, Henry},
  title   = {Completeness, similar regions, and unbiased estimation---{P}art {I}},
  journal = {Sankhy\={a}},
  volume  = {10},
  pages   = {305--340},
  year    = {1950}
}

@article{blyth1951,
  author  = {Blyth, Colin R.},
  title   = {On minimax statistical decision procedures and their admissibility},
  journal = {Annals of Mathematical Statistics},
  volume  = {22},
  number  = {1},
  pages   = {22--42},
  year    = {1951}
}

@article{kanade2016,
  author  = {Kanade, Varun and Mossel, Elchanan and Schramm, Tselil},
  title   = {Global and local information in clustering labeled block models},
  journal = {IEEE Transactions on Information Theory},
  volume  = {62},
  number  = {10},
  pages   = {5906--5917},
  year    = {2016}
}

@article{dempster1977,
  author  = {Dempster, Arthur P. and Laird, Nan M. and Rubin, Donald B.},
  title   = {Maximum likelihood from incomplete data via the {EM} algorithm},
  journal = {Journal of the Royal Statistical Society, Series B},
  volume  = {39},
  number  = {1},
  pages   = {1--38},
  year    = {1977}
}

@article{louis1982,
  author  = {Louis, Thomas A.},
  title   = {Finding the observed information matrix when using the {EM} algorithm},
  journal = {Journal of the Royal Statistical Society, Series B},
  volume  = {44},
  number  = {2},
  pages   = {226--233},
  year    = {1982}
}

@book{bollobas2001,
  author    = {Bollob{\'a}s, B{\'e}la},
  title     = {Random Graphs},
  edition   = {2nd},
  publisher = {Cambridge University Press},
  year      = {2001}
}

\appendix
\section{Notation}
\label{app:notation}

\begin{center}
\small
\begin{tabular}{@{}ll@{}}
\toprule
Symbol & Meaning\\
\midrule
$n$, $q$ & number of vertices; number of communities\\
$a/n$, $b/n$ & intra- and inter-community edge probabilities\\
$d=(a+(q-1)b)/q$ & average degree\\
$\lambda=(a-b)/(a+(q-1)b)$ & signal strength; second eigenvalue of the label chain $M$ \eqref{eq:M}\\
$r=d\lambda^2$ & Kesten--Stigum number\\
$c_\lambda=1-(q-1)\max(0,-\lambda)^3$ & sign constant of Section~\ref{sec:fisher} \eqref{eq:clambda}; equals $1$ for $\lambda>0$\\
$\chi=\ln q/\ln n$ & growth exponent for $q=n^\chi$\\
$\sigma\in\Lab$, $A$ & labeling; adjacency matrix\\
$m(\sigma)$, $\Theta_m$, $\Theta_{\mathrm{bal}}$ & community profile; its orbit; the balanced orbit\\
$\pi$, $\pi_m$ & i.i.d.\ uniform prior; uniform prior on $\Theta_m$\\
$x_{ij}=\ind{\sigma_i=\sigma_j}-1/q$ & pair target \eqref{eq:target}\\
$L$, $\risk$, $\brisk$ & pair loss \eqref{eq:loss}; risk; Bayes risk\\
$\vq=\frac1q(1-\frac1q)$, $\triv(\sigma)$ & trivial risk under $\pi$; trivial risk at $\sigma$\\
$\Lam$ & labelings with typical profile \eqref{eq:typical}\\
$\delta^*$, $\delta^*_{\mathrm{bal}}$ & posterior means under $\pi$ and $\pi_{\mathrm{bal}}$\\
$\cP_D$, $\Corr_D$ & low-degree rules of degree $D$; low-degree correlation \eqref{eq:corr}\\
$\psi(i\to j,t)$, $h_t$ & belief-propagation message and field \eqref{eq:bp}\\
$\pbar=a/(qd)$ & probability that an edge is intra-community\\
$N_{\mathrm{in}},N_{\mathrm{out}}$; $\Ein,\Eout,\nE$ & numbers of intra-/inter-community pairs; numbers of edges\\
$S$, $S_\lambda$, $S_d$ & score of the graph; complete-data scores for $\lambda$ and $d$\\
$I_c$, $I_{\mathrm{graph}}$, $I_\cT$, $I_{\mathrm{cyc},K}$ & Fisher information: complete data; graph; class $\cT$; cycles of length $\le K$\\
$V$ & missing-information fraction, $I_{\mathrm{graph}}=I_c(1-V)$\\
$\kappa=qd^2/(ab)$ & constant in $S=\kappa\sum_e(p_e-\pbar)$\\
$X_k$, $\mu_k$, $I_k$ & number of $k$-cycles; its mean \eqref{eq:muk}; its information \eqref{eq:Ik}\\
$T_k$ & UMVUE of $\lambda^k$ from $X_k$ \eqref{eq:Tk}\\
$\rho_K$, $\rho_\infty$, $\TV$ & Hellinger affinities; total variation distance\\
$s$, $\zeta$ & constants of Karlin's estimator $(T+\zeta s)/(1+s)$\\
$z_k$ & per-pair signal-to-noise ratio of walk counts \eqref{eq:zk}\\
\bottomrule
\end{tabular}
\end{center}

\section{Algorithms}
\label{app:algorithms}

\paragraph{Graph generation.}
Labels are i.i.d.\ uniform in Sections~\ref{sec:expq2}--\ref{sec:expq5} and in the belief-propagation and EM checks of Section~\ref{sec:expexact}; the exact model of Section~\ref{sec:expexact} and the networks of Section~\ref{sec:expmany} use the balanced model (community sizes differing by at most one). For each pair of communities the number of edges is binomial with the appropriate number of pairs and probability, and the edges are placed uniformly; multiple edges and loops are discarded.

\paragraph{Belief propagation.}
Equation \eqref{eq:bp} in logarithmic form, with $h_t$ recomputed at every iteration, damping $\alpha=0.8$ (new message $=\alpha\times$update $+(1-\alpha)\times$old; $\alpha=1$ in the growth-factor test), at most $1{,}500$ iterations in the scans and $400$ in the EM runs, and termination when the largest change of a message falls below $10^{-5}$ in the scans and $10^{-6}$ otherwise. Random initialization: $1/q$ plus Gaussian noise of standard deviation $0.02$, clipped and renormalized; planted initialization: $0.9$ on the true label and $0.1/q$ on the others. Output: the label of largest marginal at each vertex. The case $a=0$ is handled by $\log\max(\cdot,10^{-300})$.

\paragraph{Gibbs sampling.}
Random-scan Gibbs sampling of the posterior, which is proportional to $\prod_{\{i,j\}\in E}c_{\sigma_i\sigma_j}\prod_{\{i,j\}\notin E}(1-c_{\sigma_i\sigma_j}/n)$ with $E$ the edge set, in its Poisson form,
\begin{equation*}
\log\Prob(\sigma_i=t\mid\text{rest})=\ln(a/b)\cdot\#\{\text{neighbors of $i$ with label $t$}\}-(a-b)\,m_t/n+\text{const},
\end{equation*}
$m_t$ being the current size of community $t$. A sweep is $n$ single-vertex updates. Output: the label of largest time-averaged marginal over the second half of the run.

\paragraph{Spectral clustering.}
Bethe Hessian $H(\varrho)=(\varrho^2-1)I-\varrho A+\Diag(\deg)$ with $\varrho=\sgn(\lambda)\sqrt d$, the $q$ eigenvectors of smallest eigenvalue, and $k$-means with $10$ restarts. The adjacency variant uses the $q$ eigenvectors of largest (assortative) or smallest (disassortative) eigenvalue of $A$.

\paragraph{Label propagation.}
A fraction $f_{\mathrm{rev}}$ of the vertices have their labels revealed and clamped. The $n\times q$ matrix $F$ of soft labels is initialized at $1/q$ and updated $60$ times by $F\leftarrow\Diag(\deg)^{-1}AFW$, with $W_{st}=c_{st}/\sum_sc_{st}$ the conditional law of a neighbor's label given one's own (for $a<b$ a neighbor's label is evidence against one's own). Overlap is measured on the unrevealed vertices; vertices receiving no signal are labeled at random.

\paragraph{Path counting.}
Non-backtracking walk counts from a source $u$ by $N_1=Ae_u$, $N_2=AN_1-\Diag(\deg)e_u$, $N_{k+1}=AN_k-(\Diag(\deg)-I)N_{k-1}$, combined as $\sum_{k\le K}\lambda^kN_k$ with the true $\lambda$ and $K=\min(14,\max(4,\lceil\ln n/\ln d\rceil+3))$. The retrieval fraction is the proportion of true community members among the $n/q-1$ highest-scoring vertices, averaged over $50$ sources.

\paragraph{PageRank.}
Ordinary PageRank with restart probability $0.15$ and $60$ power iterations. Personalized PageRank from $u$: $\sum_{t\le25}(1-\alpha)\alpha^t(P^{\mathsf T})^te_u$ for $\alpha\in\{0.5,0.85\}$, divided by the degree of the scored vertex; the better $\alpha$ is reported.

\paragraph{EM.}
Equations \eqref{eq:em}--\eqref{eq:em2} with the two-point beliefs from the converged messages, for at most $40$ steps or until $\lambda$ changes by less than $2\times10^{-4}$.

\paragraph{Exact model.}
$n=12$, $q=3$: all $5{,}775$ balanced partitions are enumerated and the posterior mean, mode and samples are computed exactly for $10^4$ graphs per true partition; the target is centered at the orbit mean $c=(n/q-1)/(n-1)$. The linear rule uses the four degree-$1$ pair features invariant under renumbering of the other vertices ($1$, $A_{ij}$, the number of edges at $i$ or $j$, the number of edges at neither); the quadratic rule adds six degree-$2$ features. Coefficients are fitted by least squares on $6{,}000$ training graphs drawn from the prior and evaluated on independent graphs.

\paragraph{Computation.}
All runs use a single CPU core: about one hour for the 341 networks of Section~\ref{sec:expmany}, about $20$ minutes for each scan of Figure~\ref{fig:scans}, about $30$ minutes for the networks of Table~\ref{tab:q2sizes}, and $25$ minutes for the spectral runs with $q=22$; the longest runs are the $3\times10^4$-sweep Gibbs trajectories of Figure~\ref{fig:traj}.

\end{document}